\newcommand{\citep}{\cite}
\newcommand{\citet}{\cite}
\definecolor{earthyellow}{rgb}{0.88, 0.66, 0.37}
\definecolor{bostonuniversityred}{rgb}{0.8, 0.0, 0.0}
\definecolor{persianblue}{rgb}{0.11, 0.22, 0.73}
\def\probcol{\color{persianblue}}
\newcommand{\PLang}[2][]{%
\def\bannachempty{#1}%
\ifx\empty\bannachempty\text{p-}\else\text{p$_{#1}$-}\fi%
\textsc{#2}%
}
\newcommand{\Q}{\mathbb{Q}}
\newcommand{\R}{\mathbb{R}}
\newcommand{\IN}{\mathbb{N}}
\newcommand{\IR}{\mathbb{R}}
\newcommand{\aS}{\mathfrak{S}}
\newtheorem{fact}{Fact}
\newcommand{\DeclareMathActive}[2]{%
  \expandafter\edef\csname keep@#1@code\endcsname{\mathchar\the\mathcode`#1 }
  \begingroup\lccode`~=`#1\relax
  \lowercase{\endgroup\def~}{#2}%
  \AtBeginDocument{\mathcode`#1="8000 }%
}
\newcommand{\std}[1]{\csname keep@#1@code\endcsname}
\patchcmd{\newmcodes@}{\mathcode`\-\relax}{\std@minuscode\relax}{}{\ddt}
\let\IsInPP=\relax
\newcommand{\compactEquals}[1]{\let\IsInPP=1#1\let\IsInPP=\relax}
\newcommand{\PP}[1]{\mathbb{P}(\compactEquals{#1})}
\newcommand{\pp}[1]{{P}(\compactEquals{#1})}
\newsavebox{\neqbox}
\savebox{\neqbox}{$\neq$}
\newcommand{\existsR}{\ensuremath{\mathsf{\exists\IR}}}
\newcommand{\ETR}{\ensuremath{\textsc{ETR}}}
\newcommand{\succETR}{\ensuremath{\textup{succ}\text{-}\textsc{ETR}}}
\newcommand{\succETRpoly}{\ensuremath{\textup{succ}\text{-}\textsc{ETR}_{\poly}}}
\newcommand{\sumprodETRpoly}{\ensuremath{\Sigma\Pi\text{-}\textsc{ETR}}}
\newcommand{\prodETRpoly}{\ensuremath{\Pi\text{-}\textsc{ETR}}}
\newcommand{\sumETRpoly}{\ensuremath{\Sigma\text{-}\textsc{ETR}}}
\newcommand{\sumviETR}{\ensuremath{\Sigma_{\textit{vi}}\text{-}\textsc{ETR}}}
\newcommand{\QBF}{\ensuremath{\mathsf{QBF}}}
\newcommand{\succR}{\ensuremath{\mathsf{succ\text{-}\exists\IR}}}
\newcommand{\succRpoly}{\ensuremath{\mathsf{succ\text{-}\exists\IR}_{\poly}}}
\newcommand{\NP}{\ensuremath{\mathsf{NP}}}
\newcommand{\NEXP}{\ensuremath{\mathsf{NEXP}}}
\newcommand{\realNP}{\ensuremath{\mathsf{NP}_{\text{real}}}}
\newcommand{\realNEXP}{\ensuremath{\mathsf{NEXP}_{\text{real}}}}
\newcommand{\PSPACE}{\ensuremath{\mathsf{PSPACE}}}
\newcommand{\EXPSPACE}{\ensuremath{\mathsf{EXPSPACE}}}
\newcommand{\VP}{\ensuremath{\mathsf{VP}}}
\newcommand{\VNP}{\ensuremath{\mathsf{VNP}}}
\newcommand{\ccPP}{\ensuremath{\mathsf{PP}}}
\newcommand{\sharpP}{\#\mathsf{P}}
\newcommand{\per}{\ensuremath{\mathsf{per}}}
\newcommand{\sgn}{\operatorname{sgn}}
\newcommand{\Nset}{\mathbb{N}}
\newcommand{\lep}{\leq_{\textit{P}}}
\newcommand{\leqp}{\leq_{\textit{P}}}
\DeclareMathOperator{\poly}{poly}
\DeclareMathOperator{\polylog}{polylog}
\DeclareMathOperator{\NTimereal}{\mathsf{NTime}_{\text{real}}}
\newcommand{\maxvaluecount}{c}
\newcommand{\cL}{{\mathcal L}}
\newcommand{\cE}{{\mathcal E}}
\newcommand{\cT}{{\mathcal T}}
\newcommand{\fA}{{\mathfrak A}}
\newcommand{\fM}{{\mathfrak M}}
\newcommand{\be}{{\bf e}}
\newcommand{\bff}{{\bf f}}
\newcommand{\bp}{{\bf p}}
\newcommand{\bt}{{\bf t}}
\def\probname#1#2{^{\text{#1}}_{\text{#2}}}
\def\probsumname#1#2{^{\textit{#1}{\langle{\scriptscriptstyle\Sigma}\rangle}}_{\textit{#2}}}
\def\probnamesum#1#2{^{\textit{#1}{\langle{\scriptscriptstyle\Sigma}\rangle}}_{\textit{#2}}}
\def\Tpolysum{\cT\probnamesum{\textit{poly}}{}}
\def\Tpoly{\cT\probname{\textit{poly}}{}}
\def\Tlin{\cT\probname{\textit{lin}}{}}
\def\Tcomp{\cT\probname{\textit{base}}{}}
\def\Lstar{\cL^{*}}
\def\Lprobpolysum{\cL\probnamesum{poly}{}}
\def\Lab{\textit{Lab}}
\newcommand{\SAT}{\mbox{\sc Sat}}
\newcommand{\SATprobstar}{\mbox{\sc Sat}^{*}_{{prob}}}
\newcommand{\SATprobcomp}{\mbox{\sc Sat}^{\textit{base}}_{\textit{prob}}}
\newcommand{\SATproblin}{\mbox{\sc Sat}^{\textit{lin}}_{\textit{prob}}}
\newcommand{\SATprobpoly}{\mbox{\sc Sat}^{\textit{poly}}_{\textit{prob}}}
\newcommand{\SATprobpolysum}{\mbox{\sc Sat}\probsumname{poly}{\textit{prob}}}
\newcommand{\SATprobpolysumsm}{\mbox{\sc Sat}\probsumname{poly}{\textit{sm},\textit{prob}}}
\newcommand{\SUM}{\Sigma}
\newcommand{\LESOlabel}{\ensuremath{\mathsf{L}\text{-}\mathsf{ESO}}}
\newcommand{\LESO}[2][]{\LESOlabel\ensuremath{_{[#1]}(#2)}}
\newcommand{\ESOlabel}{\ensuremath{\mathsf{ESO}}}
\newcommand{\LESOforProof}{\LESO[-1,1]{+,\times,-,=,\leq,0,\sfrac{1}{8},1}}
\newcommand{\succETRcR}[2]{\ensuremath{\mathrm{succETR}^{#1,+,\times}_{[#2]}}}
\newcommand{\FOindependent}{\ensuremath{\mathsf{FO}(\indep\negthinspace_c)}}
\newcommand\myldots{\!\makebox[1em][c]{.\hfil.\hfil.}}  
\def\mmid{ | }
\newcommand*{\indep}{%
  \mathbin{%
    \mathpalette{\@indep}{}%
  }%
}
\newcommand*{\nindep}{%
  \mathbin{
    \mathpalette{\@indep}{/}%
  }%
}
\newcommand*{\@indep}[2]{%
  \sbox0{$#1\perp\m@th$}
  \sbox2{$#1=$}
  \sbox4{$#1\vcenter{}$}
  \rlap{\copy0}
  \dimen@=\dimexpr\ht2-\ht4-.2pt\relax
  \kern\dimen@
  \ifx\\#2\\%
  \else
    \hbox to \wd2{\hss$#1#2\m@th$\hss}%
    \kern-\wd2 %
  \fi
  \kern\dimen@
  \copy0 
}
\title{The Existential Theory of the Reals with Summation Operators}
\author{Markus Bl\"{a}ser}{Saarland University, Germany}{mblaeser@cs.uni-saarland.de}{
https://orcid.org/0000-0002-1750-9036}{}
\author{Julian D\"{o}rfler}{Saarland University, Germany}{jdoerfler@cs.uni-saarland.de}{https://orcid.org/0000-0002-0943-8282}{}
\author{Maciej Li\'{s}kiewicz}{University of Lübeck, Germany}{maciej.liskiewicz@uni-luebeck.de}{https://orcid.org/0000-0003-0059-5086}{}
\author{Benito van der Zander}{University of Lübeck, Germany}{b.vanderzander@uni-luebeck.de}{https://orcid.org/0000-0001-5957-4621}{Work supported by the Deutsche Forschungsgemeinschaft (DFG) grant 471183316 (ZA 1244/1-1).}
\authorrunning{M. Bl\"{a}ser, J. D\"{o}rfler, M.  Li\'{s}kiewicz, and B. van der Zander} 
\keywords{Existential theory of the real numbers, Computational complexity, Probabilistic logic, Models of computation, Existential second order logic
} 
\newenvironment{proofidea}{\noindent\emph{Proof idea: }}{\qed}
\begin{document}

\maketitle

\begin{abstract}
To characterize the computational complexity of satisfiability 
problems for probabilistic and causal reasoning within Pearl’s 
Causal Hierarchy, van der Zander, Bl\"aser, and Li{\'s}kiewicz 
[IJCAI 2023] introduce a new natural class, named $\succR$.
This class can be viewed as a succinct variant of the well-studied class
$\existsR$ based on the Existential Theory of the Reals ($\ETR$).
Analogously to $\existsR$, $\succR$ is an intermediate class 
between $\NEXP$ and $\EXPSPACE$, the exponential versions 
of $\NP$ and $\PSPACE$.

The main contributions of this work are threefold.
Firstly, we characterize the class $\succR$ in terms of nondeterministic 
real Random-Access Machines (RAMs) and develop structural complexity theoretic results 
for real RAMs, including translation and hierarchy theorems. Notably, we 
demonstrate the separation of $\existsR$ and $\succR$. Secondly, 
we examine the complexity of model checking and satisfiability of 
fragments of existential second-order logic and probabilistic independence 
logic. We show $\succR$-completeness of several of these problems, for which the 
best-known complexity lower and upper bounds were previously 
$\NEXP$-hardness and $\EXPSPACE$, respectively. Thirdly, while $\succR$ 
is characterized in terms of ordinary (non-succinct) $\ETR$ instances enriched 
by exponential sums and a mechanism to index exponentially many variables, 
in this paper, we prove that when only exponential sums are added, the 
corresponding class $\existsR^\Sigma$ is contained in $\PSPACE$. 
We conjecture that this inclusion is strict, as this class is equivalent 
to adding a $\VNP$-oracle to a polynomial time nondeterministic real RAM. 
Conversely, the addition of exponential products to $\ETR$, yields $\PSPACE$.
Furthermore, we study the satisfiability 
problem for probabilistic reasoning, with the additional requirement 
of a small model, and prove that this problem is complete for $\existsR^\Sigma$.
\end{abstract}

\newpage
\setcounter{page}{1}
       
\section{Introduction}\label{sec:intro}
The existential theory of the reals, $\ETR$, is the set of all true sentences of the form
\begin{equation} \label{eq:etr:1}
   \exists x_0 \dots \exists x_n \ \varphi(x_0,\dots,x_n),
\end{equation}
where $\varphi$ is a quantifier-free Boolean formula over the basis $\{\vee, \wedge, \neg\}$, variables $x_0, \ldots, x_n$,
and a signature consisting of the constants $0$ and $1$, the functional symbols
$+$ and $\cdot$, and the relational symbols $<$, $\le$, and $=$. 
The sentences are interpreted over the real numbers in the standard way.
The significance of this theory lies in its exceptional expressiveness, enabling 
the representation of numerous natural problems across 
computational geometry \cite{abrahamsen2018art,mcdiarmid2013integer,cardinal2015computational}, 
Machine Learning and Artificial Intelligence \cite{abrahamsen2021training,ibeling2022mosse,zander2023ijcai}, 
game theory \cite{bilo2017existential,garg2018r}, 
and various other domains. Consequently, a complexity class, $\exists \mathbb{R}$, 
has been introduced to capture the computational complexity associated with determining 
the truth within the existential theory of the reals. This class is formally defined as the closure 
of $\ETR$ under polynomial-time many-one reductions \citep{DBLP:journals/jsc/GrigorevVV88,existentialTheoryOfRealsCanny1988some,existentialTheoryOfRealsSchaefer2009complexity}.
For a comprehensive compendium on  $\exists \mathbb{R}$, see \cite{schaefer2024existential}.

Our study focuses on $\ETR$ which extends the syntax of formulas to allow the use of summation 
operators in addition to the functional symbols $+$ and $\cdot$. 
This research direction, initiated in \citep{zander2023ijcai}, was motivated by an attempt to accurately
characterize the computational complexity of satisfiability problems for probabilistic and causal reasoning 
across ``Pearl’s Causal Hierarchy''  (PCH) \cite{shpitser2008complete,Pearl2009,bareinboim2022pearl}.

In \citep{zander2023ijcai}, the authors introduce a new natural class, named $\succR$, which can
be viewed as a succinct variant of $\existsR$. Perhaps, one of the most notable complete problems for 
the new class is the problem, called  $\sumviETR$ ("{\it vi}" stands for variable indexing). 
It is defined as an extension of $\ETR$ by adding to the signature 
an additional summation operator\footnote{In \cite{zander2023ijcai}, the authors assume arbitrary integer lower and upper bound in $\sum_{x_j = a}^b$. 
It is easy to see that, w.l.o.g., one can restrict $a$ and $b$ to binary values.} 
$\sum_{x_j = 0}^1$ 
which can be used to \emph{index} the quantified variables $x_i$  used in Formula~\eqref{eq:etr:1}. 
To this end, the authors define
variables of the form $x_{\langle x_{j_1},\dots,x_{j_m}\rangle }$, which represent indexed variables with the index given by $x_{j_1},\dots,x_{j_m}$ interpreted as a number in binary. 
They can only be used when variables $x_{j_1},\dots,x_{j_m}$ occur in the scope of summation operators with range $\{0,1\}$.
E.g., $\exists x_0 \ldots \exists x_{2^N -1}
\sum_{e_1=0}^1 \ldots \sum_{e_N=0}^1 (x_{\langle e_1,\ldots, e_N\rangle})^2=1$
is a $\sumviETR$ sentence\footnote{We represent the instances in $\sumviETR$ 
omitting the (redundant) block of existential quantifiers, so the encoding of the example instance has length polynomial in $N$.} 
encoding a unit vector in $\R^{2^N}$.  Note that sentences of $\sumviETR$ allow 
the use of exponentially many variables.
Another example sentence is 
$\sum_{x_1=0}^1 \sum_{x_2=0}^1 (x_1+ x_2)  (x_1+(1-x_2)) (1-x_1) = 0$
that models the co-{\sc Sat} instance
 $(p\vee q)\wedge(p\vee \overline{q})\wedge\overline{p}$.
It shows that the summation operator can also be used in $\sumviETR$ formulas in a standard way.

Analogously to $\existsR$, $\succR$ is an intermediate class between the exponential versions of $\NP$ and $\PSPACE$: 
\begin{equation}\label{eq:knowns:inclusions}
    \NP \subseteq \existsR \subseteq \PSPACE \subseteq \NEXP \subseteq \succR \subseteq \EXPSPACE.
\end{equation} 
An interesting challenge, in view of the new class, is 
to determine whether it contains harder problems than $\NEXP$ 
and to examine the usefulness of $\succR$-completeness as a tool for 
understanding the apparent intractability of natural problems.
A step in these directions, that we take in this work, is to express $\succR$ 
in terms of machine models over the reals, which in the case of $\existsR$ 
yield an elegant and useful characterization
by $\realNP$  \cite{erickson2022smoothing}.

In our work, we study also the different restrictions on which the summation operators 
in $\ETR$ are allowed to be used and the computational complexity of deciding the resulting 
problems. In particular, we investigate $\existsR^\Sigma$ -- the class based on 
$\ETR$  enriched with standard summation operator, and $\succRpoly$ which is based on 
$\sumviETR$ with the restriction that only polynomially many variables can be used.

In this paper, we 
employ a family of satisfiability problems for probabilistic reasoning, 
which nicely demonstrates the expressiveness of the $\ETR$  variants under consideration 
and illustrates the natural necessity of introducing the summation operator.

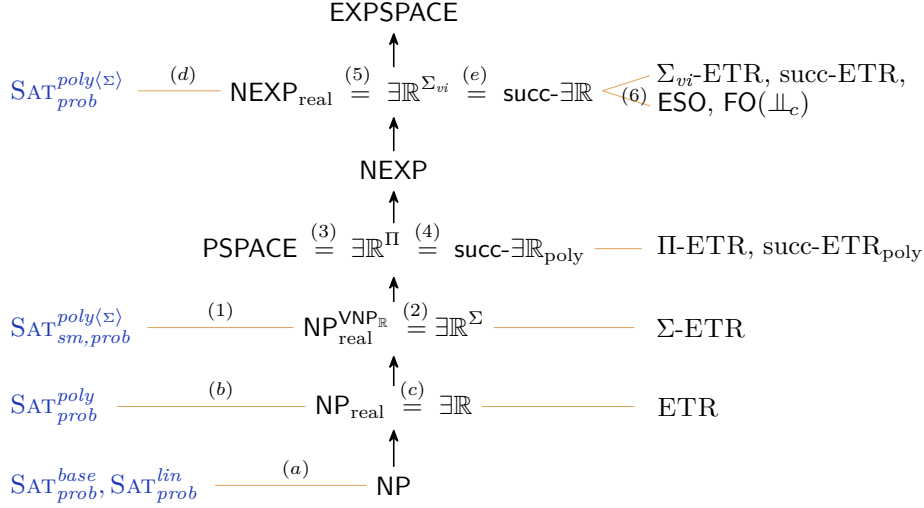
\begin{figure}

\centering

{\begin{tikzpicture}[yscale=0.8,
  inclusion/.style = {
    semithick,
    ->,
    > = {[round,sep]Stealth}
  },
  classbox/.style = {
    fill = blue!25,
    rounded corners,
  },
  contained/.style = {
    thin,
    color=earthyellow 
  }
  ]

  \def\leftX{2}
  \def\leftY{1}
  \def\leftSkip{2}
 
  \def\rightX{3}
  \def\rightY{1}
  \def\rightSkip{1.5}

  \node at (\rightX,\rightY)                           		(np) {$\NP$};
  \node at (\rightX,\rightY+1*\rightSkip)              	(existsR) {$ \realNP~=~\existsR$};
  \node at (\rightX,\rightY+2*\rightSkip)             	(existsRsigma) {$\realNP^{\VNP_\IR}~=\existsR^\Sigma$};
  \node at (\rightX,\rightY+3*\rightSkip)               (pspace) {$\PSPACE~=~\existsR^\Pi~=~\succR_{\rm poly}$};
  \node at (\rightX,\rightY+4*\rightSkip)               (nexp) {$\NEXP$};
  \node at (\rightX,\rightY+5*\rightSkip)               (succR) {$~~~~\realNEXP~=~\existsR^{\Sigma_{\textit{vi}}}~=~\succR$};
  \node at (\rightX,\rightY+6*\rightSkip)               (expspace) {$\EXPSPACE$};

  \graph[use existing nodes, edges = {inclusion}] {
    np -- existsR  -- existsRsigma -- pspace --  nexp -- succR -- expspace;  
  };

  	\def\resultleftX{0}
 	\def\resultrightX{6.5}
  	\def\resultStart{0.75}
  	\def\resultSkip{1.5}


  \node[text width=4.0cm, align=left] at (\rightX+5.5,\rightY+1*\rightSkip)         (ETR){$\ETR$};
  \node[text width=4.0cm, align=left] at (\rightX+5.5,\rightY+2*\rightSkip)         (sumETRpoly){$\sumETRpoly$};
  \node[text width=4.0cm, align=left] at (\rightX+5.5,\rightY+3*\rightSkip)         (succETRpoly){$\prodETRpoly$, $\succETRpoly$};               
  \node[text width=4.0cm, align=left] at (\rightX+5.5,\rightY+5.0*\rightSkip)         (succETR){$\sumviETR$, $\succETR$,\\[4mm]$\ESOlabel$, $\FOindependent$};
  
  \draw[contained, shorten <=2.3cm]    (ETR.center)+(0,0) --   (existsR);  
  \draw[contained, shorten <=2.3cm]    (sumETRpoly.center)+(0,0) --  (existsRsigma);
  \draw[contained, shorten <=2.2cm]    (succETRpoly.center)+(0,0)  --  (pspace);
  
  \draw[contained, shorten <=2.4cm]    (succETR.center)+(0,2.15)  -- (succR.east);
  \draw[contained, shorten <=2.35cm]    (succETR.center)+(0,-1.95)  -- (succR.east); 


   \node[text width=3.2cm, align=left] at (\rightX-3.5,\rightY+0*\rightSkip)        (SATcausallin){\probcol$\SATprobcomp, \SATproblin$};
   \node[text width=3.2cm, align=left] at (\rightX-3.5,\rightY+1*\rightSkip)        (SATcausalpoly){\probcol$\SATprobpoly$};
  \node[text width=3.2cm, align=left] at (\rightX-3.5,\rightY+2*\rightSkip)         (SATsmpolysum){\probcol$\SATprobpolysumsm$};             
  \node[text width=3.2cm, align=left] at (\rightX-3.5,\rightY+5*\rightSkip)         (SATprobpolysum){\probcol$\SATprobpolysum$};
  
   \draw[contained, shorten <=-0.6cm]      (SATcausallin)-- (np);
   \draw[contained, shorten <=-1.9cm]      (SATcausalpoly) -- (existsR);
   \draw[contained, shorten <=-1.5cm]      (SATsmpolysum) -- (existsRsigma);
   \draw[contained]     (\rightX-3.4,\rightY+5*\rightSkip) --  (\rightX-2.3,\rightY+5*\rightSkip) ;


\node at (\rightX-2.3,4.3)                         	(ref5) {\scriptsize$(1)$};
\node at (\rightX-0.48,8.8)                         	(ref4) {\scriptsize$(5)$};
\node at (\rightX+0.45,5.8)                         	(ref3) {\scriptsize$(4)$};
\node at (\rightX-0.93,5.8)                         	(ref2) {\scriptsize$(3)$};
\node at (\rightX+0.29,4.3)                         	(ref1) {\scriptsize$(2)$};
\node at (\rightX+3.3,8.45)                          	(ref6) {\scriptsize$(6)$};

\node at (\rightX-2.8,8.8)                         	(refe) {\scriptsize$(d)$};
\node at (\rightX-2.3,2.8)                         	(refd) {\scriptsize$(b)$};
\node at (\rightX-1.3,1.3)                         	(refc) {\scriptsize$(a)$};
\node at (\rightX+1.08,8.8)                         	(refb) {\scriptsize$(e)$};
\node at (\rightX+0.25,2.8)                    	(refa) {\scriptsize$(c)$};

\end{tikzpicture}}
\caption{The landscape of complexity classes of the existential theories of the reals 
 and the satisfiability problems for the probabilistic languages (to the left-hand side)
complete for the corresponding classes.
Arrows "$\rightarrow$" denote inclusions
$\subseteq$ and the earth-yellow labeled lines "${\color{earthyellow} -}$" connect  
complexity classes with complete problems for those classes. 
The completeness results $(a),(b),$ and $(d)$ were proven by  
Fagin et al.~\cite{fagin1990logic}, 
Moss{\'e} et al.~\cite{ibeling2022mosse}, resp.  
van der Zander  et al.~\cite{zander2023ijcai}.
The characterization $(c)$ is due to Erickson et al.~\cite{erickson2022smoothing}
and $(e)$ is proven in \cite{zander2023ijcai}.
The references to our results are as follows:
$(1)$ Theorem~\ref{thm:main:sec:Sigma:ETR}, 
$(2)$~Theorem~\ref{corollary:sumETRpoly:is:realNP:VNP:IR complete},
$(3)$ and $(4)$~Theorem~\ref{thm:main:pspace:succETRpoly},
$(5)$~Lemma~\ref{lem:realNEXPsuccR}, and
$(6)$~Theorem~\ref{thm:ESO:main}.
}
\label{figure:main:results:summary} 
\end{figure}

%
\paragraph*{Related Work to our Study}
In a pioneering paper in this field, Fagin, Halpern, and Megiddo~\cite{fagin1990logic}
consider the probabilistic language 
consisting of Boolean combinations 
of (in)equalities of \emph{basic} and \emph{linear} terms, like 
$\PP{(X = 0 \vee Y=1) \wedge (X = 0 \vee Y = 0)}\compactEquals{=1} \wedge(\PP{X = 0}\compactEquals{=0}\vee \PP{X = 0}\compactEquals{=1})  \wedge(\PP{Y = 0}\compactEquals{=0}\vee \PP{Y = 0}\compactEquals{=1})$, over binary variables $X,Y$
(which can be seen as a result of reduction from the 
satisfied Boolean formula $(\overline{a} \vee {b})\wedge(\overline{a} \vee \overline{b})$).
The authors provide a complete axiomatization for the used logic
and 
investigate the complexity of 
the probabilistic satisfiability problems $\SATprobcomp$ and $\SATproblin$,  which ask whether 
there is a joint probability distribution of $X,Y,\ldots$  that satisfies
a given Boolean combination of (in)equalities of basic, respectively linear,  terms
(for formal definitions, see Sec.~\ref{sec:preliminaries}).
They show that both satisfiability problems
are  $\NP$-complete (cf.~Fig.~\ref{figure:main:results:summary}). Thus, surprisingly, 
the complexity is no worse than that of propositional logic. Fagin et al.~extend
then the language to (in)equalities of \emph{polynomial} terms, with the goal of 
reasoning about conditional probabilities. They prove that there is a $\PSPACE$ algorithm, based 
on Canny’s decision procedure~\cite{existentialTheoryOfRealsCanny1988some}, for deciding 
if such a formula is satisfiable but left the exact complexity open.
Recently, Moss{\'e}, Ibeling, and Icard, \cite{ibeling2022mosse} have solved this problem, showing that 
deciding the satisfiability ($\SATprobpoly$)
is~$\existsR$-complete.
In \cite{ibeling2022mosse}, the authors also investigate the satisfiability problems for the higher,
more expressive PCH layers -- which are not the subject of our paper -- and prove an interesting result, that 
for (in)equalities of polynomial terms both at the interventional and the counterfactual 
layer the decision problems still remain $\existsR$-complete.

The languages used in~\cite{fagin1990logic,ibeling2022mosse} and also in other relevant works as, e.g.,  
\cite{nilsson1986probabilistic,georgakopoulos1988probabilistic,ibeling2020probabilistic},
can only represent \emph{marginalization} as an expanded sum since they lack a unary~summation~operator~$\Sigma$.
Thus, for instance, to express the marginal distribution of a 
random variable $Y$ over a subset of (binary) variables $\{Z_1,\ldots,Z_m\}$ as
$\sum_{z_1,\ldots,z_m} \PP{y,z_1,\ldots,z_m}$, an encoding without summation requires an extension 
$\PP{y,Z_1=0,\ldots,Z_m=0} + \ldots +\PP{y,Z_1=1,\ldots,Z_m=1}$ of exponential size. 
Thus to analyze the complexity aspects of the standard notation of probability theory, 
one requires an encoding that directly represents marginalization. 
In a recent paper \cite{zander2023ijcai}, the authors introduce the class $\succR$, 
and show that the satisfiability  ($\SATprobpolysum$) for the (in)equalities of \emph{polynomial} terms involving probabilities is  $\succR$-complete.

Thus, $\succR$-completeness seems to be a meaningful yardstick for measuring
computational complexity of decision problems. An interesting task would be to 
investigate problems involving the reals that have been shown to be in $\EXPSPACE$,
but not to be $\EXPSPACE$-complete, which are natural candidates for $\succR$-complete problems.

\paragraph*{Contributions and Structure of the Paper} 
Below we highlight our main  contributions, partially summarized also in Fig.~\ref{figure:main:results:summary}.
\begin{itemize}
\item
	We provide the characterization of $\succETR$ in terms of nondeterministic real RAMs 
	of exponential time respectively (Sec.~\ref{sec:nexp:over:the:reals}).
	Moreover, for the classes over the reals 
	in the sequence of inclusions~\eqref{eq:knowns:inclusions},
	an upward translation result applies, which implies, e.g.,
	 $\NEXP \subsetneq \succR$ unless $\NP = \existsR$ which is widely disbelieved
	 (Sec.~\ref{sec:relationships:bool:class:vs:ove:the:reals}).
\item     
	We strength slightly the completeness result (marked as $(d)$
	in Fig.~\ref{figure:main:results:summary}) of~\cite{zander2023ijcai} 
	and prove the problem $\SATprobpolysum$ remains $\succR$-complete even 
  	if we disallow the basic terms to contain conditional probabilities
	(Sec.~\ref{sec:succR:compl:satsumpoly}).
\item We show that existential second order logic of real numbers is complete for \succR (Sec.~\ref{sec:ESO}).
\item
	$\PSPACE$ has natural characterizations in terms of ETR; 
    It coincides
	both with $\existsR^\Pi$ -- the class based on 
	ETR enriched with standard product operator, and 
    	with $\succRpoly$, defined in terms of the succinct 
	variant of ETR with polynomially many variables (Sec.~\ref{sec:poly:many:variables}).
\item $\existsR^\Sigma$ -- defined similar to $\existsR^\Pi$, but with the addition of a unary summation operator instead -- is contained in $\PSPACE=\existsR^\Pi$. We conjecture that this inclusion is strict, as the class is equivalent 
to $\realNP^{\VNP_\IR}$, machine to be an $\realNP$ model with a $\VNP_\IR$ oracle,
where $\VNP_\IR$ denotes Valiant's $\NP$ over the reals (Sec.~\ref{sec:etr:sum:machine:char}).
\item
 	Unlike the languages devoid of the marginalization operator,
	the crucial small-model property 
	is no longer satisfied. This property says that any satisfiable formula has a model 
	of size bounded polynomially in the input length. Satisfiability with marginalization 
	and with the additional requirement that there is a small model 
	is complete for  $\existsR^{\Sigma}$ at the probabilistic layer (Sec.~\ref{sec:etr:sum:machine:sm}).
\end{itemize}

\section{Preliminaries}\label{sec:preliminaries}
\paragraph*{Complexity Classes Based on the ETR}
The problem $\succETR$ and the corresponding 
class $\succR$ are defined in \cite{zander2023ijcai} as follows.
$\succETR$ is the set of all Boolean circuits $C$ that encode 
a true sentence $\varphi$ as in Equation~$(\ref{eq:etr:1})$ as follows: 
Assume that 
$C$ computes a function $\{0,1\}^N \to \{0,1\}^M$. 
Then $\varphi$ is a tree consisting of $2^N$ nodes, each node being labeled with a symbol of $\{\vee, \wedge, \neg, +, \cdot, <, \le, =\}$, a constant $0$ or $1$, or a variable $x_0,\ldots x_{2^N-1}$.
For the node $i \in \{0,1\}^N$, the circuit computes an encoding $C(i)$ of the description of node $i$, consisting of the label of $i$, its parent, and its two children. 
The tree represents a true sentence, if the value at the root node would become true after applying the operator of each node to the value of its children, whereby the value of constants and variables is given in the obvious way.
As in the case of $\existsR$, to 
$\succR$ belong all languages 
which are polynomial time many-one reducible to $\succETR$.

Besides $\succETR$, \cite{zander2023ijcai} 
introduce more complete problems for $\succR$ as intermediate
problems in the hardness proof. Of particular importance is the problem
$\sumviETR$ that we already discussed in the Introduction. 
Formally, the problem is defined as an extension of $\ETR$ by adding to the signature 
an additional summation operator $\sum_{x_j = 0}^1$ with the following 
semantics\footnote{Recall, in \cite{zander2023ijcai}, the authors assume arbitrary integer lower and upper bound in $\sum_{x_j = a}^b$. 
But it is easy to see that, w.l.o.g., one can restrict $a$ and $b$ to binary values.}:
If an arithmetic term is given by a tree with 
the top gate $\sum_{x_j = 0}^1$ and $t(x_1,\dots,x_n)$ is the term computed 
at the child of the top gate, then the new term computes
$
   \sum_{e = 0}^1 t(x_1,\dots,x_{j-1},e,x_{j+1},\dots,x_n),
$
that is, we replace the variable $x_j$ by a summation variable~$e$, which then runs from $0$ to $1$.
By nesting the summation operator, we are able
to produce a sum with an exponential number of summands.
The main reason why the new summation variables are introduced 
is due to the fact they can be used to \emph{index} the quantified variables $x_i$  used in Formula~\eqref{eq:etr:1}.
Similarly as in $\succETR$,
sentences of $\sumviETR$ allow the use of exponentially many variables, 
however, the formulas are given directly and do not require  
any succinct encoding.

\paragraph*{Probabilistic Languages} 
We always consider discrete distributions 
in the probabilistic languages studied in this paper. We
represent the values  of the random variables as $\mathit{Val} = \{0,1,\myldots, \maxvaluecount - 1\}$ 
and denote by $X_1,X_2, \myldots,X_n$ the random variables used in the input formula.
We assume, w.l.o.g.,~that they all share the same domain  $\mathit{Val}$.
A value of $X_i$ is often denoted by $x_i$ or a natural number.
In this section, we describe syntax and semantics of the probabilistic languages.

By an \emph{atomic} event, we mean an event of the form $X=x$, where 
$X$ is a random variable and $x$ is a value in the domain of $X$. 
The language $\cE$ of propositional formulas over atomic 
 events is 
 the closure of such events under the Boolean operators $\wedge$ and $\neg$:
 $ \bp ::=  X = x  \mid \neg \bp \mid \bp \wedge \bp$.
The probability $\PP{\delta}$ for formulas $\delta \in \cE$ is called \emph{primitive} or \emph{basic term}, from which we build the probabilistic languages.
The expressive power and computational complexity 
of the languages depend 
on the operations applied to the primitives. 
\\
\noindent
\begin{minipage}[t]{0.47\textwidth}
 Allowing gradually more complex operators, we describe 
the languages which are the subject of our studies below. We start with the description of
the languages $\cT^*$ of terms, using the grammars
given to the right.\footnotemark
\end{minipage}\hspace*{2mm}
\begin{minipage}[t]{0.46\textwidth}
\vspace*{-6mm}
\[
\begin{array}{ll}
    \multirow{1.5}{*}{$\Tcomp$}  & \bt~::=\PP{\delta}    \\[1mm]
    \Tlin  & \bt~::=\PP{\delta}    \mid \bt +\bt     \\[1mm]
   \Tpoly & \bt~::=\PP{\delta} \mid \bt+\bt  \mid -\bt \mid \bt \cdot \bt \\[1mm] 
   \Tpolysum &\bt~::= \PP{\delta} \mid \bt + \bt  \mid -\bt \mid \bt \cdot \bt \mid 
 \mbox{$\sum_{x} \bt$} 
 
   \end{array}
\]
\end{minipage}\\[2mm]
\footnotetext{In the given grammars we omit the brackets for readability, but we assume that they can be used in a standard way.}
\indent
In the summation operator $\sum_{x}$, we have 
a dummy variable $x$ 
which ranges over all values $0,1,\myldots, \maxvaluecount - 1$.
The summation $\sum_{x} \bt$ is a purely syntactical 
concept which represents the sum 
$\bt[\sfrac{0}{ x}]  +\bt[\sfrac{1}{x}]+\myldots +\bt[\sfrac{\maxvaluecount - 1}{x}]$,
where by $\bt[\sfrac{v}{x}]$, we mean the expression in which all occurrences of $x$
are replaced with value $v$.
For example,  for  $\mathit{Val} = \{0,1\}$,
the expression 
 $\sum_{x} \PP{Y=1, X=x}$
 semantically represents $\PP{Y=1, X=0} + \PP{Y=1, X=1}$.
%
We note that the dummy variable $x$ is not a (random) variable in the usual sense
and that its scope is defined in the standard way.


The polynomial calculus $\Tpoly$ was originally introduced by Fagin, Halpern, and Megiddo
\citep{fagin1990logic}
to be able to express conditional probabilities by clearing denominators. While this works 
for $\Tpoly$, this does not work in the case of $\Tpolysum$,
since clearing denominators with exponential sums creates expressions that
are too large. But we could introduce basic terms 
of the form $\PP{\delta'\mmid \delta}$ with $\delta,\delta' \in \cE$
explicitly. 
All our hardness proofs work without conditional probabilities
but all our matching upper bounds are still true with explicit 
conditional probabilities.
For example,
expression as
$\PP{X=1} + \PP{Y=2} \cdot \PP{Y=3}$
is  a valid term in $\Tpoly$.

Now, let 
$
\Lab= \{
\text{base}, 
\text{lin}, 
\text{poly}, \text{poly}\langle{\Sigma}\rangle
\}
$
denote the labels of all variants of languages. Then for each   $*\in\Lab$  we define
the languages $\Lstar$ of Boolean combinations of inequalities in a standard way: 
$ \bff ::= \bt \le \bt' \mid \neg \bff \mid \bff \wedge \bff$,
where $ \bt,\bt'$ are terms in  $ {\cal T}^{*} $.

Although the language and its operations may appear rather restricted, all the usual elements of probabilistic  formulas can be encoded. Namely, equality is encoded as greater-or-equal in both directions, e.g. 
$\PP{x} = \PP{y}$ means $\PP{x} \geq \PP{y} \wedge \PP{y} \geq \PP{x}$.
The number~$0$ can be encoded as an inconsistent probability, 
i.e., $\PP{X=1 \wedge X=2}$. 
In a language allowing addition and multiplication, any positive integer can be easily encoded
from the fact $\PP{\top} \equiv 1$, e.g. $4 \equiv (1 + 1) (1 + 1) \equiv (\PP{\top} + \PP{\top}) (\PP{\top} + \PP{\top})$.
If a language does not allow multiplication, one can show that the encoding is still possible.
Note that these encodings barely change the size of the expressions, so allowing or disallowing these additional operators does not affect any complexity results involving these expressions.


We define the semantics of the languages as follows.
Let $\fM=(\{X_1,\myldots, X_n\}, P)$ be a tuple, where 
$P$ is the joint probability distribution of variables $X_1,\myldots, X_n$.
For values $x_1,\myldots,x_n\in \mathit{Val}$ and $\delta\in \cE$,  
we write $x_1,\myldots,x_n \models \delta$
if $\delta$ is satisfied by the assignment $\compactEquals{X_1=x_1,\myldots, X_n=x_n}$.
Denote by $S_{\delta}=\{x_1,\myldots,x_n \mid  x_1,\myldots,x_n \models \delta\}$.
We define $\llbracket \be \rrbracket_{\fM}$, for some expression $\be$,
recursively in a natural way,  starting with basic terms as follows 
$
\mbox{$\llbracket \PP{\delta} \rrbracket_{\fM} = \sum_{x_1,\;\myldots,x_n\in S_{\delta}}\pp{X_1=x_1,\myldots,X_n=x_n}$}
$
and $\llbracket \PP{\delta\mmid \delta'} \rrbracket_{\fM} = \llbracket\PP{\delta \wedge \delta'} \rrbracket_{\fM}/ \llbracket\PP{\delta'} \rrbracket_{\fM}$, assuming that the expression is undefined if  $\llbracket\PP{\delta'} \rrbracket_{\fM}=0$.
For two expressions $\be_1$ and $\be_2$, we define 
$ \fM \models  \be_1 \le  \be_2$,  if and only if, 
$\llbracket \be_1 \rrbracket_{\fM}\le \llbracket \be_2 \rrbracket_{\fM}.$
The semantics for negation and conjunction are defined in the usual way,
giving the semantics for $\fM \models \varphi$ for any 
$\varphi\in \Lstar$.
\paragraph*{Existential Second Order Logic of Real Numbers}
We follow the definitions of \citep{hannula2020logicsLESOandSNP}.
Let $A$ be a non-empty finite set and $\fA=(A,\R,f_1^{\fA}, \myldots, f_r^{\fA},$ $g_1^{\fA}, \myldots, g_t^{\fA})$, 
with $f_i^{\fA}: A^{ar(f_i)}\to \R$ and $g_i^{\fA}\subseteq A^{ar(g_i)}$, be a structure.
Each $g_i^{\fA}$ is an $ar(g_i)$-ary relation on $A$ and each $f_i^{\fA}$ is a weighted real function on $A^{ar(f_i)}$. 
The term $\bt$ is generated by the following grammar:
$
\bt::=  c \mid f(\vec{x}) \mid \bt + \bt \mid \bt - \bt  \mid \bt \times \bt \mid \sum_x \bt ,
$
where $c\in \R$ is a constant (denoting itself),  $f$ is a function symbol, 
and $\vec{x}$ is a tuple of first-order variables.
An assignment $s$ is a total function that assigns a value in $A$ for each first-order variable. 
The numerical value of $\bt$ in a structure $\fA$ under an assignment $s$, 
denoted by $\llbracket \bt \rrbracket_{\fA}^s$, is defined 
recursively in a natural way, starting with $\llbracket  f_i(\vec{x}) \rrbracket_{\fA}^s = f_i^{\fA}(s(\vec{x}))$ and applying the standard rules of real arithmetic.

For operators $O \subseteq \{+,\times,\SUM,-\}$, (in-)equality operators $E \subseteq \{\leq,<,=\}$, 
and constants $C\subseteq \R$, 
the grammar of  $\ESOlabel_{\R}(O,E,C)$ sentences is given by 
$\phi ::= x=y\mid \neg (x=y) \mid i\ e\ j \mid \neg (i\ e\ j) 
\mid R(\vec{x}) \mid \neg R(\vec{x}) 
\mid \phi \wedge \phi \mid \phi \vee \phi \mid \exists x\;\phi \mid \forall x\;\phi \mid \exists f\;\phi$,
where $x,y\in A$ are first order variables, 
$i, j$ are real terms constructed using operations from $O$
and constants from $C$,  $e \in E$, and $R$ denotes a relation symbol of 
a finite relational vocabulary\footnote{The grammar of  \citep{hannula2020logicsLESOandSNP} does not allow quantification over relations, e.g. $\exists R$, as these relations can be replaced by functions, e.g. chosen by $\exists f$.} $g_1,\ldots,g_t$. 

The semantics of $\ESOlabel_{\R}(O,E,C)$ is defined via $\R$-structures
and assignments analogous to first-order logic with additional 
semantics for second order existential quantifier $\exists f$.
%
That is, a structure $\fA$ satisfies a sentence $\phi$ under an assignment $s$, i.e., $\fA \models_s \phi$, according to the following cases of the grammar:
$\fA \models_s x = y$, iff $s(x)$ equals $s(y)$; 
$\fA \models_s \neg (\phi)$ iff $\fA \not\models_s \phi$; 
$\fA \models_s i\ e\ j$ iff $\llbracket i \rrbracket_{\fA}^s\ e\ \llbracket j \rrbracket_{\fA}^s$ where $\llbracket i \rrbracket_{\fA}^s$ is the numerical value of $i$ as defined above; 
$\fA \models_s R(\vec{x})$ iff $g_i^{\fA}(s(\vec{x}))$ is true for the $g^\fA_i$ corresponding to $R$ in the model $\fA$;
$\fA \models_s \phi \wedge \phi'$ iff $\fA \models_s \phi$ and $\fA \models_s \phi'$;
$\fA \models_s \phi \vee \phi'$ iff $\fA \models_s \phi$ or $\fA \models_s \phi'$;
$\fA \models_s \exists x \phi$ iff $\fA \models_{s[a/x]} \phi$ for some $a \in A$ where $s[a/x]$ means the assignment $s$ modified to assign $a$ to $x$;
$\fA \models_s \forall x \phi$ iff $\fA \models_{s[a/x]} \phi$ for all $a \in A$; and 
$\fA \models_s \exists f \phi$ iff $\fA[h/f] \models_s \phi$ for some\footnote{Note that $h$ might be an arbitrary function and is not restricted to the functions $f^\fA_i$ of the model.} function $h: A^{ar(f)} \to \R$ where $\fA[h/f]$ is the expansion of $\fA$ that interprets $f$ as $h$.

For a set $S\subseteq \R$, 
we consider the restricted logic $\ESOlabel_S(O,E,C)$ and $\LESOlabel_S(O,E,C)$.
There only the operators and constants of $O\cup E\cup C$ are allowed and 
all functions $f$ are maps into $S$, i.e. $f: A^{ar(f)} \to S$.
In the loose fragment $\LESOlabel_S(O,E,C)$, negations $\neg (i\ e\ j)$ on real terms are also disallowed.

Probabilistic independence logic $\FOindependent$ is defined as the
extension of first-order logic with probabilistic independence
atoms $\vec{x}\indep_{\vec{z}}\vec{y}$ whose semantics is the standard semantics of
conditional independence in probability distributions \citep{durand2018probabilistic,hannula2020logicsLESOandSNP}.

\paragraph*{Known Completeness and Complexity Results}
\label{sec:sat:problems}
The decision problems $\SATprobstar$, with $*\in \Lab$,
take as input a formula $\varphi$ 
in  the 
languages $\Lstar$
and  ask whether there exists a model 
$\fM$ such that $\fM \models\varphi$.
%
%
The computational complexity of probabilistic  satisfiability problems has 
been a subject of intensive studies for languages which  do not allow explicitly marginalization 
via summation operator $\Sigma$. Very recently \cite{zander2023ijcai}
addressed the problem for polynomial languages. 

Below, we summarize these results\footnote{In 
the papers \citep{ibeling2022mosse} and \citep{zander2023ijcai} the authors show even stronger results,
namely that the completeness results also hold for causal satisfiability problems.}, 
informally presented in the Introduction:
\begin{itemize} 
\item $\SATprobcomp$   and $\SATproblin$ are $\NP$-complete,
   \citep{fagin1990logic},
\item  $\SATprobpoly$ is  $\existsR$-complete  \citep{ibeling2022mosse}, and
\item $\SATprobpolysum$ is  $\succR$-complete \citep{zander2023ijcai}.
\end{itemize}

For a logic $L$, the satisfiability problem $\SAT(L)$ is defined as follows: 
given a formula $\varphi \in L$, decide whether $\varphi$ is satisfiable. 
For the model checking problem of a logic $L$, we consider the following variant: 
given a sentence $\varphi \in L$ and a structure $\fA$, decide whether 
$\fA \models \varphi$.
For model checking of $\FOindependent$, the best-known complexity lower and upper bounds 
are $\NEXP$-hardness and $\EXPSPACE$, respectively \cite{hannula2023logics}.

\section{\texorpdfstring{$\NEXP$}{NEXP} over the Reals}\label{sec:nexp:over:the:reals}

In \cite{erickson2022smoothing}, Erickson, van Der~Hoog, and  Miltzow
extend the definition of word RAMs to real computations.
In contrast to the so-called BSS model of real computation \citep{blum1989theory}, the real RAMs
of Erickson et al.\ provide integer and real computations at the same time,
allowing for instance indirect memory access to the real registers and other features
that are important to implement algorithms over the reals.
The input to a real RAM is a pair of vectors, the first one is a vector
of real numbers, the second is a vector of integers. Real RAMs have two types
of registers, word registers and real registers. The word registers can store
integers with $w$ bits, where $w$ is the word size. The total number of 
registers is $2^w$ for each of the two types. Real RAMs perform arithmetic operations
on the word registers, 
where words are interpreted as integers between $0$ and $2^w - 1$,
and bitwise Boolean operations. On the real registers, only arithmetic
operations are allowed. Word registers can be used for indirect addressing on both types of registers and the control flow is established by conditional jumps
that depend on the result of a comparison of two word registers or of a real register
with the constant $0$. For further details we refer to the original paper 
\cite{erickson2022smoothing}.

The real RAMs of \cite{erickson2022smoothing} characterize the existential theory of the reals.
The authors prove that a problem is in $\existsR$ iff there is a polynomial
time real verification algorithm for it. 
In this way, real RAMs are an ``easy to program'' mechanism to prove that
a problem is contained in $\existsR$. Beside the input $I$, which is a sequence of words,
the real verification algorithm also gets a certificate consisting of a sequence of real numbers $x$ and a further sequence of words $z$. $I$ is in the language if there is a 
pair $(x,z)$ that makes the real verification algorithm accept. $I$ is not in the language
if for all pairs $(x,z)$, the real verification rejects.

Instead of using certificates and verifiers, we can also define 
nondeterministic real RAMs that can guess words and real numbers on the fly.
Like for classical Turing machines, it is easy to see that these two definitions
are equivalent (when dealing with time bounded computations). 

\begin{definition}
    \label{def:ntimereal}
    Let $t: \IN \to \IN$ be a function.
    We define $\NTimereal(t)$ to be the set of all languages $L \subseteq \{0, 1\}^\star$, such that there is a constant $c \in \IN$ and a nondeterministic real word-RAM $M$ that recognizes $L$ in time $t$ for all word-sizes $w \geq c \cdot \log(t(n)) + c$.

    For any set of functions $T$, we define $\NTimereal(T) = \bigcup_{t \in T} \NTimereal(t)$.
    We define our two main classes of interest, $\realNP$ and $\realNEXP$ as follows:
    \begin{align*}
        \realNP &= \NTimereal(\poly(n)), & \realNEXP &= \NTimereal(2^{\poly(n)}).
    \end{align*}
\end{definition}

Note that the word size needs to be at least logarithmic in the running time,
to be able to address a new register in each step.

One of the main results of Erickson et al. (Theorem 2 in their paper) 
can be rephrased as $\existsR = \realNP$. Their techniques can be extended
to prove that $\succR =\realNEXP$.


We get the following in analogy to the well-known results that
the succinct version of $3\text{-}\textsc{Sat}$ is $\NEXP$-complete.
\begin{lemma} \label{lem:realNEXPsuccR}
    $\succETR$ is $\realNEXP$-complete and thus $\realNEXP = \succR$.
\end{lemma}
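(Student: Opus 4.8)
The plan is to mirror the classical proof that succinct $3\textsc{-Sat}$ is $\NEXP$-complete, transporting it to the real setting via the characterization $\existsR = \realNP$ from Erickson et al.\ (rephrased in the excerpt). For the \emph{upper bound}, $\succETR \in \realNEXP$, I would argue as follows: given a Boolean circuit $C$ encoding an $\ETR$ sentence $\varphi$ with $2^N$ nodes, a nondeterministic real RAM first guesses, in $2^{\poly(N)}$ time, an assignment of real numbers to the (at most $2^N$) variables of $\varphi$. It then needs to evaluate $\varphi$ under that assignment. Since $\varphi$ is a tree whose node descriptions are computable in polynomial time from the circuit $C$, and the tree has exponential size, a real RAM with word size polynomial in $N$ can explicitly write down the tree (each node's label, parent, children all come from evaluating $C$), store partial evaluation results in its real registers indexed by node indices in $\{0,1\}^N$, and do a bottom-up evaluation pass. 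This takes exponential time and exponential space, which is within $\realNEXP$ by Definition~\ref{def:ntimereal} (and the required word size $c\cdot\log(2^{\poly(n)})+c = \poly(n)$ is fine). One subtlety: the numbers produced during evaluation can have doubly-exponential bit-length (iterated squaring), but this is exactly the same phenomenon as in the ordinary $\existsR = \realNP$ proof, and is handled there by the real RAM storing these values symbolically in its real registers rather than bit by bit — so I would invoke that machinery rather than re-derive it.

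For the \emph{lower bound}, $\realNEXP$-hardness, I would take an arbitrary $L \in \realNEXP$, so $L = \NTimereal(2^{p(n)})$ for some polynomial $p$, witnessed by a nondeterministic real RAM $M$. Fix an input $I$ of length $n$. The standard padding/tableau argument for $\NEXP$ applies: an accepting computation of $M$ on $I$ running for $2^{p(n)}$ steps can be encoded as a tableau of dimensions $2^{p(n)} \times 2^{p(n)}$, and the local consistency constraints between adjacent cells — together with the guessed real values, which now live in the real registers — can be expressed as an $\ETR$ sentence $\varphi_I$ with $2^{\poly(n)}$ nodes. Crucially, the description of any single node of $\varphi_I$ (its label, parent, children) depends only on the corresponding cell coordinates and on $I$ and the fixed transition function of $M$, so it is computable by a Boolean circuit $C_I$ of size $\poly(n)$ taking a node index in $\{0,1\}^N$ with $N = \poly(n)$. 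This is precisely a $\succETR$ instance. The guessed reals of $M$ correspond to existentially quantified variables of $\varphi_I$; the arithmetic operations of $M$ on real registers are simulated by $+$ and $\cdot$ gates; comparisons with $0$ become $<,\le,=$ atoms; and word-register computations are simulated with bounded-width arithmetic exactly as integers are encoded in $\ETR$. The map $I \mapsto C_I$ is computable in polynomial time, giving a polynomial-time many-one reduction $L \le_P \succETR$, hence $\succETR$ is $\realNEXP$-hard. Combined with the upper bound, $\succETR$ is $\realNEXP$-complete, and since $\succR$ is by definition the closure of $\succETR$ under polynomial-time many-one reductions, $\realNEXP = \succR$.

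The main obstacle I anticipate is the faithful encoding of the real RAM's \emph{real-register arithmetic and indirect addressing} into the rigid tree-with-a-circuit format of $\succETR$. In the ordinary $\existsR = \realNP$ proof one already handles real arithmetic with polynomial-size $\ETR$ formulas, but here we must additionally ensure that the node-description circuit $C_I$ is only polynomial size while the tree it describes is exponential — i.e., the gadget for simulating one RAM step must be \emph{uniform and local} enough that a node at position $(t,\text{address})$ in the tableau, under any of the exponentially many addresses, has its label and neighbours computed by a fixed poly-size circuit. Indirect addressing on real registers is the delicate point: a step may read real register number $r$ where $r$ itself was computed, so the tableau must carry, per real register, its current value, and the copy/update constraints must "route" the right value — this is standard for word RAMs and $\NEXP$ but needs care to stay within the $\succETR$ syntax (arithmetic terms over a bounded alphabet of gate types). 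I expect this to be the bulk of the technical work; the rest is routine padding and the invocation of Erickson et al.'s results. Since the excerpt explicitly says "Their techniques can be extended to prove that $\succR = \realNEXP$," I would state the reduction gadgets at the level of the $\NEXP$/succinct-$3\textsc{Sat}$ analogy and defer the RAM-simulation details to the cited constructions, flagging only the places where the real registers require the symbolic-value treatment.
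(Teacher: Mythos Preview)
Your proposal is correct and follows essentially the same approach as the paper: the upper bound by expanding the succinct instance and evaluating it on a nondeterministic real RAM, and the hardness by observing that the Erickson et al.\ $\ETR$ encoding of a real RAM computation has sufficiently regular structure to be produced succinctly. The paper's hardness argument is framed a bit more concretely---it walks through the four named components (\texttt{PowersOf2}, \texttt{FixInput}, \texttt{WordsAreWords}, \texttt{Execute}) of the Erickson et al.\ formula and notes that each is either small or a big uniform conjunction/disjunction, with indirect addressing (your anticipated obstacle) singled out exactly as you do---but the substance is the same.
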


\begin{proofidea}
For the one direction, one carefully has to analyze the construction
by Erickson et al.\ and show that the simulation there can also be implemented
succinctly. The reverse direction simply follows from expanding the succinct ETR instance
and use the fact that nondeterministic real word-RAMs can solve ETR in polynomial time.
Along the way, we also obtain a useful normalization procedure for succinct ETR
instances (Lemma~\ref{lem:removeneg}). While for normal ETR instances,
it is obvious that one can always push negations down, it is not clear for
succinct instances. We achieve this in Lemma~\ref{lem:removeneg}.
\end{proofidea}
  
\section{The Relationships between the Boolean Classes and Classes over the Reals}
\label{sec:relationships:bool:class:vs:ove:the:reals}
Now 
we study the new class 
$\realNEXP = \succR$ from a complexity theoretic point
of view. 
\begin{equation} \label{eq:translate:1}
  \NP \subseteq \existsR \subseteq \PSPACE;\hspace{2cm}\NEXP \subseteq \succR \subseteq \EXPSPACE.
\end{equation}
The left side of (\ref{eq:translate:1}) is well-known.
The first inclusion on the right side is obvious, since a real RAM simply can
ignore the real part. The second inclusion follows from 
expanding the succinct instance into an explicit formula
(which now has exponential size) and simply using the known
$\PSPACE$-algorithm.

We prove 
two translation results, that is,
equality of one of the inclusion in the left equation of  $(\ref{eq:translate:1})$
implies the equality of the corresponding inclusion
in the right equation of $(\ref{eq:translate:1})$.

\begin{theorem}\label{thm:translation:lower}
    If $\existsR = \NP$, then $\succR = \NEXP$.
\end{theorem}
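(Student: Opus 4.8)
The strategy is a standard padding/translation argument, transferring the hypothetical collapse $\existsR = \NP$ from the ``polynomial'' world to the ``exponential'' world. The key observation is that $\succR$ instances are succinct encodings (via Boolean circuits) of exponentially-large $\ETR$ instances, so a language in $\succR = \realNEXP$ corresponds, after exponential expansion, to a language decidable by a nondeterministic real RAM in exponential time; padding the input by an exponential amount turns this into a polynomial-time real RAM computation, hence a language in $\realNP = \existsR$, which by hypothesis equals $\NP$; one then unpads.

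\medskip
\noindent\textbf{Step 1 (Set-up via the RAM characterization).} Let $L \in \succR = \realNEXP$ (using Lemma~\ref{lem:realNEXPsuccR}). By Definition~\ref{def:ntimereal} there is a nondeterministic real word-RAM $M$ recognizing $L$ in time $2^{p(n)}$ for some polynomial $p$, with word size $w \geq c \cdot p(n) + c$.

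\medskip
\noindent\textbf{Step 2 (Padding).} Define the padded language
\[
  L' = \{\, x\,\#\,1^{2^{p(|x|)}} : x \in L \,\}.
\]
On an input of the form $x \# 1^m$ with $m = 2^{p(|x|)}$, the length is $N = |x| + 1 + 2^{p(|x|)}$, so $2^{p(|x|)} \le N$, i.e.\ $M$'s running time $2^{p(|x|)}$ is linear, hence polynomial, in $N$. A nondeterministic real RAM can first check that the padding has the correct length $2^{p(|x|)}$ (this is a polynomial-time integer computation on the word registers: compute $p(|x|)$, then verify the suffix is all $1$s and has length $2^{p(|x|)}$ by a counter of $p(|x|)$ bits), and then simulate $M$ on $x$. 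This shows $L' \in \realNP = \existsR$.

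\medskip
\noindent\textbf{Step 3 (Apply the hypothesis and unpad).} By assumption $\existsR = \NP$, so $L' \in \NP$. Now $L$ reduces to $L'$ in exponential time by the map $x \mapsto x \# 1^{2^{p(|x|)}}$, which can be computed by a nondeterministic (in fact deterministic) machine in time $2^{O(p(n))}$. Composing: to decide whether $x \in L$, compute $x \# 1^{2^{p(|x|)}}$ in exponential time and run the $\NP$-machine for $L'$, whose running time is polynomial in the exponential-size padded input, hence exponential in $|x|$. Therefore $L \in \NEXP$. Since the reverse inclusion $\NEXP \subseteq \succR$ is already known (Equation~\eqref{eq:translate:1}), we conclude $\succR = \NEXP$.

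\medskip
\noindent\textbf{Main obstacle.} The only subtle point is making sure the padding is \emph{verifiable} within the budget: the real RAM deciding $L'$ must confirm that the padded portion has length exactly $2^{p(|x|)}$, and this must be done in time polynomial in the already-exponential input length $N$ — which it is, since we only need an integer counter of $O(p(n)) = O(\log N)$ bits, comfortably within the word size guaranteed by Definition~\ref{def:ntimereal}. One should also double-check that the nondeterministic simulation of $M$ on $x$ is faithful when the word size is measured relative to $N$ rather than $|x|$ (the larger word size only helps). No properties of the \emph{real} registers beyond what $M$ already uses are needed, so the real and Boolean parts of the argument decouple cleanly, exactly as in the classical translation theorem for $\NEXP$ versus $\NP$.
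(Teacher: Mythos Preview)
Your argument is correct, but it differs from the paper's. The paper argues directly with the complete problem rather than via the machine model: assuming $\existsR=\NP$, fix a polynomial-time NTM $N$ for $\ETR$; given a $\succETR$ instance (a circuit $C$), one expands $C$ into the explicit exponential-size $\ETR$ formula $\varphi$ and runs $N$ on $\varphi$. This puts $\succETR\in\NEXP$, hence $\succR\subseteq\NEXP$, and the reverse inclusion is already known. That proof never invokes Lemma~\ref{lem:realNEXPsuccR} or any real-RAM simulation. Your route instead uses the characterization $\succR=\realNEXP$ and a classical padding argument to pass through $\realNP=\existsR$. Both are short; the paper's version is slightly more self-contained (it needs only the definition of $\succR$), while yours is the generic translation template and in fact is essentially the argument the paper \emph{does} use for the companion Theorem~\ref{thm:translation:upper}.
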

\begin{theorem}\label{thm:translation:upper}
    If $\existsR = \PSPACE$, then $\succR = \EXPSPACE$.
\end{theorem}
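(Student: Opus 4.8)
The plan is to pass through the machine characterizations established earlier, namely $\existsR = \realNP$ (Erickson et al.) and $\succR = \realNEXP$ (Lemma~\ref{lem:realNEXPsuccR}), so that the statement becomes: \emph{if $\realNP = \PSPACE$ then $\realNEXP = \EXPSPACE$.} Since $\succR \subseteq \EXPSPACE$ is already recorded in \eqref{eq:translate:1}, it suffices to prove $\EXPSPACE \subseteq \realNEXP$ under the hypothesis. The argument is the exponential ``translational'' (padding) argument, the exact analogue of the classical fact that $\NP = \PSPACE$ implies $\NEXP = \EXPSPACE$, and a mirror image of the proof of Theorem~\ref{thm:translation:lower}; the real-number features of the RAM model will just be carried along passively.

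Concretely, let $L \in \EXPSPACE$ be decided by a Turing machine in space $2^{n^c}$ for some constant $c$, where we may assume $2^{n^c} > n$ for all relevant $n$ (the finitely many shorter inputs are decided by table lookup). Define the padded language
\[
  L_{\mathrm{pad}} \;=\; \bigl\{\, x\,1\,0^{\,2^{|x|^c} - |x| - 1} \;:\; x \in L \,\bigr\}.
\]
Given a string $w$ of length $m$, one recovers the unique candidate $x$ by stripping the trailing zeros and then one further $1$ (rejecting if $w$ is not of this form), checks the syntactic conditions $|x|^c \le \log_2 m$ and $m = 2^{|x|^c}$, and finally decides whether $x \in L$ using space $2^{|x|^c} = m$. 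Hence $L_{\mathrm{pad}} \in \PSPACE$. By the hypothesis $\PSPACE = \existsR = \realNP$, there is a nondeterministic real word-RAM $M$ recognizing $L_{\mathrm{pad}}$ in time $m^d$, for all word sizes at least $c'\log(m^d) + c'$. Now define a nondeterministic real word-RAM $M'$ for $L$: on input $x$ of length $n$, $M'$ first writes the padded string $x\,1\,0^{2^{n^c}-n-1}$ (of length $m = 2^{n^c}$) into its input region in time $\Theta(2^{n^c})$, and then runs $M$ on that string. The total running time is $2^{n^c} + m^d = 2^{O(n^c)}$, so $L \in \realNEXP = \succR$, and we conclude $\EXPSPACE \subseteq \succR$, hence $\succR = \EXPSPACE$.

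The step requiring the most care is the compatibility of word sizes in the composition $M' \circ M$. By Definition~\ref{def:ntimereal}, $M'$ must work for all word sizes bounded below by a constant times $\log$ of its own (exponential) running time, i.e.\ by $\Theta(n^c)$, while $M$ demands word size at least $c'd\log m + c' = \Theta(n^c)$; since both lower bounds are $\Theta(n^c)$, one fixes the constant in $\NTimereal$ for $M'$ large enough that every word size legal for $M'$ on $x$ is also legal for $M$ on the length-$m$ padded string, and the simulation goes through. (The nondeterministic real guesses and indirect real-register addressing performed by $M$ cause no difficulty, as $M'$ merely forwards them.) The remaining obligations — that the padded language is genuinely in $\PSPACE$, and that the finitely many exceptional short inputs are absorbed — are routine and will be dispatched briefly.
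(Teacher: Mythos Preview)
Your proof is correct and uses the same padding argument as the paper: pad an arbitrary $\EXPSPACE$ language down to $\PSPACE$, apply the hypothesis, and then have an exponential-time nondeterministic real RAM write out the padded input and run the resulting procedure. The only cosmetic difference is that you invoke the hypothesis in the form $\PSPACE=\realNP$ (via Erickson et al.) to obtain a real RAM for the padded language directly, whereas the paper invokes it as a polynomial-time reduction to $\ETR$ and then has the real RAM guess and verify the resulting formula---two equivalent ways of exploiting the same assumption.
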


Further we prove a nondeterministic time hierarchy theorem (see Lemma~\ref{lemma:proper:incl} for the details) for real word RAMs.
Using the characterization of $\existsR$ and $\succR$ in terms of real word RAMs, in particular, we get that $\succETR$ is strictly more expressive than $\ETR$.

\begin{corollary}
    \label{cor:existsR_hierarchy}
    $\existsR = \realNP \subsetneq \realNEXP = \succR$.
\end{corollary}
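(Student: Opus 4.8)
The two equalities in the statement are already available: $\existsR=\realNP$ is the characterization of Erickson et al.\ recalled above, and $\realNEXP=\succR$ is Lemma~\ref{lem:realNEXPsuccR}. So the only genuinely new assertion is the strict inclusion $\realNP\subsetneq\realNEXP$, and the plan is to obtain it by instantiating the nondeterministic time hierarchy theorem for real word-RAMs (Lemma~\ref{lemma:proper:incl}). Concretely, I would choose an intermediate time bound that is super-polynomial but sub-exponential, e.g.\ $t(n)=n^{\log n}=2^{(\log n)^2}$. On the one hand $\realNP=\NTimereal(\poly(n))\subseteq\NTimereal(t)$, since every polynomial is eventually dominated by $t$. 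On the other hand $t(n+1)=o(2^n)$, so the hierarchy theorem gives $\NTimereal(t)\subsetneq\NTimereal(2^n)\subseteq\NTimereal(2^{\poly(n)})=\realNEXP$. Chaining these yields $\realNP\subsetneq\realNEXP$, and together with the two equalities the corollary follows.

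It remains to sketch the hierarchy theorem underlying Lemma~\ref{lemma:proper:incl}, which is where the real work sits. The approach is \v{Z}\'ak-style \emph{lazy (delayed) diagonalization} adapted to the real word-RAM model. First fix an effective enumeration $M_1,M_2,\dots$ of nondeterministic real word-RAMs; programs are finite objects, so such an enumeration exists exactly as for Turing machines, and neither nondeterministic branching nor the presence of real registers obstructs enumerating or parsing them. Next build a universal nondeterministic real word-RAM $U$ that, given an index $i$, an input $x$, a step budget $s$, and a word size $w$, simulates $M_i$ on $x$ for $s$ steps under word size $w$ with only a constant (at worst $\polylog(n)$) factor slowdown: word operations and bitwise operations are simulated directly on a stored register file, real-register operations are copied verbatim (the reals themselves are never inspected), indirect addressing is the usual file lookup, and each control decision of $M_i$ --- a comparison of two word registers, or a sign test of a real register against $0$ --- is reproduced by $U$ faithfully. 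With $U$ available, the diagonalizer $D$ operates over padded chains: on an input coding a pair $(i,k)$ with $k$ below the top of its chain, $D$ uses $U$ to simulate $M_i$ on the next input of the chain within the larger time budget and copies the answer; at the top of the chain $D$ deterministically brute-forces all nondeterministic computation paths of $M_i$ on the short seed input of the chain and outputs the opposite bit. The chain length is chosen so that this exhaustive search fits inside the larger budget, while $t_1(n+1)=o(t_2(n))$ ensures that the answer-copying propagates along the chain within budget; this padding is exactly what lets one ``flip'' $M_i$ despite nondeterministic classes not being known to be closed under complement.

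The step I expect to be the main obstacle is the universal simulation in the real word-RAM model together with the handling of the word size. Erickson et al.'s machines are only required to run correctly for all word sizes $w\ge c\log t(n)+c$, with $c$ depending on the machine, so $D$ must pick a word size large enough both for its own bookkeeping and for hosting $M_i$'s register file, yet small enough that its own time bound is respected; since $D$'s word size grows with its (large) running time on the relevant inputs, it will exceed $M_i$'s requirement for all but finitely many of those inputs, which suffices for a valid diagonalization point, but making this uniform across the enumeration needs the standard device of simulating $M_i$ with an increasing word-size guess and aborting if the guess is too small. In addition one has to check that the time bounds used are well-behaved enough --- in particular that the larger ones ($2^n$, $2^{\poly(n)}$) are time-constructible by a real word-RAM, which is routine since they are computed by ordinary integer arithmetic on the word registers --- and that the universal simulation overhead is a genuine constant rather than growing with $n$, so that the separation condition $t_1(n+1)=o(t_2(n))$ is preserved. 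Once these points are in place, Lemma~\ref{lemma:proper:incl} and hence the corollary follow.
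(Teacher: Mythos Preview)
Your overall plan is right, and your instantiation with the intermediate bound $t(n)=n^{\log n}$ is a clean way to pass from the single-pair hierarchy lemma to the union $\realNP=\bigcup_k\NTimereal(O(n^k))$. The equalities and the use of Lemma~\ref{lemma:proper:incl} match the paper.

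There is, however, a genuine gap in your sketch of the hierarchy proof, precisely at the ``top of the chain'' step. You write that $D$ ``deterministically brute-forces all nondeterministic computation paths of $M_i$.'' In the classical Turing-machine setting that is fine, but for real word-RAMs the nondeterminism includes guessing \emph{real numbers} (uninitialized real registers), so the set of computation paths has the cardinality of the continuum and cannot be enumerated. The paper handles exactly this point in Lemma~\ref{lem:detsim}: instead of iterating over paths, one encodes the time-$t$ computation of $M_i$ on the seed input as an $\ETR$ instance via the Erickson--van der Hoog--Miltzow construction (Lemma~\ref{lem:makesucc}) and then decides that instance deterministically using Renegar's algorithm. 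This is a Boolean computation, so the diagonalizer can run it on its word registers within the allotted time and word size; the real registers play no role at this step. With this replacement your lazy-diagonalization outline goes through, and your discussion of word sizes (that $D$'s word size must dominate both its own bookkeeping and $M_i$'s requirement) is on target and mirrors the care the paper takes in Lemma~\ref{lem:nondetsim} and in the proof of Lemma~\ref{lemma:proper:incl}.
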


\section{Hardness of Probabilistic Satisfiability without Conditioning}
\label{sec:succR:compl:satsumpoly}
To prove that $\SATprobpolysum$ is $\succR$-complete, van der Zander,  
Bl\"{a}ser and Li\'{s}kiewicz \citep{zander2023ijcai} show the hardness part 
for the variant of the probabilistic language where the primitives are also allowed 
to be conditional probabilities. 
A novel contribution of our work is to extend this completeness result
to our version for languages which disallow conditional probabilities:

\begin{theorem}\label{thm:succR:completeness:of:L3}
    The problem $\SATprobpolysum$ remains $\succR$-complete even without conditional probabilities.
\end{theorem}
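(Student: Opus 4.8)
The plan is to prove membership trivially and spend all the effort on $\succR$-hardness. Membership in $\succR$ is inherited from~\cite{zander2023ijcai}: the conditional-free language $\Lprobpolysum$ is a sublanguage of the one with explicit conditional probabilities, whose satisfiability is already known to be in $\succR$. For hardness, I would reduce from $\sumviETR$ (equivalently $\succETR$), re-using the $\succR$-hardness construction of van der Zander, Bl\"aser and Li\'skiewicz~\cite{zander2023ijcai} and showing how to remove the one feature of it that uses conditioning. Recall that in that construction the $2^N$ indexed variables $x_{\langle e_1,\dots,e_N\rangle}$ of a $\sumviETR$ instance are encoded through the joint distribution of $N$ binary random variables $B_1,\dots,B_N$ together with an auxiliary ``value'' variable, each summation $\sum_{e_j=0}^1$ being simulated by a marginalization $\sum_x$ over $B_j$; after the normalization step of~\cite{zander2023ijcai} one may assume the intended assignment gives each variable a value in $[0,1]$, so these values can be stored directly as probabilities. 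The only thing standing in the way of quoting the argument verbatim is that the produced formula refers to conditional probabilities $\PP{A_{\vec e}\mmid E_{\vec e}}$ whose conditioning event $E_{\vec e}$ depends on the summation variables.

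The obstacle is that under nested $\Sigma$'s this is a family of $2^N$ distinct conditioning events, so the classical Fagin--Halpern--Megiddo device of clearing denominators~\cite{fagin1990logic,ibeling2022mosse} would have to multiply the formula by a product of exponentially many factors $\prod_{\vec e}\PP{E_{\vec e}}$ and blow the instance up to exponential size. Instead I would (i) restructure the reduction so that every such conditioning event has a \emph{known positive} probability in \emph{every} model --- concretely, forcing the relevant marginal to be uniform, $\PP{B_1=b_1,\dots,B_k=b_k}=2^{-k}$, which is expressible without conditioning because the integer $2^{k}$ has the compact representation $(\PP{\top}+\PP{\top})\cdots(\PP{\top}+\PP{\top})$ with $k$ factors --- and (ii) introduce an auxiliary random variable $V$ and use its joint probabilities $y_{\vec e}:=\PP{V=1,\vec B=\vec e}$ as witnesses for (a fixed rescaling of) the conditional values, pinned down by the defining multiplicative identity $y_{\vec e}\cdot\PP{\vec B=\vec e}=\PP{A_{\vec e}\wedge\vec B=\vec e}$. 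The key point is that this identity, which must hold for all $2^N$ tuples $\vec e$ at once, can be enforced by the single polynomial-size constraint
\[
\textstyle\sum_{e_1}\cdots\sum_{e_N}\bigl(\,y_{\vec e}\cdot\PP{\vec B=\vec e}-\PP{A_{\vec e}\wedge\vec B=\vec e}\,\bigr)^2\;=\;0,
\]
i.e.\ using the summation operator of $\Lprobpolysum$ as a surrogate for universal quantification over the indices; squaring merely doubles the degree and costs nothing in the language. The rest of the arithmetic and Boolean structure of the $\sumviETR$ instance is then translated exactly as in~\cite{zander2023ijcai}, reading every former conditional $\PP{A_{\vec e}\mmid E_{\vec e}}$ as its witness $y_{\vec e}$ up to a known power-of-two scaling, which keeps all coefficients (again powers of two of polynomially bounded exponent, hence compactly representable) polynomial.

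The step I expect to be the main obstacle is the bookkeeping needed to make parts (i) and (ii) mesh with the possibly deeply nested $\Sigma$/product structure inherited from the $\sumviETR$ instance. One has to verify that every conditioning event actually used is strictly positive in every model of the constructed formula --- otherwise $y_{\vec e}\cdot\PP{\vec B=\vec e}=\PP{A_{\vec e}\wedge\vec B=\vec e}$ fails to determine $y_{\vec e}$ --- that the squared-sum ``for all $\vec e$'' gadget still composes correctly when the original index variables also occur inside further summations and products of the $\sumviETR$ formula, that the restructuring which brings all exponentially-indexed conditionings into the canonical form $\PP{\cdot\mmid\vec B=\vec e}$ (and keeps the separately handled, polynomially-many low-arity conditionings treatable by ordinary denominator clearing) does not disturb the simulation, and that every rescaling factor introduced along the way stays of polynomial bit-length so that the final formula has polynomial size. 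Once these points are checked, the construction is a polynomial-time many-one reduction from $\sumviETR$ to $\SATprobpolysum$ over the conditional-free language, which together with the trivial membership yields $\succR$-completeness.
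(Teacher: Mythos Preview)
Your proposal has the right high-level shape but step~(ii) is internally inconsistent, and even once repaired it is a more roundabout route than the paper's.

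\textbf{The inconsistency.} You set $y_{\vec e}:=\PP{V{=}1,\vec B{=}\vec e}$ and enforce $y_{\vec e}\cdot\PP{\vec B{=}\vec e}=\PP{A_{\vec e}\wedge\vec B{=}\vec e}$. Since $V$ lives in the same joint distribution as $\vec B$, marginalizing over $V$ gives $y_{\vec e}\le\PP{\vec B{=}\vec e}$; combined with the uniform marginal from~(i) this forces $y_{\vec e}\le 2^{-N}$. Your identity, however, pins $y_{\vec e}$ to the \emph{exact} conditional $\PP{A_{\vec e}\mid\vec B{=}\vec e}$, which must be free to range over $[0,1]$ to encode an arbitrary normalized variable value. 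The two constraints are jointly satisfiable only when every encoded value is at most $2^{-N}$, so the reduction is unsound as written. The repair is immediate and makes $V$ superfluous: once $\PP{\vec B{=}\vec e}=2^{-N}$ holds, the conditional equals $2^{N}\cdot\PP{A_{\vec e}\wedge\vec B{=}\vec e}$, so you may substitute that conditional-free expression directly wherever the conditional appeared; the factor $2^{N}$ costs $O(N)$ symbols and no witness variable is needed.

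\textbf{How the paper proceeds.} The paper does not try to excise conditionals from the existing reduction; it builds a new one. It introduces the intermediate problem $\sumviETR_{1}$ ($\sumviETR$ with the additional requirement that the solution have $\ell_{1}$ norm at most $1$) and shows $\sumviETR\leqp\sumviETR_{1}$ by a scaling argument based on the Grigoriev--Vorobjov root bound, manufacturing a sufficiently small scaling constant via repeated squaring of auxiliary variables inside the $\sumviETR$ syntax. It then reduces $\sumviETR_{1}\leqp\SATprobpolysum$ by encoding each real variable as a \emph{difference} of two joint probabilities,
\[
x_{\vec e}\;=\;\PP{X_{0}{=}1,\vec X{=}\vec e}\;-\;\PP{X_{0}{=}{-1},\vec X{=}\vec e},
\]
using one extra ternary variable $X_{0}$; the $\ell_{1}$ bound is precisely what guarantees that these numbers fit inside a valid probability distribution. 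No conditionals, no forced uniform marginals, no witness variables, and the positivity concern you correctly flag as the main obstacle never arises. Both routes ultimately rely on Grigoriev--Vorobjov, but the paper uses it to bound the \emph{sum} of absolute values rather than each value individually, and that is the idea enabling the direct, conditional-free encoding.
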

In the rest of this section, we will give the proof of the theorem.

In \citep{zander2023ijcai} the authors
have already shown that $\sumviETR$ is $\succR$-complete.
We define $\sumviETR_1$ in the same way as $\sumviETR$, but asking 
the question whether there is a solution where the sum of the absolute 
values ($\ell_1$ norm) is bounded by $1$.
Then we can reduce $\sumviETR_1$ to $\SATprobpolysum$ 
without the need for conditional probabilities (Lemma~\ref{lem:satprobpolysum_succR_hard}).
The proof that $\sumviETR_1$ is hard for $\succR$ (Lemma~\ref{lem:sumvietr1:reduction})
depends on a result of Grigoriev and Vorobjov~\cite{DBLP:journals/jsc/GrigorevVV88} 
who showed that the solution to an ETR instance can be bounded by a constant that 
only depends on the bitsize of the instance. Thus the solution can be scaled to fit into a probability distribution. 
This completes the proof of Theorem~\ref{thm:succR:completeness:of:L3}.

\begin{theorem}[Grigoriev and Vorobjov \cite{DBLP:journals/jsc/GrigorevVV88} ] \label{thm:grigoriev}
Let $f_1,\dots,f_k \in \IR[X_1,\dots,X_n]$
be polynomials of total degree $\le d$
with coefficients of bit size $\le L$. Then every connected 
component of $\{x \in \IR^n \mid f_1(x) \ge 0 \wedge \dots
\wedge f_k(x) \ge 0 \}$ contains a point of distance
less than $2^{L d^{cn}}$ from the origin 
for some absolute constant $c$. The same is true if some
of the inequalities are replaced by strict inequalities.
\end{theorem}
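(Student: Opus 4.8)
The plan is to bound the coordinates of the point of \emph{minimum Euclidean norm} inside a given connected component $C$, by characterizing it as a critical point of the squared-distance function and then applying effective elimination theory to the resulting polynomial system. First I would remove the strict inequalities by a standard infinitesimal perturbation of the constraints (replace $f_i > 0$ by $f_i \ge \varepsilon$ for a positive infinitesimal $\varepsilon$, work over $\IR\langle\varepsilon\rangle$, and pass to the limit $\varepsilon \to 0^{+}$): since the norm bound we aim for depends only on degrees and bitsizes, and these are essentially unchanged by the perturbation, it survives taking limits of bounded semialgebraic families; one checks the limiting point still lies in the strict set using the relative openness of the component. This is a routine technicality, so from now on assume all inequalities are non-strict. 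Then $S = \{x \in \IR^n : f_1(x) \ge 0, \dots, f_k(x) \ge 0\}$ is closed, hence so is each connected component $C$, and since $C \ne \emptyset$ the function $x \mapsto \|x\|^2$ attains a minimum on $C$ at some point $x^\star$ (a sublevel set of the norm intersected with $C$ is compact). It remains to bound $\|x^\star\|$.

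As a constrained minimizer, $x^\star$ is a critical point of $\|x\|^2$ on $S$: there is a subset $J$ of the constraints, active at $x^\star$, such that $2x^\star$ lies in the linear span of $\{\nabla f_j(x^\star) : j \in J\}$, i.e.\ a determinantal rank-drop condition holds on the Jacobian of $(f_j)_{j \in J}$ augmented with the objective's gradient. To turn this into a \emph{zero-dimensional} system and avoid enumerating $J$, I would perturb the objective by an infinitesimal generic linear form, $\|x\|^2 + \sum_i \zeta^i x_i$, which isolates the critical points on each active stratum, and encode the critical condition by a polynomial system in $n + O(1)$ variables (the $x_i$ together with slack variables) built from the $f_j$ and from $O(n)\times O(n)$ minors of a matrix whose entries are the $f_j$ and linear forms; these polynomials have degree $d^{O(1)}$ and coefficient bitsize $L^{O(1)}$. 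Thus $x^\star$ (more precisely its limit as $\zeta \to 0$) is a coordinate of a real point of a zero-dimensional variety $W$ defined by such polynomials.

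Finally I would eliminate all but one variable from the system defining $W$ --- via iterated multivariate resultants, or equivalently a quantitative Tarski--Seidenberg step --- to obtain, for each coordinate, a nonzero univariate polynomial $P_i \in \Z[t]$ with $P_i(x^\star_i) = 0$, of degree $d^{O(n)}$ and coefficient bitsize $L \cdot d^{O(n)}$; the crucial point is to keep the degree growth single-exponential in $n$ by a B\'ezout-type bound rather than naively squaring the degree at each of the $n$ projection steps. Cauchy's classical root bound then gives $|x^\star_i| \le 1 + 2^{L d^{O(n)}}$, hence $\|x^\star\| \le \sqrt{n}\, 2^{L d^{O(n)}} < 2^{L d^{cn}}$ for a suitable absolute constant $c$, and undoing the $\zeta$- and $\varepsilon$-deformations preserves this inequality since it is a closed condition on a family that stays bounded as the infinitesimals tend to $0$. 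The main obstacle is exactly this last part: reducing to a genuinely zero-dimensional, generic situation without inflating the distance bound, and controlling the degree and bitsize blow-up through the elimination so that the final exponent is $O(n)$ (single exponential) rather than $2^{O(n)}$ --- achieving single-exponential behaviour, rather than the double-exponential tower produced by iterated bivariate elimination, is the real content of the theorem and is what forces the use of multivariate resultants or a B\'ezout argument.
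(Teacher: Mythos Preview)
The paper does not prove this theorem at all: it is quoted verbatim as a result of Grigoriev and Vorobjov \cite{DBLP:journals/jsc/GrigorevVV88} and used as a black box in the subsequent reduction (Lemma~\ref{lem:sumvietr1:reduction}). There is therefore nothing in the paper to compare your argument against.

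For what it is worth, your outline is broadly the strategy of the original source (and of the later treatment by Basu--Pollack--Roy): locate a point of minimal norm in a component, describe it via critical-point/KKT equations, deform to a zero-dimensional situation, and bound root sizes through single-exponential effective elimination. You correctly identify the crux as controlling the degree blow-up to $d^{O(n)}$ rather than doubly exponential. Two places where your sketch would need real work if you were actually writing this up: (i) the KKT characterization requires a constraint-qualification hypothesis that can fail, so the genuine argument stratifies by active sets and uses a more careful critical-point scheme (e.g.\ the polar-varieties machinery), and (ii) your handling of the strict-inequality case via ``the limiting point still lies in the strict set using the relative openness of the component'' is not quite right as stated --- a limit of points in $\{f_i \ge \varepsilon\}$ as $\varepsilon \to 0$ need not satisfy $f_i > 0$; one has to argue more carefully, typically by working in the real closed field $\IR\langle\varepsilon\rangle$ and invoking transfer, or by bounding the distance from the minimizer in the closed set back into the open component.
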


\begin{lemma}\label{lem:sumvietr1:reduction}
    $\sumviETR \leqp \sumviETR_1$.
\end{lemma}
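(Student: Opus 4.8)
The goal is to reduce $\sumviETR$ to $\sumviETR_1$, i.e., to transform an arbitrary $\sumviETR$ instance $\varphi$ into an equisatisfiable instance $\varphi'$ that additionally has a solution whose $\ell_1$ norm is at most $1$. The plan is to first invoke Theorem~\ref{thm:grigoriev} to get an a priori bound on the magnitude of some solution of $\varphi$ (if one exists), then rescale all variables by a single scalar so that the scaled solution fits inside the $\ell_1$-ball of radius $1$, and finally massage the formula so that (a) the rescaling can be expressed inside $\sumviETR$ syntax and (b) the scaling factor itself is expressible despite being doubly exponentially small.

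\textbf{Step 1: Bounding a solution.} If $\varphi$ is satisfiable, then after expanding all summation operators it is an ordinary $\ETR$ sentence over $n' = 2^{\poly(n)}$ variables, with polynomials of degree and (expanded) bit size at most exponential in $n$. By Theorem~\ref{thm:grigoriev}, the corresponding semialgebraic set has a point within distance $R := 2^{L d^{c n'}}$ of the origin; note $\log R$ is itself at most doubly exponential in $n$, but crucially $\log\log R = \poly(n') = 2^{\poly(n)}$, so $R$ can be named by a tower expression of polynomial-in-$n$ size using the indexed variables. Concretely, one can write an integer $B$ with $B \ge (n'+1)R$ (the extra factor $n'+1$ converts the Euclidean bound into an $\ell_1$ bound, since $\|x\|_1 \le \sqrt{n'+1}\,\|x\|_2$ suffices, but a crude factor is fine) as a repeated-squaring expression; here is where the variable-indexing mechanism of $\sumviETR$ earns its keep, since we need roughly $\log\log B$ nested doublings, each realized by introducing an auxiliary indexed variable $y_{\langle e_1,\dots,e_k\rangle}$ constrained to equal the square of $y_{\langle e_1,\dots,e_{k-1}\rangle}$.

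\textbf{Step 2: Rescaling.} Introduce a fresh variable $s$ intended to equal $1/B$, enforced by the constraint $s \cdot B = 1$ where $B$ is the expression built in Step~1 (equivalently $s > 0 \wedge s\cdot B = 1$). Replace every original variable $x_i$ in $\varphi$ by a new variable $\tilde x_i$ together with the relation $\tilde x_i = s \cdot x_i$; equivalently, keep the $x_i$ but view $\tilde x_i := s x_i$ as the quantity whose $\ell_1$ norm we will bound. Because $\varphi$ is a Boolean combination of polynomial (in)equalities $p(x) \bowtie 0$, and the original solution lies within $\|x\|_1 \le R$, the scaled tuple $\tilde x$ satisfies $\|\tilde x\|_1 = \|x\|_1 / B \le R/B \le 1/(n'+1) < 1$ — note we must also account for the auxiliary squaring variables $y$, whose values are as large as $B$ itself, so these are \emph{not} rescaled and instead should be eliminated: the cleanest route is to let $B$ appear only multiplied by $s$, i.e., to build the single expression $sB$ directly by nested squaring of a guessed "$\sqrt[2^j]{s}$"-style ladder so that no large intermediate value is ever a free variable. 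A homogeneity trick makes the polynomial constraints behave well: for a constraint $p(x)\bowtie 0$ of degree $d_0$, multiply through by $s^{d_0}$ to get $s^{d_0}p(x) = q(\tilde x) \bowtie 0$ where $q$ is the homogenization, so the rewritten constraint is a genuine polynomial in the $\tilde x_i$ and $s$ of the same degree, hence expressible in $\sumviETR$.

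\textbf{Step 3: Correctness and the summation structure.} For soundness, any solution of $\varphi'$ restricts (by setting $x_i = \tilde x_i / s$, legitimate since $s \ne 0$) to a solution of $\varphi$; for completeness, a solution of $\varphi$ of norm $\le R$ extends, via $s = 1/B$ and $\tilde x_i = x_i/B$, to a solution of $\varphi'$ whose $\ell_1$ norm is $< 1$ after one checks the auxiliary ladder variables can be chosen consistently and also contribute a controlled amount to the norm (choosing the ladder so its entries are all $< 1$, e.g. a ladder of successive square roots of $s$ rather than successive squares of $B$, keeps the total norm below $1$; a final global rescale by a fixed constant absorbs the slack). One must double-check that the added constraints respect the $\sumviETR$ well-formedness rules — in particular that indexed variables only occur under summation operators of range $\{0,1\}$ — which is automatic if the ladder is indexed by a polynomial-length block of summation variables in exactly the manner of the unit-vector example in the Introduction.

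\textbf{Main obstacle.} The delicate point is Step~1 together with the norm bookkeeping in Step~3: the scaling factor $1/B$ is doubly exponentially small, so it cannot be written down as a literal constant nor even as a polynomially long $\{+,\cdot,0,1\}$-expression; it is exactly the variable-indexing feature of $\sumviETR$ (nested summation producing exponentially many variables, tied together by squaring constraints) that lets us name such a number in polynomial size. Getting a \emph{clean} such construction — one in which every auxiliary variable is itself small, so that the final $\ell_1$ norm is provably $\le 1$ rather than merely finite — is the part that requires the most care, and it is why the reduction is stated for $\sumviETR$ (with indexing) rather than for $\existsR^\Sigma$.
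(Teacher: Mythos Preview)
Your proposal is correct and follows essentially the same approach as the paper: invoke Grigoriev--Vorobjov for the a~priori bound, build the doubly-exponentially-small scaling factor via an iterated-squaring ladder over exponentially many indexed auxiliary variables, then substitute $x_i\mapsto x_i/d$ and clear denominators. The paper's execution is slightly cleaner only in that it anchors the ladder at the small end from the outset---fixing $(2^m+2^S)\,t_1=1$ and forcing $t_{i+1}=t_i^2$ via the single constraint $\sum_i(t_i^2-t_{i+1})^2=0$---so every auxiliary $t_i$ is automatically at most $1/(2^m+2^S)$ and the $\ell_1$ bookkeeping you flag in Step~3 becomes immediate; this is exactly the fix you yourself arrive at after discarding the ``build $B$ large, then invert'' detour.
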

\begin{proof}
    Let $\phi$ be an instance of $\sumviETR$.
    We will transform it into a formula $\varphi$ such that $\varphi$ has a solution with $\ell_1$ norm bounded by $1$ iff $\phi$ has any solution.

    Let $S$ be the bit length of $\phi$.
    The number $n$ of variables in $\phi$ is bounded by $2^S$.
    The degree of all polynomials is bounded by $S$.
    Note that the exponential sums do not increase the degree at all.
    Finally, all coefficients have bit size $O(S)$.
    Note that one summation operator doubles the coefficients at most.
    By Theorem~\ref{thm:grigoriev}, if $\phi$ is satisfiable, then there is a solution with entries bounded by $T := 2^{2^{2^{cS}}}$ for some constant $c$.

    In our new instance $\varphi$ first creates a small constant $d \leq 1/((2^m+n)T)$ for some $m$ polynomial in $S$ defined below.
    This can be done using Tseitin's trick: We take $2^m$ many fresh variables $t_i$ and start with $(2^m+2^S)t_1 = 1$ and then iterate by adding the equation $\sum_{i = 1}^{2^m-1} (t_i^2 - t_{i+1})^2 = 0$, i.e.\ forcing $t_{i+1} = t_i^2$.
    To implement the first equation we replace $2^m$ by $\sum_{e_1=0}^1 \cdots \sum_{e_m=0}^1 1$ and similarly replace $2^S$.
    To implement the second equation we replace $\sum_{i = 1}^{2^m-1} (t_i^2 - t_{i+1})^2$ by $\sum_{e_1=0}^1 \cdots \sum_{e_m=0}^1\sum_{f_1=0}^1 \cdots \sum_{f_m=0}^1 (t_{e_1, \ldots, e_m}^2 - t_{f_1, \ldots, f_m})^2 \cdot A(e_1, \ldots, e_m, f_1, \ldots, f_m)$ where $A$ is an arithmetic formula returning $1$ iff the binary number represented by $f_1, \ldots, f_m$ is the successor of the binary number represented by $e_1, \ldots, e_m$.
    The number $m$ is polynomial in $S$.
    The unique satisfying assignment to the $t_i$ has its entries bounded by $1/(2^m+2^S)$. 
    Let $d := t_{2^m}$ be the last variable.

    Now in $\phi$ we replace every occurrence of $x_i$ by $\sfrac{x_i}{d}$ 
    and then multiple each (in-)equality by an appropriate power of $d$ to remove the divisions in order to obtain $\varphi$.
    In this way, from every solution to $\phi$, we obtain a solution to $\varphi$ by multiplying the entries by $d$ and vice versa.
    Whenever $\phi$ has a solution, then it has one with entries bounded by $T$.
    By construction $\varphi$ then has a solution with entries bounded by $1/(2^m+2^S)$.
    Since each entry of the solution is bounded by $1/(2^m + 2^S)$, the $\ell_1$ norm is bounded by $1/2$.
\end{proof}

\begin{lemma}\label{lem:satprobpolysum_succR_hard}
    $\sumviETR_1 \leqp \SATprobpolysum$ via a reduction without the need for conditional probabilities.
\end{lemma}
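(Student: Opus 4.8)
The plan is to reduce an instance $\varphi$ of $\sumviETR_1$ to a probabilistic satisfiability instance over $\Lprobpolysum$ without conditional probabilities, exploiting the fact that in $\varphi$ every real variable has absolute value summing to at most $1$. The core idea is to represent each of the (up to exponentially many) variables $x_{\langle e_1,\dots,e_m\rangle}$ of $\varphi$ by a probability of an atomic event. I would introduce one random variable $X$ with value domain large enough to index the variables together with a sign (say $\mathit{Val}=\{0,1,\dots,2m+1\}$ or, more cleanly, a block of $m+1$ binary random variables $B_0,B_1,\dots,B_m$), so that the atomic event selecting index $\langle e_1,\dots,e_m\rangle$ with sign bit $s$ corresponds to $\PP{B_0=s, B_1=e_1,\dots,B_m=e_m}$. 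Then $x_{\langle e_1,\dots,e_m\rangle}$ is simulated by $\PP{B_0=0,B_1=e_1,\dots}-\PP{B_0=1,B_1=e_1,\dots}$, which can be any real in $[-1,1]$ and whose absolute values over all indices sum to at most the total mass $1$; conversely any solution of $\varphi$ with $\ell_1$ norm $\le 1$ gives a consistent sub-probability distribution, and the leftover mass can be absorbed by one extra "dummy" atom so that we get a genuine probability distribution.

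The main translation step is syntactic: walk through the arithmetic term tree of $\varphi$ and replace each leaf $x_{\langle x_{j_1},\dots,x_{j_m}\rangle}$ by the difference of two basic terms as above, where the summation variables $x_{j_i}$ of $\varphi$ (which range over $\{0,1\}$) become summation dummy variables $x$ of $\Tpolysum$ (which range over $\mathit{Val}$, but we can restrict attention to values $0,1$ by guarding with indicator arithmetic formulas $A(\cdot)$ exactly as in the proof of Lemma~\ref{lem:sumvietr1:reduction}, or by choosing $\maxvaluecount=2$). Nested $\sum_{x_j=0}^1$ operators of $\sumviETR$ map to nested $\sum_x$ operators of $\Tpolysum$; the functional symbols $+,\cdot,-$ and the constants $0,1$ map to themselves, noting that constants are encodable as observed in the preliminaries (e.g.\ $1\equiv\PP{\top}$ and $0\equiv\PP{B_1=0\wedge B_1=1}$), so no arithmetic power is lost. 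The Boolean structure and the (in)equalities of $\varphi$ are preserved verbatim as a formula in $\Lprobpolysum$. The whole construction is clearly polynomial-time computable since the transformation is local on the (polynomially sized) description of $\varphi$.

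Correctness then splits into two implications. If $\varphi$ is satisfiable, take a solution with $\ell_1$ norm $\le 1$; build the distribution on $B_0,\dots,B_m$ (plus dummy mass) described above and check by induction on the term structure that $\llbracket\cdot\rrbracket_{\fM}$ evaluates each translated term to the value of the corresponding term of $\varphi$ under the solution, hence the translated formula holds. Conversely, from any model $\fM$ of the translated formula, read off $x_{\langle e\rangle}:=\llbracket\PP{B_0=0,\vec B=\vec e}\rrbracket_{\fM}-\llbracket\PP{B_0=1,\vec B=\vec e}\rrbracket_{\fM}$; these are reals whose absolute values sum to at most $\sum_{\vec e}(\PP{B_0=0,\vec B=\vec e}+\PP{B_0=1,\vec B=\vec e})\le 1$, and the same inductive evaluation shows they satisfy $\varphi$ with $\ell_1$ norm $\le 1$. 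The step I expect to be the main obstacle is making the indexing bookkeeping fully rigorous: one must ensure that the indicator formulas forcing the summation dummies into $\{0,1\}$ (needed when $\maxvaluecount>2$) and the correspondence between the formal index tuples $\langle x_{j_1},\dots,x_{j_m}\rangle$ of $\sumviETR$ and the atomic events $\PP{B_1=e_1,\dots,B_m=e_m}$ commute correctly with the nesting of summation operators, and that the leftover-mass dummy atom does not accidentally appear in any event mentioned in the formula. Everything else is a routine structural induction.
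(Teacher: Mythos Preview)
Your proposal is correct and follows essentially the same approach as the paper's proof. The paper's version is slightly more streamlined in one detail: instead of a binary sign bit $B_0$ together with a separate dummy atom, it uses a single ternary variable $X_0\in\{-1,0,1\}$ so that the value $X_0=0$ absorbs the leftover mass directly, which makes the concern you flag about the dummy atom interfering with the translated events disappear automatically.
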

\begin{proof}
    Let $X_0$ be a random variable with range $\{-1, 0, 1\}$ and let $X_1, \ldots, X_N$ be binary random variables.
    We replace each real variable $x_{e_1, \ldots, e_N}$ in the $\sumviETR_{1}$ formula as follows:
    \begin{align*}
        x_{e_1, \ldots, e_N} := \PP{X_0 = 1 \land X_1=e_1 \land \ldots \land X_N = e_N} - \PP{X_0 = -1 \land X_1=e_1 \land \ldots \land X_N = e_N}
    \end{align*}
    This guarantees that $x_{e_1, \ldots, e_N} \in [-1, 1]$.
    The existential quantifiers now directly correspond to the existence of a probability distribution $P(X_0, \ldots, X_N)$, where each variable corresponds to an different set of two entries of $P$.

    Let $P(X_0, \ldots, X_N)$ be a solution to the constructed $\SATprobpolysum$ instance.
    Then clearly setting $x_{e_1, \ldots, e_N} = P(1, e_1, \ldots, e_N) - P(-1, e_1, \ldots, e_N)$ satisfies the original $\sumviETR_1$ instance.
    Furthermore it has an $\ell_1$ norm bounded by $1$:
    \begin{align*}
        \sum_{e_1=0}^1\cdots\sum_{e_N=0}^1|x_{e_1, \ldots, e_N}| &= \sum_{e_1=0}^1\cdots\sum_{e_N=0}^1|P(1, e_1, \ldots, e_N) - P(-1, e_1, \ldots, e_N)| \\
        &\leq \sum_{e_1=0}^1\cdots\sum_{e_N=0}^1(P(1, e_1, \ldots, e_N) + P(-1, e_1, \ldots, e_N))\\
        &\leq 1\,.
    \end{align*}
    
    Vice-versa, let the original $\sumviETR_1$ be satisfied by some choice of the $x_{e_1, \ldots, e_N}$ with $\ell_1$ norm $\alpha$ bounded by $1$.
    We define the probability distribution
    \begin{align*}
        P(X_0, X_1, \ldots, X_N) = \begin{cases}
            \frac{1 - \alpha}{2^N} & \text{if $X_0 = 0$}\\
            \max(x_{X_1, \ldots, X_N}, 0) & \text{if $X_0 = 1$}\\
            \max(-x_{X_1, \ldots, X_N}, 0) & \text{if $X_0 = -1$}\\
        \end{cases}
    \end{align*}
    Every entry of $P$ is non-negative since $\alpha \leq 1$.
    Furthermore the sum of all entries is exactly $1$, the entries with $X_0 \in \{-1, 1\}$ contribute exactly $\alpha$ total and the $2^N$ entries with $X_0 = 0$ contribute $1 - \alpha$ total.
    Since $P$ fulfills the equation $x_{e_1, \ldots, e_N} = P(1, e_1, \ldots, e_N) - P(-1, e_1, \ldots, e_N)$, it is a solution to the constructed $\SATprobpolysum$ instance.
\end{proof}


\section[Correspondence to Existential Second Order Logic and First Order Logic of Probabilistic Independence]{Correspondence to Existential Second Order Logic and \FOindependent}
\label{sec:ESO}
In this section, we investigate the complexity of existential second order logics and the probabilistic independence logic \FOindependent.

%
\begin{lemma}\label{lem:ESO:in:succETR}
Model checking of $\ESOlabel_{\R}(\SUM,+,\times,\leq,<,=,\Q)$ is in \succR.
\end{lemma}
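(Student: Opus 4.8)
The plan is to decide model checking with a nondeterministic real word-RAM running in exponential time and then invoke $\succR=\realNEXP$ (Lemma~\ref{lem:realNEXPsuccR}). The only essential use of nondeterminism will be guessing the second-order witnesses; the rest is a deterministic recursive evaluation over the finite universe, and here the real RAM's unit-cost arithmetic on real registers is crucial, since the intermediate real values produced by nested sums and products may have doubly-exponential bit-length.

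\textbf{Normalization.} First I would rewrite the input sentence $\varphi$ into the shape $\exists f_1\cdots\exists f_k\,\psi$, where $\psi$ has no second-order quantifiers. This is done by Skolemizing every $\exists f$ that lies under universal first-order quantifiers (a function $f$ of arity $a$ under universally bound variables $y_1,\dots,y_\ell$ is replaced by a fresh function of arity $a+\ell$ applied to $y_1,\dots,y_\ell$ and the original arguments; this is sound over $\R$-structures precisely because the Skolem functions are allowed to take arbitrary real values) and then pushing $\exists$ past $\wedge,\vee$ and past first-order $\exists$ after renaming. Each new arity is bounded by $|\varphi|$, and the transformation is polynomial-time. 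The resulting $\psi$ is a first-order formula whose quantifiers range over $A$ and whose atoms are element (in)equalities, relation atoms $R(\vec x)$ referring to the tables $g_i^{\fA}$, and comparisons $i\,e\,j$ of real terms built from $+,\times,\SUM,-$, rational constants, and applications of the $f_j^{\fA}$ and of the guessed $f_i$.

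\textbf{The machine.} The finite part of the input (the universe $A$, the relation tables $g_i^{\fA}$, the formula) is read from the integer input vector and the real tables $f_j^{\fA}$ from the real input vector (if one insists on a Boolean language one restricts to rational-valued structures, which are encoded in binary and converted). After computing the normal form, the machine nondeterministically guesses the tables of $f_1,\dots,f_k$ --- at most $k\le|\varphi|$ functions with at most $|A|^{|\varphi|}$ entries each, hence $2^{\poly}$ real numbers in total --- and stores them in real registers addressed by the binary codes of the argument tuples. It then evaluates $\psi$ deterministically by recursive descent: $\exists x\,\chi$ and $\forall x\,\chi$ loop over the $\le|A|$ elements; Boolean connectives and element (in)equalities are handled directly; $R(\vec x)$ is a table look-up; and a comparison $i\,e\,j$ invokes a recursive term evaluation in which $\SUM_x\,\bt$ unfolds into a loop of $\le|A|$ real additions and a symbol $f(\vec x)$ is a register look-up. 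It accepts iff the root becomes true. The recursion tree has depth $\le|\varphi|$ and branching $\le|A|$, so it performs $2^{\poly}$ real-register operations and $2^{\poly}$ word operations on words of $\poly$ length; with the word size taken $\ge c\log(2^{\poly})+c$ this is consistent with Definition~\ref{def:ntimereal}. Hence model checking lies in $\NTimereal(2^{\poly})=\realNEXP=\succR$.

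\textbf{Main obstacle.} The delicate part is the normalization: the logic is not prenex, so one has to Skolemize in place, and one must argue both soundness over $\R$-structures with arbitrary real-valued functions and that the arities --- hence the table sizes $|A|^{O(|\varphi|)}$ --- stay singly exponential. The second point requiring care is the routine but necessary accounting that the recursive evaluation runs in exponential time with only $\poly$-length word registers, together with the observation that routing the structure's real values through the real input vector is exactly what keeps the problem inside the real-RAM world, even though $\succETR$ itself only offers the constants $0$ and $1$.
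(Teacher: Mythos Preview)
Your proof is correct and follows essentially the same approach as the paper: invoke the characterization $\succR=\realNEXP$, nondeterministically guess the real-valued tables witnessing the second-order function quantifiers, and then deterministically evaluate the remaining first-order part by enumerating the finite universe. Your explicit Skolemization step (pulling every $\exists f$ to the front by adding the enclosing universally bound first-order variables as extra arguments) makes rigorous a point the paper's terse proof glosses over, but this is a refinement of presentation rather than a difference of method.
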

\begin{proof}
In model checking, the input is a finite structure $\fA$ and a sentence $\phi$, and we need to decide whether $\fA \models \phi$. $\fA$ includes a domain $A$ for the existential/universal quantifiers over variables.
Any function (relation) of arity $k$ can be represented as a (Boolean) table of size $|A|^k$.
Some of these tables might be given in the input. The remaining tables of functions chosen by quantifiers $\exists f$ can simply be guessed by  a $\realNEXP$ machine in non-deterministic exponential time.
Then all possible values for the quantifiers of the finite domain can be enumerated and all sentences can be evaluated. This completes the proof, due to the characterization given in Lemma~\ref{lem:realNEXPsuccR}.
\end{proof}

\begin{proposition}\label{prop:MC:ESO:succETR:hard}
    Model checking of \LESO[0,1]{+,\times,\leq,0,1} is \succR-hard.
\end{proposition}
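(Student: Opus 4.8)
The plan is to reduce from $\succETR$ (which is $\succR$-complete by Lemma \ref{lem:realNEXPsuccR}), or more convenient, from $\sumviETR$ (also $\succR$-complete by \cite{zander2023ijcai}). Given a $\sumviETR$ sentence $\phi$ using at most $2^N$ existentially quantified real variables, summation operators $\sum_{x_j=0}^1$ that create indexing variables, and arithmetic $+,\times$ together with comparisons, the goal is to build in polynomial time a finite structure $\fA$ and a sentence $\psi\in\LESO[0,1]{+,\times,\leq,0,1}$ such that $\fA\models\psi$ iff $\phi$ is true. The natural choice is to take the domain $A=\{0,1\}$ (or a fixed-size Boolean-like domain), so that a tuple $\vec a\in A^N$ encodes an index $i\in\{0,\dots,2^N-1\}$ in binary; then a single second-order function $f\colon A^N\to[0,1]$ (guessed by $\exists f$) plays the role of the exponentially many real unknowns $x_0,\dots,x_{2^N-1}$, with the restriction that its values lie in $[0,1]$. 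Since the exponential variables in $\sumviETR$ can be rescaled to lie in $[0,1]$ — exactly the content of Lemma \ref{lem:sumvietr1:reduction} and Theorem \ref{thm:grigoriev}, which bound any solution and let us divide by a tiny constant — the $[0,1]$ codomain restriction is not an obstacle: we first pass through $\sumviETR_1$ and then note each variable individually sits in $[-1,1]$, and a further sign split ($f=f^+-f^-$ with $f^+,f^-\colon A^N\to[0,1]$) handles the sign.

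The main steps, in order, are: (1) Recall that $\sumviETR \leqp \sumviETR_1$, and further observe that in a solution with $\ell_1$ norm $\le 1$ every single entry lies in $[-1,1]$; split each indexed variable $x_{\langle \vec e\rangle}$ as $f^+(\vec e)-f^-(\vec e)$ with two $[0,1]$-valued guessed functions. (2) Translate the first-order skeleton: the summation operators $\sum_{x_j=0}^1$ in $\sumviETR$ become genuine $\sum_x$ summation terms of $\LESOlabel$ over the two-element domain, and the block of existential real quantifiers becomes the single (pair of) $\exists f$; comparisons $\le$ (and $=$, via $\le$ both ways, and $<$) are available. (3) Handle the constant pool: $\LESO[0,1]{+,\times,\leq,0,1}$ gives us only $0$ and $1$ as constants, but as noted in the preliminaries any rational (indeed any integer, and here all coefficients of the expanded instance have polynomial bit size) is buildable from $0,1,+,\times$ with only polynomial blow-up, and polynomially bounded integers specifically are cheap; rationals $p/q$ appearing in $\sumviETR$ can be cleared by multiplying through, since degrees are not increased by the sums. (4) Check the loose-fragment constraint: $\LESOlabel$ forbids negations $\neg(i\,e\,j)$ on real terms, so every comparison atom must appear positively — this is arranged by pushing negations inward using the de Morgan / complementation of comparisons ($\neg(i\le j)\equiv j<i$, etc.), and here we may invoke the normalization Lemma \ref{lem:removeneg} which already provides a negation-free normal form for (succinct) ETR instances, adapted to the $\sumviETR$ setting. (5) Encode the "arithmetic formula $A$ returning $1$ iff $\vec f$ is the binary successor of $\vec e$" style gadgets (needed if one reduces from $\sumviETR$ rather than $\succETR$ directly) as quantifier-free first-order formulas over $A=\{0,1\}$; these are of constant size per bit.

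The hard part will be Step (3)–(4) bookkeeping: making sure that clearing rational coefficients and building the needed integer constants from $\{0,1\}$ does not blow the formula up super-polynomially (it does not, because the bit sizes are polynomial and integers of polynomial magnitude have polynomial-size $+,\times$ formulas) and, more subtly, that after rescaling and sign-splitting the resulting sentence genuinely lies in the \emph{loose} fragment $\LESOlabel$ with \emph{no} negated real comparisons — one must be careful that the Boolean structure inherited from $\phi$ can be put in negation normal form with all atoms over real terms occurring positively, which is where Lemma \ref{lem:removeneg} does the real work. Containment of this model-checking problem in $\succR$ is already given by Lemma \ref{lem:ESO:in:succETR} (the relevant signature $\{+,\times,\leq,0,1\}$ is a sub-signature of $\{\SUM,+,\times,\leq,<,=,\Q\}$ there, and $\LESOlabel$ is a fragment of $\ESOlabel$), so combining hardness with that upper bound yields $\succR$-completeness.
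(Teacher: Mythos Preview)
Your plan has a genuine gap in Step~(2). The target logic is $\LESO[0,1]{+,\times,\leq,0,1}$, whose operator set is $O=\{+,\times\}$: the term-level summation operator $\SUM$ is \emph{not} part of the signature. So you cannot translate the nested $\sum_{x_j=0}^1$ operators of $\sumviETR$ into ``genuine $\sum_x$ summation terms of $\LESOlabel$'' as you propose; that operator simply is not available here. Without it, a $\sumviETR$ term may compute an exponential sum that has no polynomial-size representation as a $\{+,\times\}$-term. (One could try to recover this by introducing existentially quantified auxiliary functions $S_k$ for partial sums and enforcing the recurrence $S_{k-1}(\vec e)=S_k(\vec e,0)+S_k(\vec e,1)$ via $\forall$-quantifiers, but then each $S_k$ must take values in $[0,1]$, which forces per-level rescaling and a supply of fractions that you have not arranged; none of this is in your plan.)

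The paper sidesteps the issue entirely by reducing from a different $\succR$-complete problem, $\succETRcR{\sfrac{1}{8}}{-\sfrac{1}{8},\sfrac{1}{8}}$ from \cite{zander2023ijcai}, which consists of exponentially many equations each of the trivial shape $x_i=\sfrac{1}{8}$, $x_i+x_j=x_k$, or $x_i\cdot x_j=x_k$, with indices produced by seven Boolean circuits $C_0,\dots,C_6$. There are no exponential sums at all: the ``for all $2^M$ equations'' is handled by the first-order $\forall$ over the Boolean domain, and the circuits are arithmetized gate by gate into existentially quantified functions $c_{i,v}$, a selector $s_i$, and the variable table $q$. Every auxiliary function naturally takes values in $\{0,1\}$ or $[-\sfrac18,\sfrac18]$, so the $[0,1]$ (resp.\ $[-1,1]$) codomain constraint is unproblematic; the missing constants $-1$ and $\sfrac18$ are manufactured as nullary existential functions. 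If you want to keep your $\sumviETR$ starting point, you must explain how to realize the exponential sums without $\SUM$; otherwise, switching to the circuit-based normal form as the paper does is the cleaner route.
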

\begin{proof}
We start with the following equivalences relating the logics:
$$ \LESO[0,1]{+,\times,\leq,0,1} \equiv \LESO[-1,1]{+,\times,\leq,0,1}  \equiv   \LESOforProof . $$ 
The first equivalence has been shown by Hannula et al.~\citet{hannula2020logicsLESOandSNP}.
To see the second one, note that we can replace operator $=$ using $a = b$ as $a\leq b \wedge b \leq a$.
The negative one $-1$ can be defined by a function $-1$ of arity $0$ using
$\exists (-1): (-1) + 1 = 0$. 
Then any subtraction $a - b$ can be replaced with $a + (-1) \times b$.
Finally, the fraction $\sfrac{1}{8}$ is a function given by $\exists \sfrac{1}{8}: \sfrac{1}{8}+\sfrac{1}{8}+\sfrac{1}{8}+\sfrac{1}{8}+\sfrac{1}{8}+\sfrac{1}{8}+\sfrac{1}{8}+\sfrac{1}{8} = 1$. These equivalence reductions can be performed in polynomial time.

In the rest of the proof, we show the hardness, reducing the problems in \succR\ to the existential second order logic $\LESOforProof$. 
To this aim, we use a \succR-complete problem 
which is based on a problem given by Abrahamsen et al.~\citep{abrahamsen2018art}, who have shown that an equation system consisting  of only sentences of the form $x_i = \sfrac{1}{8}$, $x_i+x_j=x_k$, and $x_i \cdot x_j = x_k$ is \existsR-complete. 
As shown in  \citet{zander2023ijcai}, this can be turned into a \succR-complete problem, denoted as $\succETRcR{\sfrac{1}{8}}{-\sfrac{1}{8},{\sfrac{1}{8}}}$, by replacing the explicit indices $i,j,k$ with circuits that compute the indices for an exponential number of these three equations. The circuits can be encoded with arithmetic operators, which allows us to encode all equations in  existential second order logic in a  polynomial time reduction.

%
%
%
%
%
%
The instances of $\succETRcR{\sfrac{1}{8}}{-\sfrac{1}{8},{\sfrac{1}{8}}}$ are 
represented as seven Boolean circuits 
$C_0,C_1,\myldots,C_6: \{0,1\}^M \to \{0,1\}^N$
such that 
$C_0(j)$ gives the index of the variable in the $j$th equation of type $x_i=\sfrac{1}{8}$,
$C_1(j),C_2(j),C_3(j)$ give the indices of variables in the $j$th equation of the type $x_{i_1}+x_{i_2}= x_{i_3}$, and 
$C_4(j),C_5(j),C_6(j)$ give the indices of variables in the $j$th equation of the type $x_{i_1} x_{i_2}= x_{i_3}$.
Without loss of generality, we can assume that each type has the same number $2^M$ of equations. 
An instance of the problem $\succETRcR{\sfrac{1}{8}}{-\sfrac{1}{8},{\sfrac{1}{8}}} $ is satisfiable if and only if: 
\begin{align}\label{eq:sat:cond:etr}
&\exists x_0,\myldots, x_{2^N-1} \in [-\sfrac{1}{8},\sfrac{1}{8}]: \forall j \in [0, 2^M-1]:\nonumber\\
    & \quad \quad \quad x_{C_0(j)} = \sfrac{1}{8}, \ \ 
     x_{C_1(j)} + x_{C_2(j)} = x_{C_3(j)},   \ \ \mbox{and} \ \
     x_{C_4(j)} \cdot  x_{C_5(j)} = x_{C_6(j)}.
\end{align}
Below, we prove that 
\begin{equation}\label{prop:MC:FO:succETR:hard}
  \succETRcR{\sfrac{1}{8}}{-\sfrac{1}{8},{\sfrac{1}{8}}} \lep
  \LESOforProof  .
\end{equation}

Let the instance of $\succETRcR{\sfrac{1}{8}}{-\sfrac{1}{8},{\sfrac{1}{8}}} $ be
represented by seven Boolean circuits $C_0,C_1,\myldots,C_6: \{0,1\}^M \to \{0,1\}^N$ as described above.
Let  the variables of the instance 
be indexed as $x_{e_1,\ldots,e_N}$, with $e_i\in \{0,1\}$ for $i\in[N]$. 
We will identify the bit sequence $\vec{b} = b_1,\myldots, b_M$ by an integer $j$, with $0\le j \le 2^M -1$,
the binary representation of which is $b_1\myldots b_M$ and vice versa. 

We construct sentences in the logic $\LESOforProof$ and prove that 
  a binary model satisfies the sentences if and only if the formula~\eqref{eq:sat:cond:etr}
is satisfiable.


Let $q$ be an $N$-ary function where $q(e_1,\ldots,e_N)$ should encode the value of variable $x_{e_1,\ldots,e_N}$. 
For the range, we require $\forall \vec{x}: 0 - \sfrac{1}{8} \leq q(\vec{x})  \wedge  q(\vec{x}) \leq \sfrac{1}{8} $. 

For each circuit $C_i$, we define a function $y_{i}$ whose value $y_{i}(\vec{b})$ is $x_{C_i(j)}$, i.e.,  $q(C_i(j))$. Then $y_i$ can directly be inserted in the equation system~(\ref{eq:sat:cond:etr}). For this, we need to encode the circuit as logical sentences and relate $y$ and $q$.

To model a Boolean formula encoded by a node of $C_i$, with $i=0,1,\myldots, 6$, we use one step of arithmetization  to go from logical formulas to calculations on real numbers, where $0\in\R$ means false and $1 \in \R$ means true. While negation is not allowed directly in \LESOlabel, on the real numbers we can simulate negation by subtraction.

For each node $v$ of each circuit $C_i$, we need a function $c_{i,v}$ of arity $M$, such that
$c_{i,v}(\vec{b})$  
is the value computed by the node if the circuit is evaluated on input $j=b_1\myldots b_M$.

If $v$ is an input node, the node only reads one bit $u_{i,k}$ from the input, so let 
$\forall\vec{b}:c_{i,v}(\vec{b}) = id(b_{k}) $, where $id$ is a function that maps ${0,1}$ from the finite domain to ${0,1} \in \R$.
 
For each internal node $v$ of $C_i$, we proceed as follows.

If $v$ is labeled with $\neg$ and $u$ is a child of $v$, then we require $\forall \vec{b}: c_{i,v}(\vec{b}) = 1 - c_{i,u}(\vec{b})$. 

If $v$ is labeled with $\wedge$ and $u$ and $w$ are children  
of $v$, then we require $\forall \vec{b}: c_{i,v}(\vec{b}) = c_{i,u}(\vec{b}) \times c_{i,w}(\vec{b})$. 

Finally, if $v$ is labeled with $\vee$ and $u$ and $w$ are children  
of $v$, then we require $\forall \vec{b}: c_{i,v}(\vec{b}) = 1 - (1-c_{i,u}(\vec{b}) )\times (1- c_{i,w}(\vec{b}))$.

Thus, if $v$ is an output node of a circuit $C_i$, then, 
for  $C_i$ fed with input $j=b_{1}\myldots b_{M}\in \{0,1\}^M$, we have that $v$ evaluates to true if and only if
$c_{i,v}(\vec{b})=1$. 
 
%
Next, we need an $(N+M)$-arity selector function $s_i(\vec{b}, \vec{e})$ which returns 1 iff the output of circuit $C_i$ on input $\vec{b}$ is $\vec{e}$.
It can be defined as: %
\[\forall \vec{b}, \vec{e}: \ \ s_i(\vec{b}, \vec{e}) = \prod_{k = 1}^N ( c_{i,v_k}(\vec{b}) \times id(e_k) + (1-c_{i,v_k}(\vec{b}))\times (1-id(e_k))  ). \]
Each factor of the product is 1 iff $c_{i,v_k}(\vec{b}) = e_k$. 
It has constant length, so it can be expanded using the multiplication of the logic.

We express each $q(C_i(j))$ as a function $y_i(j)$, where $\vec{b}$ is the binary representation of $j$:
\[\forall \vec{b}, \vec{e}: \ \ y_i(\vec{b}) \times s_i(\vec{b}, \vec{e}) = q(\vec{e}) \times s_i(\vec{b}, \vec{e}).\]
 
The above equation is trivially satisfied for $s_i(\vec{b}, \vec{e})=0$, thus it enforces equality of $y_i(\vec{b})$ and $q(\vec{e})$ only in the case $s_i(\vec{b}, \vec{e})=1$.
Inserting $y_i$ in the equation system~(\ref{eq:sat:cond:etr}) gives us the last $\LESOlabel$ formula:
\begin{align*} 
   & \forall \vec{b}:\ \   y_0(\vec{b}) = \sfrac{1}{8}, \ \
     y_1(\vec{b}) + y_2(\vec{b}) = y_3(\vec{b}),  \ \
     \mbox{and} \ \ y_4(\vec{b}) \times  y_5(\vec{b}) = y_6(\vec{b}),
\end{align*} 
which, combining with the previous formulas and preceded by second order existential quantifiers $\exists y_i, \exists s_i,  \exists c_{i,v}, \exists id$, with $i=0,\ldots,6$, is satisfiable if and only if the formula~\eqref{eq:sat:cond:etr} are satisfiable.
 
%
%
Obviously, the size of the resulting sentences are polynomial in the size $|C_0|+|C_1|+\myldots +|C_6|$ 
of the input instance and the sentences can be computed in polynomial time. 

This completes the construction of reduction~\eqref{prop:MC:FO:succETR:hard} and the proof of the proposition.
\end{proof}

As \LESO[0,1]{+,\times,\leq,0,1}  is weaker than $\ESOlabel_{\R}(\SUM,+,\times,\leq,<,=,\Q)$, it follows:

\begin{theorem}\label{thm:ESO:main}
Let 
$S = \R$ or $S = [a,b]$ with $[0,1] \subseteq S$,
$\{0,1\} \subseteq C \subseteq \Q$, 
$\{\times\} \subseteq O \subseteq \{+,\times,\SUM\}$ with $|O|\geq 2$,
and $E\subseteq\{\leq,<,=\}$ with $\{\leq,=\}\cap E\neq \emptyset$.
Model checking of
\begin{itemize}
\item \LESOlabel$_S(O,E,C)$ and
\item $\ESOlabel_S(O,E,C)$
\end{itemize}
is \succR-complete.
\end{theorem}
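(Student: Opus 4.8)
The plan is to sandwich every logic in the stated family between the two extremes already analyzed, namely $\LESO[0,1]{+,\times,\leq,0,1}$ from Proposition~\ref{prop:MC:ESO:succETR:hard} and $\ESOlabel_{\R}(\SUM,+,\times,\leq,<,=,\Q)$ from Lemma~\ref{lem:ESO:in:succETR}, and to conclude $\succR$-completeness by monotonicity of model-checking complexity in the operator/constant/range parameters. First I would argue the upper bound: for any $S$, $O$, $E$, $C$ as in the hypothesis we have $\ESOlabel_S(O,E,C)$ is syntactically a fragment of $\ESOlabel_{\R}(\SUM,+,\times,\leq,<,=,\Q)$ once we observe that (i) $O\subseteq\{+,\times,\SUM\}$, (ii) $C\subseteq\Q$, (iii) $E\subseteq\{\leq,<,=\}$, and (iv) functions into $S\subseteq\R$ are a special case of functions into $\R$ — the model-checking instance is literally the same object, so membership in $\succR$ follows from Lemma~\ref{lem:ESO:in:succETR}. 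The loose fragment $\LESOlabel_S(O,E,C)$ only further restricts the syntax (no negated term (in)equalities), so it is also in $\succR$.

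For the lower bound I would start from Proposition~\ref{prop:MC:ESO:succETR:hard}, which gives $\succR$-hardness of model checking for $\LESO[0,1]{+,\times,\leq,0,1}$, and then show this logic reduces in polynomial time to each $\LESOlabel_S(O,E,C)$ and each $\ESOlabel_S(O,E,C)$ in the family. The hypotheses are tailored so that the ingredients of the proof of Proposition~\ref{prop:MC:ESO:succETR:hard} survive: multiplication is always available ($\{\times\}\subseteq O$); the constants $0,1$ are available ($\{0,1\}\subseteq C$); and at least one of $\leq$ or $=$ is available, which suffices since $a=b$ is expressible as $a\leq b\wedge b\leq a$ and, conversely, when only $=$ is present one can still carry out the arithmetization because all the defining sentences in that proof are in fact equalities between real terms (the only genuine inequality is the range constraint $0-\sfrac18\leq q(\vec x)\leq\sfrac18$, which, as in the proof, can be replaced by picking the range $S=[-\sfrac18,\sfrac18]$ or more simply by encoding the bound using the available constants, and when $S=[0,1]\subseteq S$ the relevant interval membership is enforced by the structure of $S$). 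Since $|O|\geq 2$ and $\times\in O$, we have $+\in O$ or $\SUM\in O$; in the first case the reduction of Proposition~\ref{prop:MC:ESO:succETR:hard} applies verbatim (it only uses $+$ and $\times$), and in the second case I would note that $+$ is simulable by a binary $\SUM$ over a two-element index, or alternatively restrict attention to the case $+\in O$ which the hypothesis $|O|\geq 2$ together with $\times\in O$ does not force — so I would handle $O=\{\times,\SUM\}$ by observing that $\sum_{x\in\{0,1\}} t$ with $t$ depending on $x$ realizes addition of two terms. Finally, for constants in $C$ strictly larger than $\{0,1\}$ and for ranges $S$ larger than $[0,1]$ or $[-\sfrac18,\sfrac18]$, the reduction target is only more expressive, so hardness is inherited.

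The main obstacle I anticipate is the case analysis around the weakest allowed operator/relation sets — specifically confirming that the construction in Proposition~\ref{prop:MC:ESO:succETR:hard} goes through when $E=\{=\}$ (no $\leq$) and when $O=\{\times,\SUM\}$ (no explicit $+$). For the former, the point to verify carefully is that the range constraint on $q$, which is the one place an inequality is used, can be dispensed with by choosing the range set $S$ to already be bounded (the theorem explicitly allows $S=[a,b]$ with $[0,1]\subseteq S$, and the $\sfrac18$-scaling in that proof can be rescaled to fit any such interval), or by noting that $\{\leq,=\}\cap E\neq\emptyset$ is a hypothesis so this subcase with $E=\{=\}$ is in fact permitted and must be argued. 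For the latter, I would spell out the two-element summation trick once and then treat $+$ as available throughout. Everything else — the arithmetization of circuit gates, the selector functions $s_i$, the indexing equations $y_i(\vec b)\times s_i(\vec b,\vec e)=q(\vec e)\times s_i(\vec b,\vec e)$, and the final equation system — is copied unchanged from Proposition~\ref{prop:MC:ESO:succETR:hard}, so no new calculation is needed; the work is purely in checking that each fragment in the stated range is squeezed between the two endpoints, and invoking Lemma~\ref{lem:ESO:in:succETR} and Proposition~\ref{prop:MC:ESO:succETR:hard}.
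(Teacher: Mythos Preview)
Your sandwich strategy and case decomposition match the paper's, but your resolution of the two edge cases you yourself flag as the ``main obstacle'' has genuine gaps.

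For $E=\{=\}$: you propose to dispense with the one inequality in Proposition~\ref{prop:MC:ESO:succETR:hard} (the range constraint $-\sfrac18\leq q(\vec x)\leq\sfrac18$) by relying on $S$, but $S$ is a hypothesis of the theorem, not something you may choose in the reduction. For $S=\R$ there is no range enforcement whatsoever, and for a general $S=[a,b]\supseteq[0,1]$ the interval need not confine $q$ to $[-\sfrac18,\sfrac18]$; dropping the constraint changes the problem you are reducing from. The paper instead eliminates $\leq$ \emph{before} leaving the $[0,1]$ world: it proves the equivalence $\LESO[0,1]{+,\times,\leq,0,1}\equiv\LESO[0,1]{+,\times,=,0,1}$ via an explicit encoding $a\leq b \;\Leftrightarrow\; \exists \epsilon,x:\, a\epsilon+x=b\epsilon$ with $\epsilon,x$ nullary functions into $[0,1]$, and remarks that the textbook trick $\exists x:\, a+x^2=b$ fails precisely because $x$ is range-restricted. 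Only after $\leq$ has been removed does it embed into $\LESOlabel_S(O,E,C)$ (Fact~\ref{fact:leso:to:eso}).

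For $O=\{\times,\SUM\}$ with bounded $S$: simulating $x+y$ by $\sum_t c(t)$ requires a function $c$ with $c(0)=x$ and $c(1)=y$, but $x,y$ are term values that may lie outside $S=[a,b]$, so no such $c$ into $S$ need exist. The paper handles this with a scaling argument (bounding the magnitude of every intermediate term by the maximal arity and degree, then rescaling so everything fits into $S$), which you do not supply.

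A smaller point on the upper bound: model checking for $\ESOlabel_S$ is not ``literally the same object'' as for $\ESOlabel_\R$, because the semantics of $\exists f$ restrict $f$ to map into $S$; the same pair $(\fA,\phi)$ can have different answers in the two logics. The fix is easy (the $\realNEXP$ machine of Lemma~\ref{lem:ESO:in:succETR} additionally checks that each guessed value lies in $S$), but the reason you give is incorrect.
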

\begin{proof}
We start with the following equivalence, which follows from the fact that a comparison $a \leq b$ can be replaced by $\exists \epsilon,x: a \epsilon + x = b \epsilon$:
\begin{equation}\label{fact:<=:to:=}
\LESO[0,1]{+,\times,\leq,0,1} \equiv \LESO[0,1]{+,\times,=,0,1}.
\end{equation}

The next fact has been used by \cite{hannula2020logicsLESOandSNP}, but without proof. Perhaps the authors thought it to be too trivial to mention. But it is not obvious, since the standard technique of replacing $a \leq b$ with $\exists x: a + x^2 = b$ does not work here when $x$ is restricted to $[0,1]$.

In some sense, $\LESO[0,1]{+,\times,\leq,0,1}$ is the weakest logic one can consider in this context:

\begin{fact}\label{fact:leso:to:eso}
Let 
$S = \R$ or $S = [a,b]$ with $[0,1] \subseteq S$,
$\{0,1\} \subseteq C \subseteq \Q$, 
$\{\times\} \subseteq O \subseteq \{+,\times,\SUM\}$ with $|O|\geq 2$,
and $E \in \{\leq, =\}$.

$\LESO[0,1]{+,\times,\leq,0,1} \leq \LESOlabel_S(O,E,C) \leq \ESOlabel_S(O,E,C)$.
\end{fact}
\begin{proof}[Proof of Fact~\ref{fact:leso:to:eso}]
Relation $=$ subsumes $\leq$ due to~\eqref{fact:<=:to:=}.

If $+\in O$, the remaining statements are trivial. Otherwise, we need to express $+$ using $\SUM$.

If $S = \R$, $x + y$ can be written as $\SUM_{t} c(t)$ where $c(0) = x, c(1) = y$. (we consider model checking problems, where the finite domain can be set to binary)

If $S = [a,b]$, $x$ or $y$ might be outside the range. But the total weight of any $k$-arity function is $(b-a)^k$ and each term has a maximal polynomial degree $D$, so $x$ and $y$ are bounded by $O((b-a)^{kD})$. So all expressions can be scaled to fit in the range (Lemma 6.4. Step 3 proves this for functions that are probability distributions in \citet{hannula2020logicsLESOandSNP}).
\end{proof}
All of this combined shows the theorem. 
\end{proof}

Hannula et.~al \citep{hannula2020logicsLESOandSNP} and Durand et.~al \cite{durand2018probabilistic}
 have shown the following relationships between expressivity of the logics: $ \LESO[0,1]{+,\times,=,0,1} \le
 \LESOlabel_{d[0,1]}(\SUM,\times,=) \equiv \FOindependent{}$. 
$\LESOlabel_{d[0,1]}[O, E,C]$ means a variant of $\LESO[0,1]{O,E,C}$ where all functions are required to be distributions, 
that is 
$f^{\fA}: A^{ar(f)}\to [0,1]$ and $\sum_{\vec{a}\in A^{ar(f)}} f^{\fA}(\vec{a})=1.$
From the proof for the translation from \LESOlabel{} to  \FOindependent{}  in 
\cite{durand2018probabilistic}, it follows that the reduction 
can be done in polynomial time. Moreover, it is easy to see that model checking of 
 \FOindependent{}  can be done in \realNEXP.
 Thus we get

\begin{corollary}
Model checking of \FOindependent{} is \succR-complete.
\end{corollary}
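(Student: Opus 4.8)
The plan is to establish the two bounds separately and then identify $\realNEXP$ with $\succR$ via Lemma~\ref{lem:realNEXPsuccR}.

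For the upper bound I would argue exactly as in the proof of Lemma~\ref{lem:ESO:in:succETR}. Given a finite structure $\fA$ with domain $A$ and an $\FOindependent$ sentence $\phi$, a $\realNEXP$ machine represents every relation/function of arity $k$ as a table of size $|A|^k$; the tables occurring in $\fA$ are read from the input, and any table introduced by an existential second-order quantifier (in the translation, or implicitly for the distribution witnessing an independence atom) is guessed in nondeterministic exponential time. The machine then enumerates all assignments to the first-order quantifiers and evaluates $\phi$. The only point beyond Lemma~\ref{lem:ESO:in:succETR} is the evaluation of a probabilistic independence atom $\vec{x}\indep_{\vec{z}}\vec{y}$: with respect to the (guessed or input-given) distribution, this amounts to checking a number of multiplicative identities among table entries that is polynomial in the table size, hence exponential in $|\phi|+|\fA|$, which the real RAM performs directly. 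Thus model checking of $\FOindependent$ lies in $\realNEXP = \succR$.

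For the lower bound I would chain the known translations. Theorem~\ref{thm:ESO:main} gives that model checking of $\LESO[0,1]{+,\times,\leq,0,1}$ is $\succR$-complete, in particular $\succR$-hard. Using the equivalence $\LESO[0,1]{+,\times,\leq,0,1} \equiv \LESO[0,1]{+,\times,=,0,1}$ recorded as Equation~\eqref{fact:<=:to:=}, together with the translation $\LESO[0,1]{+,\times,=,0,1} \le \LESOlabel_{d[0,1]}(\SUM,\times,=)$ of Hannula et al.\ and the equivalence $\LESOlabel_{d[0,1]}(\SUM,\times,=) \equiv \FOindependent$ of Durand et al., one obtains a many-one reduction from a $\succR$-hard problem to model checking of $\FOindependent$. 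The key thing to verify is that each step of this chain is a polynomial-time many-one reduction; for the two equivalences this is routine, and for the $\LESOlabel_{d[0,1]} \to \FOindependent$ step it must be extracted from the proof in Durand et al., where the construction is syntactic and therefore polynomial. Combining the two bounds yields $\succR$-completeness.

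The main obstacle I anticipate is bookkeeping in the hardness chain rather than any genuine mathematical difficulty: one must check that the Hannula et al.\ translation enforcing the distribution restriction ($d[0,1]$) is length-nonincreasing up to a polynomial, and that the $\FOindependent$-encoding of the resulting distribution-valued logic preserves polynomial size. Both require care because the logics are defined over finite structures whose domain is part of the input, so size bounds must be tracked relative to $|A|$ as well as $|\phi|$.
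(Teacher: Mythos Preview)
Your proposal is correct and follows essentially the same route as the paper. The paper's argument is terser: for the upper bound it simply asserts that model checking of $\FOindependent$ is ``easy to see'' in $\realNEXP$, and for the lower bound it invokes exactly the chain $\LESO[0,1]{+,\times,=,0,1} \le \LESOlabel_{d[0,1]}(\SUM,\times,=) \equiv \FOindependent$ from Hannula et al.\ and Durand et al., noting (as you do) that the Durand et al.\ translation is polynomial-time; your write-up just fills in more detail on both sides, including the bookkeeping concerns you flag at the end.
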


This corollary answers the question asked in \citep{hannula2023logics} for the exact complexity of $\FOindependent$ and confirms their result that the complexity lies between $\NEXP$ and $\EXPSPACE$.

\section{Succinct \texorpdfstring{$\ETR$}{ETR} of Polynomially Many Variables}\label{sec:poly:many:variables}
The key feature that makes the language $\sumviETR$ defined in \cite{zander2023ijcai}
very powerful is the ability to index the quantified variables in the scope of summation. 
Nesting the summations 
allows handling an exponential number of variables.
Thus, similarly as in $\succETR$, sentences of $\sumviETR$ allow the use of exponentially many variables, 
however, the formulas are given directly and do not require  
any succinct encoding.
Due to the fact that variable indexing is possible,
\cite{zander2023ijcai} show that $\sumviETR$ is
polynomial time equivalent to $\succETR$.

Valiant's class $\VNP$ \citep{burgisser2000completeness,mahajan2014algebraic} is also defined in
terms of exponential sums (for completeness, we recall the definition of  $\VNP$ and related concepts in Appendix~\ref{sec:appen:alg:classes}). 
However, we cannot index variables
as above, therefore, the overall number of variables
is always bounded by the length of the defining expression.
It is natural to extend $\ETR$ with a summation operator, 
but without variable indexing as was allowed in  $\sumviETR$. 
In this way, we can have exponential sums, but the number of variables is bounded 
by the length of the formula. Instead of a summation operator, we can also
add a product operator, or both.

\begin{definition} 
\begin{enumerate}
\item $\sumETRpoly$ is defined as $\ETR$ with the addition of a unary summation operator $\sum_{x_i = 0}^1$.
\item $\prodETRpoly$ is defined similar to $\sumETRpoly$, but with the addition of a unary product operator 
    $\prod_{x_i = 0}^1$ instead. 
\item $\sumprodETRpoly$ is defined similar to $\sumETRpoly$ or $\prodETRpoly$, but including both unary summation and product operators.
\end{enumerate}
\end{definition}

 
%

In the three problems above, the number of variables is naturally
bounded by the length of the instance, since the problems are not 
succinct. 
For example, the formula $\sum_{x_1=0}^1 \sum_{x_2=0}^1 (x_1+ x_2)  (x_1+(1-x_2)) (1-x_1) = 0$ explained in the introduction is also in $\sumETRpoly$ and $\sumprodETRpoly$, but not in $\prodETRpoly$.
The formula $\sum_{e_1=0}^1 \ldots \sum_{e_N=0}^1 (x_{\langle e_1,\ldots, e_N\rangle})^2=1$ is in neither of these three classes since it uses variable indexing.

To demonstrate the expressiveness of \prodETRpoly{}, we will show that the $\PSPACE$-complete problem $\QBF$ can be reduced to it.

\begin{lemma} \label{lem:succETRpoly:1}
    $\QBF \leqp \prodETRpoly$.
\end{lemma}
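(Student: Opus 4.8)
The plan is to encode a $\QBF$ instance $\Psi = Q_1 y_1 Q_2 y_2 \cdots Q_n y_n\, \phi(y_1,\ldots,y_n)$, where $\phi$ is a quantifier-free Boolean formula, as a $\prodETRpoly$ sentence. The natural idea is to arithmetize: represent Boolean values by $0$ and $1$, replace $\neg a$ by $1-a$, $a\wedge b$ by $a\cdot b$, and $a\vee b$ by $1-(1-a)(1-b)$, so that $\phi$ becomes an arithmetic term $t_\phi(y_1,\ldots,y_n)$ taking value $1$ on satisfying assignments and $0$ otherwise. An existential Boolean quantifier $\exists y_i$ should become ``$\sum_{y_i=0}^1$ followed by a test that the total is $\ge 1$'', and a universal Boolean quantifier $\forall y_i$ should become ``$\prod_{y_i=0}^1$'' — but the subtlety is that $\prodETRpoly$ only has a product operator, not a summation operator, so I cannot directly use $\sum$ for the existential quantifiers. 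The standard trick here is De Morgan: $\exists y_i\, \chi \equiv \neg \forall y_i\, \neg \chi$, and on the arithmetic side $1 - \prod_{y_i=0}^1 (1 - u)$ computes the ``or'' over $y_i\in\{0,1\}$ of the value $u$. So I will define, by induction from the innermost quantifier outward, arithmetic terms: $t_n = t_\phi$ (after arithmetizing $\phi$), and for $i=n,n-1,\ldots,1$, if $Q_i=\forall$ set $t_{i-1} = \prod_{y_i=0}^1 t_i$, and if $Q_i=\exists$ set $t_{i-1} = 1 - \prod_{y_i=0}^1 (1 - t_i)$. Then $\Psi$ is true iff $t_0 = 1$, and the final $\prodETRpoly$ sentence is just ``$t_0 \ge 1$'' (or $t_0 = 1$, using that each $t_i\in\{0,1\}$ by an easy induction).

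The key steps, in order: first, state the arithmetization of $\phi$ into a term $t_\phi$ of size polynomial in $|\phi|$, noting the value is exactly the Boolean value of $\phi$ when the $y_i$ are set to $0/1$. Second, observe that $\prodETRpoly$ permits only the variables occurring in the formula — which is fine here, since the variables are $y_1,\ldots,y_n$, linear in the input size; each $y_i$ becomes a summation/product index variable and there is no variable indexing. Third, carry out the inductive construction of $t_0$ from $t_\phi$ via nested $\prod_{y_i=0}^1$ operators (one per quantifier), interspersed with the constant-size $1-(\cdot)$ wrapper for existential quantifiers; observe the total size is polynomial. Fourth, prove the correctness claim by induction on $i$ from $n$ down to $0$: for every assignment to $y_1,\ldots,y_i$, the value $\llbracket t_i\rrbracket$ equals the truth value of $Q_{i+1} y_{i+1}\cdots Q_n y_n\, \phi$ under that partial assignment — the base case is the arithmetization lemma, and the inductive step uses $\prod_{y=0}^1 b_y = b_0 b_1 = (b_0 \wedge b_1)$ for $b_y\in\{0,1\}$ (handling $\forall$) and $1 - \prod_{y=0}^1 (1-b_y) = 1-(1-b_0)(1-b_1) = b_0\vee b_1$ (handling $\exists$). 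Fifth, conclude that $\Psi\in\QBF$ iff the sentence ``$t_0\ge 1$'' is a true $\prodETRpoly$ sentence, and the whole reduction is computable in polynomial time.

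I do not anticipate a serious obstacle here — this is essentially the textbook arithmetization of $\QBF$, adapted to use products-of-complements in place of sums because the logic lacks a $\Sigma$ operator. The only point that needs a little care is making sure the $\prodETRpoly$ syntax genuinely accommodates what I want: the product operator $\prod_{x_i=0}^1$ binds a variable that ranges over $\{0,1\}$, and nesting $n$ of them (each wrapping a polynomial-size arithmetic term) yields a legal $\prodETRpoly$ term without any need for succinct encoding or variable indexing. It is also worth remarking explicitly that this construction stays inside $\prodETRpoly$ (no summation needed), which is the point the lemma is demonstrating: even with only a product operator, $\ETR$ already captures $\PSPACE$-hardness. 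The matching upper bound ($\prodETRpoly$, indeed $\sumprodETRpoly$, is in $\PSPACE$) is presumably handled separately in the section; here I only need the hardness reduction, so the proof ends once correctness and the polynomial time bound are established.
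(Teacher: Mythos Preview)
Your proposal is correct and essentially identical to the paper's proof: both arithmetize the matrix $\phi$ via $\neg a \mapsto 1-a$, $a\wedge b \mapsto ab$, $a\vee b \mapsto 1-(1-a)(1-b)$, and handle the quantifiers by setting $\forall y_i \mapsto \prod_{y_i=0}^1(\cdot)$ and $\exists y_i \mapsto 1-\prod_{y_i=0}^1(1-\cdot)$ via De Morgan, then check whether the resulting term equals $1$. The only cosmetic difference is that the paper writes the final test as ``$=1$'' whereas you write ``$\ge 1$'', which is immaterial since the value is always in $\{0,1\}$.
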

\begin{proof}
    Let $Q_1x_1Q_2x_2 \dots Q_nx_n \varphi(x_1, \ldots, x_n)$ be a quantified Boolean formula with $Q_1, \ldots, Q_n \in \{\exists, \forall\}$.
    We arithmetize $\varphi$ as $A(\varphi)$ inductively using the following rules:
    \begin{description}
        \item[$\varphi$ is a variable $x_i$:] We construct $A(\varphi) = x_i$.
        \item[$\varphi$ is $\lnot \varphi_1$:] We construct $A(\varphi)$ as $1 - A(\varphi_1)$.
        \item[$\varphi$ is $\varphi_1 \land \varphi_2$:] We construct $A(\varphi)$ as $A(\varphi_1) \cdot A(\varphi_2)$.
        \item[$\varphi$ is $\varphi_1 \lor \varphi_2$:] We construct $A(\varphi)$ as $1 - (1 - A(\varphi_1)) \cdot (1 - A(\varphi_2))$ via De Morgan's law and the previous two cases.
    \end{description}
    The special treatment of the $\lor$ operator ensures that whenever $x_1, \ldots, x_n \in \{0, 1\}$, then $A(\varphi)$ evaluates to $1$ iff $x_1, \ldots, x_n$ satisfy $\varphi$ and $0$ otherwise.
    We then arithmetize the quantifiers $Q_1, \ldots, Q_n$ in a similar way, but using the unary product operator.
    \begin{description}
        \item[$Q_i = \forall$:] We construct $A(\forall x_iQ_{i+1}x_{i+1} \dots Q_n x_n \varphi)$ as $\prod_{x_i = 0}^1 A(Q_{i+1}x_{i+1} \dots Q_n x_n \varphi)$
        \item[$Q_i = \exists$:] We construct $A(\exists x_iQ_{i+1}x_{i+1} \dots Q_n x_n \varphi(x_1, \ldots, x_n)$ as\\ $1 - \prod_{x_i=0}^1 (1 - A(Q_{i+1}x_{i+1} \dots Q_n x_n \varphi))$, again using De Morgan's law.
    \end{description}

    The final $\prodETRpoly$ formula is then just $A(Q_1x_1Q_2x_2 \dots Q_nx_n \varphi) = 1$.
        
    The correctness of the construction follows because 
    a formula of the form $\forall \psi(x)$ is true over the 
    Boolean domain iff $\psi(0) \wedge \psi(1)$ is true.
    The unary product together with arithmetization allows us to write 
    the whole formula down 
    without an exponential blow-up.
\end{proof}

We 
also consider the succinct version 
of $\ETR$ with only a polynomial number of variables.

\begin{definition}
    $\succETRpoly$ is defined similar to $\succETR$, but variables are encoded in unary instead of binary, 
    thus limiting the amount of variables to a polynomial amount of variables.
    (Note that the given input circuit succinctly encodes an ETR formula and not an arbitrary circuit.)
\end{definition}

For $\succETR$, it does not matter whether the underlying structure
of the given instance is a formula or an arbitrary circuit,
since we can transform the circuit into a formula using Tseitin's trick.
This, however, requires a number of new variables 
that is proportional to the size of the circuit, which
is exponential.

Like for $\ETR$, we can now define corresponding classes
by taking the closure of the problems defined above. 
It turns out that we get meaningful classes in this way, however,
for some unexpected reason. Except for 
$\sumETRpoly$, all classes coincide with $\PSPACE$, which we will see below. 
By restricting the number of variables to be 
polynomial, the complexity of $\succETR$ reduces 
considerably, from being $\realNEXP$-complete, which contains $\NEXP$,
to $\PSPACE$. On the other hand,
the problems are most likely more powerful than $\ETR$,
assuming that $\existsR$ is a proper subset of $\PSPACE$,
which is believed by at least some researchers.

\begin{definition}
Let $\succRpoly$ be the closure of $\succETRpoly$
under polynomial time many one reductions.
\end{definition}


\begin{theorem}\label{thm:main:pspace:succETRpoly}
$\PSPACE = \succRpoly$ and the problems $\prodETRpoly$, $\sumprodETRpoly$, and $\succETRpoly$ are $\PSPACE$-complete.
\end{theorem}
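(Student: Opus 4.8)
The plan is to establish the chain of inclusions
$\ETR \leq_P \sumETRpoly^{\Pi} \leq_P \sumprodETRpoly \leq_P \succETRpoly \leq_P \QBF$
together with $\QBF \leq_P \prodETRpoly$ (already proved in Lemma~\ref{lem:succETRpoly:1}), which simultaneously shows all four problems are interreducible with $\QBF$ and hence $\PSPACE$-complete, giving $\PSPACE = \succRpoly$ as well. The two endpoints are the content: the upper bound $\succETRpoly \in \PSPACE$, and the (already established) lower bound via $\QBF \leq_P \prodETRpoly \leq_P \succETRpoly$ and $\QBF \leq_P \sumprodETRpoly$. For the latter it suffices to note $\prodETRpoly \leq_P \sumprodETRpoly$ trivially (a product operator is available in $\sumprodETRpoly$), and $\prodETRpoly \leq_P \succETRpoly$ because a $\prodETRpoly$ formula of polynomial size is in particular a succinct encoding of an $\ETR$ formula with polynomially many (explicitly listed) variables once we expand each $\prod_{x_i=0}^1 t$ into $t[0/x_i]\cdot t[1/x_i]$; this at most doubles per product, but nesting $k$ products blows up by $2^k$, so we instead keep the product operators implicit and observe that the succinct circuit describing the (exponential-size) expanded formula tree can be computed in polynomial time from the $\prodETRpoly$ instance — each node of the expanded tree is indexed by which branch of each enclosing product was taken, and its label, parent, and children are computable by a small circuit. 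The same argument shows $\sumprodETRpoly \leq_P \succETRpoly$: sums and products both expand into binary trees of bounded depth-per-operator whose nodes are polynomially describable.

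The main work is $\succETRpoly \in \PSPACE$. Here I would use the fact that $\ETR \in \PSPACE$ via Canny's decision procedure (or the Grigoriev–Vorobjov / Renegar machinery), which runs in polynomial space in the bitsize of the $\ETR$ instance and, crucially, in \emph{polylogarithmic} space if given the instance by a circuit of logarithmic-space-computable bits — but more robustly: a $\succETRpoly$ instance describes an $\ETR$ formula $\varphi$ of size $2^{\poly(n)}$ but with only $\poly(n)$ distinct variables (because variables are encoded in unary). The quantifier-free formula $\varphi$, though exponentially large as a tree, can be evaluated at any given real assignment to its $\poly(n)$ variables by a depth-first traversal that uses only polynomial space (the tree has exponential size but we never store it — at each node we query the circuit to get label/children, and the recursion stack has depth equal to the tree depth, which is $2^{\poly(n)}$; this is the subtlety). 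To control the stack depth I would first apply a balancing/normalization step, analogous to Lemma~\ref{lem:removeneg}, to bring the formula to depth $\poly(n)$, or alternatively observe that any arithmetic/Boolean expression over $\poly(n)$ variables can be re-expressed (by introducing $\poly(n)$ auxiliary quantified variables — affordable since variables are cheap here, they are just unary-indexed) as one of polynomial \emph{formula depth}; then the traversal uses $\poly(n)$ space. Having reduced to a polynomial-size quantifier-free real formula over $\poly(n)$ variables with possibly exponentially many constraints that are each succinctly described, I invoke the $\PSPACE$ algorithm for the existential theory of the reals on this polynomially-described-but-exponentially-long conjunction, using the circuit to generate constraints on the fly.

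The hard part will be the depth-control / normalization step for $\succETRpoly$: the input circuit may describe a formula tree of exponential depth (e.g.\ a long chain of additions), and we must show that such an instance is polynomial-time reducible to an equivalent one of polynomial depth, \emph{using only polynomially many new variables}. The key observation is that the circuit $C:\{0,1\}^{\poly(n)}\to\{0,1\}^*$ computing node descriptions can itself be simulated by an $\ETR$ computation of polynomial size (arithmetize the circuit as in Proposition~\ref{prop:MC:ESO:succETR:hard}), so instead of unfolding the tree we encode "the value flowing out of node $v$" as a function of $v$'s binary address, realize this function by a polynomial-size arithmetic circuit over $\poly(n)$ real variables with $\poly(n)$ real auxiliary variables capturing intermediate node-values — but since node-values range over exponentially many nodes, we cannot have one variable per node. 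The resolution is that we do not need them explicitly: we only need to decide a $\succETRpoly$ sentence, so we reduce it \emph{to $\QBF$ directly} by the standard "succinct $\ETR$ with few variables" trick — nondeterministically guessing, over $\PSPACE$, a symbolic description and verifying, OR more cleanly, we reduce $\succETRpoly$ to $\prodETRpoly$ (closing the circle) by expanding the exponentially-deep formula into a product of clauses indexed by node addresses, evaluated via arithmetization. I expect the cleanest write-up routes through $\QBF$: show $\succETRpoly \leq_P \QBF$ by a Shamir-style argument — the exponential-depth arithmetic expression over $\poly(n)$ real variables is evaluated by an alternating polynomial-space procedure, and since the only nondeterminism needed over the reals is existential and bounded-precision (invoking the Grigoriev–Vorobjov bound so that solutions, if they exist, have bounded bitsize and can be "guessed bit by bit" inside $\PSPACE$), this collapses to ordinary $\PSPACE = \QBF$. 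Combining $\QBF \leq_P \prodETRpoly \leq_P \sumprodETRpoly \leq_P \succETRpoly \leq_P \QBF$ closes all inclusions.
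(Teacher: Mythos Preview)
Your hardness direction is fine and matches the paper: $\QBF \leqp \prodETRpoly$ from Lemma~\ref{lem:succETRpoly:1}, then $\prodETRpoly \leqp \sumprodETRpoly$ trivially, and $\sumprodETRpoly \leqp \succETRpoly$ by indexing nodes of the expanded tree with the branch choices of the enclosing $\Sigma/\Pi$ operators and having the circuit read off label, parent, children from that index.

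The gap is in your upper bound $\succETRpoly \in \PSPACE$. You correctly identify the obstacle (the succinct formula tree may have exponential depth, so a DFS evaluation stack is not polynomial), but your proposed resolutions do not work. The depth-reduction idea of ``introducing $\poly(n)$ auxiliary quantified variables'' is exactly what $\succETRpoly$ forbids: Tseitin-style balancing would need one fresh variable per internal node, i.e.\ exponentially many. Your fallback, reducing to $\QBF$ by invoking the Grigoriev--Vorobjov magnitude bound and then ``guessing the solution bit by bit'', is a genuine error: the GV bound controls the \emph{norm} of a witness, not its bit complexity. A satisfiable $\ETR$ instance can have only irrational algebraic witnesses, and the isolating precision required to distinguish a true root from a near-miss can be doubly exponential in the input size; a finite bitstring guess cannot certify an equality constraint $p(x)=0$.

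The paper avoids all of this by using Renegar's theorem (Lemma~\ref{lem:renegar}) in its precise quantitative form: the parallel running time is polynomial in $\log L$, $n$, $\log m$, $\log d$, with processor count $L^2(md)^{O(n)}$. In $\succETRpoly$ the number of variables $n$ is polynomial while $m$, $d$, $L$ are at most $2^{\poly}$, so this yields parallel polynomial time, hence $\PSPACE$. The exponential-depth issue you worried about is handled in the parallel model by expanding the succinct formula with $2^N$ processors and then applying standard parallel tree contraction to evaluate each arithmetic subtree as an explicit coefficient list over the $\poly(N)$ variables; Renegar's algorithm is then called on those polynomials, with its black-box evaluations of the Boolean part also done by parallel expression evaluation. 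The point you were missing is that Renegar's complexity is singly exponential \emph{only in the number of variables}, and merely polynomial in the logarithms of everything else, which is exactly the regime $\succETRpoly$ lands in.
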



\begin{proofidea}To show that $\succETRpoly$ is in $\PSPACE$,
we rely on results by \cite{renegar1992computational}.
One of the famous consequences of Renegar's work is
that $\ETR \in \PSPACE$. But Renegar shows even more,
because he can handle an exponential number of arithmetic
terms of exponential size with exponential degree as long as the number of variables is polynomially bounded. For the completeness of  
 $\prodETRpoly$, it turns out that an unbounded product is able to
 simulate an arbitrary number of Boolean quantifier alternations,
 in constrast to an unbounded sum.
So, as shown in Lemma~\ref{lem:succETRpoly:1}, we can reduce $\QBF$ to it.\end{proofidea}

\section{\texorpdfstring{$\ETR$}{ETR} with the Standard Summation Operator}\label{sec:etr:sum}

In the previous section, we have seen that $\ETR$ with 
a unary product operator ($\prodETRpoly$) is $\PSPACE$-complete.
Moreover, allowing both  unary summation and product operators 
does not lead to an increase in complexity.
In this section, we  investigate the complexity of $\sumETRpoly$,
$\ETR$ with only unary summation operators.

\begin{definition}
Let $\existsR^\Sigma$ be the closure of $\sumETRpoly$
under polynomial time many one reductions.
Moreover, for completeness, let $\existsR^\Pi$ be the closure of $\prodETRpoly$
under polynomial time many one reductions.
\end{definition}

\subsection{Machine Characterization of \texorpdfstring{$\existsR^\Sigma$}{∃ℝ\^{}Σ}}\label{sec:etr:sum:machine:char}
By Theorem~\ref{thm:main:pspace:succETRpoly},  we have $\existsR^\Pi =  \PSPACE$.
For $\existsR^\Sigma$, 
we can  conclude: 
$\realNP = \existsR \subseteq  \existsR^\Sigma \subseteq \PSPACE$
from Lemmas~\ref{lem:succETRpoly:3} and \ref{lem:succETRpoly:4}.
We conjecture that all inclusions are strict. In this section, we will provide 
some arguments in favor of this.

We first observe that using summations we can quite easily
solve $\ccPP$-problems. In particular, we have:
\begin{lemma}
    \label{lem:npPP2existRsum}
  $\NP^{\ccPP} \subseteq \existsR^\Sigma$ .
\end{lemma}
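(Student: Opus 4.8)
The plan is to show $\NP^{\ccPP} \subseteq \existsR^\Sigma$ by encoding an $\NP^{\ccPP}$ computation directly as a $\sumETRpoly$ formula. Recall that $\ccPP$ (and even $\#\mathsf{P}$) is fundamentally about \emph{counting} satisfying assignments of a Boolean formula, which is precisely what an exponential sum over $\{0,1\}$-valued variables computes. So first I would handle a single $\ccPP$ (or $\#\mathsf{P}$) query: given a Boolean formula $\psi(y_1,\dots,y_m)$ and a threshold $k$, the quantity $\#\{\,\vec y \in \{0,1\}^m : \psi(\vec y)\,\}$ equals $\sum_{y_1=0}^1 \cdots \sum_{y_m=0}^1 A(\psi)(\vec y)$, where $A(\psi)$ is the arithmetization of $\psi$ exactly as in the proof of Lemma~\ref{lem:succETRpoly:1} (variables map to themselves, $\neg$ to $1-(\cdot)$, $\wedge$ to product, $\vee$ via De Morgan). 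Over the Boolean cube this evaluates to $1$ on satisfying assignments and $0$ otherwise, so the nested sum is exactly the count. A single nested sum of $m$ summation operators has polynomial size, and a $\ccPP$-query ``is the count $\geq k$?'' becomes the $\ETR^\Sigma$-atom $\sum_{y_1=0}^1\cdots\sum_{y_m=0}^1 A(\psi)(\vec y) \geq k$, with the integer constant $k$ encoded in the usual way.

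Next I would handle the outer $\NP$ machine together with \emph{multiple} oracle queries. The $\NP$ part is straightforward: the nondeterministic guesses of the base machine become existentially quantified real variables $z_1,\dots,z_t$, constrained to be Boolean by $z_i(1-z_i)=0$ (or $z_i^2 = z_i$), and the polynomial-time verifier's Boolean circuit on these guesses is arithmetized as above. The subtlety is that a polynomial-time oracle machine may ask polynomially many $\ccPP$-queries, and \emph{each query's formula may depend on the answers to previous queries}. Here I would use the standard trick: instead of threading answers, guess all the answers up front. That is, existentially guess Boolean values $a_1,\dots,a_p$ claimed to be the truthfulness of the $p$ oracle queries, run the (now deterministic-given-the-answers) simulation to produce, for each $\ell$, the Boolean formula $\psi_\ell$ of the $\ell$-th query and its threshold $k_\ell$ (both computable in polynomial time and hence expressible as polynomial-size arithmetic/Boolean formulas in the guessed bits), and then assert the consistency constraints $a_\ell = 1 \iff \sum_{\vec y} A(\psi_\ell)(\vec y) \geq k_\ell$ for each $\ell$, plus acceptance of the base machine. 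Because each $\psi_\ell$ is built from polynomially many gates over polynomially many variables, each nested sum is polynomial-size, and there are only polynomially many of them, so the whole $\sumETRpoly$ formula is polynomial-size and computable in polynomial time from the input. Since $\sumETRpoly \leqp \existsR^\Sigma$ by definition, this gives the inclusion.

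The main obstacle is making the ``guess the answers, assert consistency'' step precise when the query formulas $\psi_\ell$ are themselves produced by a polynomial-time computation that reads the guessed answer bits $a_1,\dots,a_{\ell-1}$. One clean way is to observe that the map from $(\text{input}, a_1,\dots,a_{\ell-1})$ to the description of $\psi_\ell$ is polynomial-time computable, hence the \emph{arithmetized value} $A(\psi_\ell)(\vec y)$ — as a function of the Boolean variables $\vec y$ and the guessed bits — is itself computed by a polynomial-size Boolean circuit, which we arithmetize once more into a polynomial-size arithmetic formula over $\{+,\cdot,-\}$ (no summation needed inside, since $\vec y$ are ordinary variables summed only by the outer $\sum$). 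Equivalently, one can sidestep adaptivity entirely by invoking the well-known fact that $\NP^{\ccPP} = \NP^{\#\mathsf{P}}$ with \emph{one} query suffices via a single $\#\mathsf{P}$-complete problem (e.g., $\#\textsc{Sat}$), reducing the whole $\NP^{\ccPP}$-language to: guess the $\NP$-certificate, then make one count-query on a single Boolean formula that already incorporates the certificate; this needs only one nested sum. I would present the one-query version as the main argument for cleanliness, remarking that the multi-query version works identically. A minor point to check along the way is that the integer threshold $k$ (which has polynomially many bits) is expressible as a term in $\ETR$ — it is, by repeated doubling as in Section~\ref{sec:preliminaries}, since $\ETR$ has $+$, $\cdot$, $0$, and $1$.
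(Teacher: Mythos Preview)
Your proposal is correct and takes essentially the same approach as the paper: the paper reduces directly from the canonical $\NP^{\ccPP}$-complete problem E-MajSat (existence of an assignment to $x$ such that at least half the $y$-assignments satisfy $\phi(x,y)$), constrains the $x$-variables to be Boolean, arithmetizes $\phi$, and expresses the count as a nested sum compared to the threshold $2^{n-1}$. Your discussion of adaptive multi-query handling is unnecessary overhead---the one-query version you settle on at the end is exactly the paper's argument---but nothing you wrote is wrong.
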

\begin{proof}
The canonical \NP$^{\ccPP}$-complete problem E-MajSat is deciding the satisfiability of a formula 
\[
  \psi: \exists x_1,\ldots, x_n:\, \# \{ (y_{1},\ldots, y_{n}) \in \{0,1\}^n \mid \phi(x,y) = 1\} \geq 2^{n-1},
\]
  i.e., deciding whether there is an assignment to the $x$-variables
  such that the resulting formula is satisfied by at least half
  of the assignments to the $y$-variables
  \citep{littman1998computational}.  

Let $ X_1\ldots X_n, Y_{1}\ldots Y_{n}$ be real variables 
and $\phi^\IR$ the arithmetization of $\phi$.
We build an equivalent $\existsR^{\Sigma}$ instance as follows:

\begin{enumerate}
\item $X_i = 0 \vee X_i = 1$, $1 \le i \le n$, and
\item $\displaystyle \sum_{Y_1=0}^1\ldots\sum_{Y_n=0}^1 
\phi^\IR(X,Y) \geq 2^{n-1}$.
\end{enumerate}

Then this instance is satisfiable iff $\psi$ is satisfiable because $X_i$ are existentially chosen and constraint to be Boolean 
and $\sum_{Y_1=0}^1\ldots\sum_{Y_n=0}^1 \phi^\IR(X,Y) $ 
is exactly the number of satisfying assignments to the $Y$-variables.
\end{proof}

Similarly to $\existsR = \realNP$, we can also characterize $\existsR^\Sigma$ using a machine model instead of a closure of a complete problem under polynomial time many one reductions.
For this we define a $\realNP^{\VNP_\IR}$ machine to be an $\realNP$ machine with 
a $\VNP_\IR$ oracle,
where $\VNP_\IR$ denotes Valiant's $\NP$ over the reals.
Since $\VNP_\IR$ is a family of polynomials, the oracle allows us to evaluate a family 
of polynomials, for example the permanent, at any real input\footnote{We recall the definitions in Appendix~\ref{sec:appen:VNP}.}.
The two lemmas below demonstrate that $\realNP^{\VNP_\IR}$ coincides with $\existsR^\Sigma$
which strengthens Lemma \ref{lem:npPP2existRsum} that $\NP^\ccPP \subseteq \existsR^\Sigma$
and characterizes $\sumETRpoly$ in terms 
of complexity classes over the reals.

\begin{lemma}\label{lemma:sumETRpoly:in:realNP:VNP}
    $\sumETRpoly \in \realNP^{\VNP_\IR}$.
    This also holds if the $\realNP$ machine is only allowed to call its oracle once.
\end{lemma}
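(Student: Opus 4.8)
The plan is to take an instance $\varphi$ of $\sumETRpoly$, which is an ETR formula enriched with unary summation operators $\sum_{x_i=0}^1$, and to simulate it on an $\realNP$ machine equipped with a $\VNP_\IR$ oracle. First I would observe that the Boolean skeleton of $\varphi$ is an ordinary ETR Boolean combination, so that part can be handled directly by a nondeterministic real RAM (guessing the real witness and evaluating), exactly as in the $\existsR = \realNP$ characterization of Erickson et al.\ that was recalled earlier. The only new ingredient is the evaluation of the arithmetic subterms containing nested summation operators. The key step is to see that each maximal arithmetic subterm $t$ of $\varphi$, viewed as a function of the free existential variables $x_1,\dots,x_n$, is precisely a sum of the form $\sum_{e\in\{0,1\}^m} g(x,e)$ where $g$ is a polynomially-sized arithmetic formula (a polynomial of polynomial degree, since summation does not raise the degree) and $m$ is at most the number of nested summations, hence polynomial in $|\varphi|$. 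Each such value is therefore the evaluation of a $\VNP_\IR$-family at the point $x$: indeed $\VNP_\IR$ is exactly the class of polynomial families expressible as exponential sums $\sum_{e} g(x,e)$ with $g\in\VP_\IR$, and a polynomial-size formula lies in $\VP_\IR$.

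Concretely, the $\realNP$ machine proceeds as follows. It nondeterministically guesses the real assignment $a_1,\dots,a_n$ to the existential variables of $\varphi$ (and, if needed, any word-valued nondeterministic bits). It then needs the numerical values of the top-level arithmetic terms of $\varphi$ at this assignment in order to check the $\le,<,=$ atoms and the Boolean structure. Rather than computing these term values itself — which it cannot, as they are exponential sums — it builds, in polynomial time, a single $\VP_\IR$-circuit $G$ (or a tuple, merged into one via a selector coordinate) whose associated $\VNP_\IR$-family $\sum_{e} G(x,e)$ simultaneously encodes all the required term values as a function of the guessed point $x=a$; it hands $x=a$ together with the selector indices to the oracle, reads back the term values, and finally verifies in deterministic polynomial time that all atoms and the Boolean combination evaluate to \true. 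Acceptance over some nondeterministic branch is then equivalent to satisfiability of $\varphi$. Since the only oracle use is this one batched query, the "called once" refinement follows; the merging-into-one-family step is where one has to be slightly careful, but it is standard (pad all the $G_j$ to a common arity, add a block of selector variables, and take $\sum_j \mathrm{sel}_j(e')\cdot G_j$).

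The main obstacle — and the step I would spend the most care on — is making the reduction from the syntactic term tree of $\varphi$ to a $\VP_\IR$-circuit fully rigorous: one must (i) confirm that nesting summations keeps the underlying polynomial of polynomial degree and polynomial formula-size, so that the inner summand $g$ genuinely lies in $\VP_\IR$ (this uses that $\prod$ is absent, so no degree blow-up occurs); (ii) handle summation operators that are nested \emph{inside} products or other summations by pushing the outermost sums to the top, which is legitimate since $\sum_e$ is just a finite sum and distributes/commutes appropriately, turning the whole term into one top-level exponential sum over a product of the individual ranges; and (iii) deal with the fact that arithmetic subterms may appear in several atoms — solved by the selector trick above. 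Once these points are in place, the evaluation of $\varphi$ at the guessed assignment is a polynomial-time computation relative to a single $\VNP_\IR$ query, which is exactly what $\realNP^{\VNP_\IR}$ (with one oracle call) provides.
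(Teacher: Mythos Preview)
Your high-level plan for the basic containment is the same as the paper's: guess a real assignment, put each arithmetic term into the normal form $\sum_{e\in\{0,1\}^m} g(x,e)$ with $g$ a sum-free polynomial-size formula, and query the oracle. Two gaps remain. First, your normalization step (ii) treats sums under $\cdot$ and under other $\sum$'s, but not sums on the two sides of an addition. Pulling the sums in $(\sum_Y p)+(\sum_{Y'} p')$ out to a single top-level $\sum_{Y,Y'}$ is not a mere commutation---it overcounts each summand by a power of $2$---so a correction factor is needed. The paper does this explicitly via auxiliary values $Z_i=\tfrac{1}{2^i}$ and the rewrite $(\sum_Y p)+(\sum_{Y'} p')\mapsto\sum_{Y,Y'}(p\cdot Z_{|Y'|}+p'\cdot Z_{|Y|})$; without this device, a term such as $\prod_{i=1}^n\bigl(\sum_{a_i} f_i+\sum_{b_i} g_i\bigr)$ forces you to distribute into $2^n$ products of sums before they can be merged, destroying the polynomial size bound on $g$. (A related minor point: the oracle is a \emph{fixed} $\VNP_\IR$ family, not one you construct from the input; you must route your constructed sum through the permanent via its $\VNP$-completeness, as the paper remarks in passing.)

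Second, your single-call argument does not work as stated. Merging the term polynomials $T_j(x)=\sum_e G_j(x,e)$ into one family $F(x,s)=\sum_e\sum_j \mathrm{sel}_j(s)\,G_j(x,e)$ produces a single oracle \emph{family}, but you still need one evaluation of $F$ per selector value $s=j$ to recover each $T_j(a)$: the oracle returns one real number per call, not a vector. The paper achieves a genuine single call by a different route. It first applies a Tseitin-style transformation to the Boolean skeleton---introducing a fresh real variable $\xi_v$ for each Boolean node, and a witness variable $\xi_v'$ for each strict-inequality atom---so that satisfiability of $\varphi$ becomes the existence of a common zero of polynomially many polynomials; it then combines these via $\sum_i p_i^2=0$ into a single $\sumETRpoly$ equation, normalizes that one polynomial as above, and makes one oracle call to test whether it vanishes at the guessed point.
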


\begin{lemma}\label{lemma:sumETRpoly:is:realNP:VNP:IR hard}
    $\sumETRpoly$ is hard for $\realNP^{\VNP_\IR}$.
\end{lemma}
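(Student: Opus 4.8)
The plan is to show that every language in $\realNP^{\VNP_\IR}$ reduces in polynomial time to $\sumETRpoly$, by simulating the $\realNP$ machine with a guessed computation trace and encoding the single (or repeated) $\VNP_\IR$-oracle calls via exponential sums. First I would recall that, by the characterization $\existsR = \realNP$ (Lemma~\ref{lem:realNEXPsuccR} together with the result of Erickson et al.), the underlying $\realNP$ computation without the oracle is already expressible by an ordinary $\ETR$ formula: given an input $I$, there is a polynomial-size existentially quantified system $\Phi_I(\vec z)$ over the reals that is satisfiable iff the $\realNP$ machine has an accepting path on $I$. The new ingredient is the oracle. A $\VNP_\IR$ family is, by Valiant's criterion, of the form $g_n(\vec u) = \sum_{\vec e \in \{0,1\}^{p(n)}} f_n(\vec u, \vec e)$ where $f_n$ is a polynomial-size arithmetic formula (a $\VP_\IR$-family); equivalently, it suffices to handle the permanent, which has exactly this shape. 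So an oracle answer "$g_n(\vec u)$" at a point $\vec u$ that the machine has computed can be written directly as a nested exponential sum $\sum_{e_1=0}^1 \cdots \sum_{e_{p(n)}=0}^1 f_n(\vec u, \vec e)$, which is exactly a legal $\sumETRpoly$ term once $f_n$ is arithmetized as a $+,\cdot$ expression.

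The key steps, in order, are: (1) Fix an arbitrary $L \in \realNP^{\VNP_\IR}$, witnessed by an $\realNP$ machine $M$ with a $\VNP_\IR$ oracle for a fixed family $(g_n)$; w.l.o.g.\ take $g_n$ to be the permanent of an $n \times n$ matrix of formal entries, since the permanent is $\VNP_\IR$-complete and any $\VNP_\IR$ family reduces to it with only a polynomial-size $\VP$ pre-/post-processing that can be folded into $M$. (2) Replace $M$'s oracle tape mechanism by: $M$ writes down a query point $\vec u$ (polynomially many real coordinates, which are themselves computed from guessed/real register values), receives back a real value $r$, and continues; we introduce fresh existential real variables for each $\vec u$ and each returned $r$. (3) Encode the (oracle-free) behaviour of $M$ as an $\ETR$ formula $\Phi_I$ in the guessed trace, the guessed reals, the query points $\vec u^{(1)},\dots,\vec u^{(k)}$ and the answers $r^{(1)},\dots,r^{(k)}$ — this is exactly the Erickson et al.\ construction applied to the deterministic-plus-guessing part of the computation. (4) For each oracle call $j$, add the constraint $r^{(j)} = \sum_{e_1=0}^1\cdots\sum_{e_m=0}^1 A_{\mathrm{perm}}(\vec u^{(j)},\vec e)$, where $A_{\mathrm{perm}}$ is the standard polynomial-size arithmetization of the permanent expanded as a sum over permutations encoded by the bits $\vec e$ (or, more simply, use the Ryser-type / iterated formula that $\VP$ provides and arithmetize that). (5) Conjoin everything; the resulting sentence lies in $\sumETRpoly$ (ordinary $\ETR$ plus unary $\sum_{x_i=0}^1$ operators, polynomially many variables) and is satisfiable iff $M^{g}$ accepts $I$. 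Finally, observe the reduction is polynomial-time computable, giving $L \leqp \sumETRpoly$.

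The main obstacle I expect is step (4): making sure the oracle value is faithfully captured while the oracle input $\vec u^{(j)}$ is itself an arbitrary real vector produced mid-computation rather than a fixed instance. One must be careful that the bit-variables $e_i$ of the exponential sum are genuinely Boolean and independent of the real computation variables, that the arithmetization of the permanent (or of the chosen $\VP$-formula for $f_n$) does not blow up in degree or size, and that nesting several oracle calls (if the lemma's "once" clause is dropped) does not cause the $e_i$-indices of different sums to collide — a matter of using disjoint dummy-variable names, which is fine syntactically. A secondary point is handling the reduction from a general $\VNP_\IR$ family to the permanent uniformly in $n$: one invokes the completeness of the permanent under $p$-projections and absorbs the projection, which is a polynomial substitution of the $\vec u$-coordinates, into the formula $\Phi_I$. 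None of these is deep, but getting the bookkeeping exactly right — especially the interface between the "real RAM trace as $\ETR$" part (borrowed wholesale from Erickson et al.) and the "oracle answer as exponential sum" part — is where the real work sits. I would also remark that the converse direction, $\sumETRpoly \in \realNP^{\VNP_\IR}$, is Lemma~\ref{lemma:sumETRpoly:in:realNP:VNP}, so together the two lemmas yield the stated equivalence $\existsR^\Sigma = \realNP^{\VNP_\IR}$.
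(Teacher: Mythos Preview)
Your proposal is correct and follows essentially the same approach as the paper: restrict to the permanent as the $\VNP_\IR$-oracle, extend the Erickson--van der Hoog--Miltzow encoding of the real RAM computation into an $\ETR$ formula, and express each permanent evaluation as a nested exponential sum over Boolean bits. The paper differs only in presentation details---it integrates the oracle call directly into the $\Update(t,\ell)$ clause of the $\Execute$ formula (using a disjunction over all register-index pairs $(a,b)$ to handle the indirect addressing in $R[i]\gets O(R[W[j]]\dots R[W[k]])$), and writes the permanent explicitly as $\sum_{M\in\{0,1\}^{m\times m}} \delta(M)\prod_{\mu}\sum_{\nu} M_{\mu\nu}\cdot \llbracket R(a+\mu m+\nu,t-1)\rrbracket$ with $\delta$ a polynomial indicator for permutation matrices---whereas you introduce separate existential variables for the query and answer and conjoin a constraint; both packagings work.
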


\begin{theorem}\label{corollary:sumETRpoly:is:realNP:VNP:IR complete}
$\sumETRpoly$ is complete for $\realNP^{\VNP_\IR}$.
Thus, $\existsR^\Sigma = \realNP^{\VNP_\IR}$. 
\end{theorem}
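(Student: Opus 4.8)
The plan is to combine the two preceding lemmas. Theorem~\ref{corollary:sumETRpoly:is:realNP:VNP:IR complete} asserts that $\sumETRpoly$ is complete for $\realNP^{\VNP_\IR}$ and concludes $\existsR^\Sigma = \realNP^{\VNP_\IR}$. Membership $\sumETRpoly \in \realNP^{\VNP_\IR}$ is exactly Lemma~\ref{lemma:sumETRpoly:in:realNP:VNP}, and $\realNP^{\VNP_\IR}$-hardness of $\sumETRpoly$ is exactly Lemma~\ref{lemma:sumETRpoly:is:realNP:VNP:IR hard}; together they give completeness. So the only content beyond quoting these two lemmas is the passage from ``$\sumETRpoly$ is complete for $\realNP^{\VNP_\IR}$'' to ``$\existsR^\Sigma = \realNP^{\VNP_\IR}$''.

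For that last step I would argue as follows. By definition, $\existsR^\Sigma$ is the closure of $\sumETRpoly$ under polynomial-time many-one reductions. Since $\sumETRpoly \in \realNP^{\VNP_\IR}$ by Lemma~\ref{lemma:sumETRpoly:in:realNP:VNP}, and $\realNP^{\VNP_\IR}$ is closed under polynomial-time many-one reductions (a standard fact: given a reduction $f$ computable in polynomial time and a $\realNP$ machine with a $\VNP_\IR$-oracle deciding $\sumETRpoly$, we first run $f$ on the input and then simulate that machine, which is again a polynomial-time nondeterministic real RAM with a $\VNP_\IR$-oracle), we obtain $\existsR^\Sigma \subseteq \realNP^{\VNP_\IR}$. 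For the reverse inclusion, take any $L \in \realNP^{\VNP_\IR}$. By Lemma~\ref{lemma:sumETRpoly:is:realNP:VNP:IR hard}, $\sumETRpoly$ is hard for $\realNP^{\VNP_\IR}$ under polynomial-time many-one reductions, so $L \leqp \sumETRpoly$, and hence $L \in \existsR^\Sigma$ by definition of the closure. This gives $\realNP^{\VNP_\IR} \subseteq \existsR^\Sigma$, and combining the two inclusions yields $\existsR^\Sigma = \realNP^{\VNP_\IR}$.

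The genuinely nontrivial work lives in the two lemmas, not in this theorem. For Lemma~\ref{lemma:sumETRpoly:in:realNP:VNP} the main obstacle is that a $\sumETRpoly$ instance can contain nested summation operators whose expansion is of exponential size, with arithmetic terms of potentially exponential degree, so one cannot simply write out the formula; the idea is that a nondeterministic real RAM guesses real values for the free (existentially quantified) variables, and each exponential sum over a polynomial-size term is precisely the kind of object a $\VNP_\IR$-oracle evaluates — one encodes the term as a small arithmetic circuit/formula and asks the oracle for the value of the corresponding $\VNP_\IR$-family at the guessed point, then the RAM checks the resulting polynomially many real (in)equalities directly, and a careful accounting shows a single oracle call suffices. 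For Lemma~\ref{lemma:sumETRpoly:is:realNP:VNP:IR hard} the obstacle is the converse: one must simulate an arbitrary polynomial-time nondeterministic real RAM equipped with a $\VNP_\IR$-oracle by a single $\sumETRpoly$ sentence, which requires arithmetizing the RAM computation (including the nondeterministic guesses as existential real/Boolean variables) into $\ETR$ in the spirit of the $\existsR = \realNP$ proof of Erickson et al., and then expressing the oracle answer — the value of a $\VNP_\IR$-family, itself a $\VNP$-definable exponential sum of a polynomial-size formula — by a nested summation operator of $\sumETRpoly$; the fact that $\VNP_\IR$ is defined exactly via such sums is what makes this work. But for the proof of the theorem itself, as noted, nothing beyond the closure argument above is needed.
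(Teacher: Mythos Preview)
Your proposal is correct and matches the paper's approach: the theorem is stated as an immediate consequence of the two lemmas, with no further argument given in the paper (the ensuing proof sketch concerns only the lemmas). Your explicit closure argument for $\existsR^\Sigma = \realNP^{\VNP_\IR}$ is the standard one the paper leaves implicit.
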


\begin{proofidea}To prove Lemma~\ref{lemma:sumETRpoly:in:realNP:VNP}, we 
show a normal form for $\sumETRpoly$ instances such that 
all polynomials contained in it are of the form
$\sum_{Y \in \{0, 1\}^m} p(X, Y)$ where $p$ does not contain any unary sums.
Then we show how to translate formulas in this normal form
into a real word-RAM with oracle access.
For the hardness results of Lemma~\ref{lemma:sumETRpoly:is:realNP:VNP:IR hard},
we encode the real word-RAM computations into an ETR-instance, where the oracle calls
(which w.l.o.g.\ can be assumed to be calls to the permanent) are simulated by
the summation operator.\end{proofidea}

\subsection{Reasoning about Probabilities in Small Models}\label{sec:etr:sum:machine:sm}
In this section, we employ the satisfiability problems for languages 
of the causal hierarchy. 
The problem $\SATprobpolysumsm$
is defined like $\SATprobpolysum$, but in addition 
we require that a satisfying distribution has only polynomially
large support, that is, only polynomially many
entries in the exponentially large table of probabilities are nonzero.
Formally we can achieve this by extending an instance with an additional unary input $p \in \IN$
and requiring that the satisfying distribution has a support of size at most $p$.
The membership proofs of $\SATprobpoly$ in $\NP$ and in $\existsR$, respectively,
by \cite{fagin1990logic}, \cite{ibeling2020probabilistic}, and \cite{ibeling2022mosse} 
rely on the fact that the considered formulas have the small model property:
If the instance is satisfiable, then it is satisfiable by
a small model. For $\SATprobpolysum$, this does not seem
to be true because we can directly force any model to be arbitrarily large, e.g., 
by encoding the additional parameter $p$ above in binary or by enforcing a uniform distribution using $\sum_{x_1}\ldots\sum_{x_n} (\pp{X_1=x_1,\ldots,X_n=x_n} - \pp{X_1=0,\ldots,X_n=0})^2 = 0$. 
Thus, we have to explicitly require that the models are small, 
yielding the problem $\SATprobpolysumsm$.
Formally, we use the following:
\begin{definition}\label{lab:def:small:model:new:U}
	The decision problems $\SATprobpolysumsm$ take as input a formula 
	$\varphi \in \Lprobpolysum$  
	and a unary encoded number $p \in \IN$
	and ask whether there exists a model  $\fM=(\{X_1,\myldots,X_n\}, P)$ 
	such that $\fM \models\varphi$ and
 	$\ \#\{(x_1,\myldots,x_n): \pp{X_1=x_1,\myldots, X_n=x_n} > 0\}\le p.$
\end{definition}



It turns out that $\SATprobpolysumsm$ is a natural complete
problem for $\existsR^\Sigma$:
\begin{theorem}\label{thm:main:sec:Sigma:ETR}
 The decision problem $\SATprobpolysumsm$ is complete for $ \existsR^\Sigma$. 
\end{theorem}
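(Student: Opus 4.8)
The plan is to show membership and hardness separately, exploiting the machine characterization $\existsR^\Sigma = \realNP^{\VNP_\IR}$ from Theorem~\ref{corollary:sumETRpoly:is:realNP:VNP:IR complete}. For membership, I would design an $\realNP^{\VNP_\IR}$ algorithm for $\SATprobpolysumsm$: since the required support has size at most $p$ (given in unary), the machine first nondeterministically guesses the at most $p$ support atoms $(x_1,\ldots,x_n)$ — each described by $n$ values, and $n$ is itself polynomial in the input since the random variables are those named in $\varphi$ — and guesses the $p$ corresponding real probability values on the real registers, verifying nonnegativity and that they sum to $1$. The only nontrivial part is then to evaluate the formula $\varphi \in \Lprobpolysum$ on this candidate model. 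Each basic term $\PP{\delta}$ is a sum over the (exponentially many) atoms satisfying $\delta$, but restricted to the guessed support it is just a sum of at most $p$ of the guessed values, computable directly; each outer summation operator $\sum_x$ in a $\Tpolysum$-term ranges over the constant-size domain $\mathit{Val}$ and only blows the term up polynomially per level — wait, nested summations can still be exponential. This is where the $\VNP_\IR$ oracle enters: exactly as in the proof idea of Lemma~\ref{lemma:sumETRpoly:in:realNP:VNP}, one normalizes so that every term is of the shape $\sum_{Y\in\{0,1\}^m} q(X,Y)$ with $q$ sum-free, and a single oracle call to a $\VNP_\IR$-family (the permanent suffices) evaluates it; the support-restriction only changes which real values are fed in. So $\SATprobpolysumsm \in \realNP^{\VNP_\IR} = \existsR^\Sigma$.

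For hardness, I would reduce from $\sumETRpoly$, which is $\existsR^\Sigma$-complete. Given a $\sumETRpoly$ sentence $\exists x_1\cdots\exists x_k\;\psi$, the idea is to encode the real variables $x_i$ as differences of probability-table entries in a model with a constant-size support, mirroring the trick in Lemma~\ref{lem:satprobpolysum_succR_hard}: introduce a sign variable $X_0$ with range $\{-1,0,1\}$ and a variable $X_i$ selecting coordinate $i$, and set $x_i := \PP{X_0=1\wedge X=i} - \PP{X_0=-1\wedge X=i}$. Because $\sumETRpoly$ has no variable indexing, the number of variables $k$ is polynomial, so this uses only polynomially many atoms; by Grigoriev–Vorobjov (Theorem~\ref{thm:grigoriev}) applied to the $\sumETRpoly$ instance — the exponential sums do not raise the degree — any satisfiable instance has a solution of bounded norm, which after scaling (again via a Tseitin-style construction of a small rational, but now of polynomial, not exponential, magnitude) fits inside a probability distribution of support polynomial in the input. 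Crucially, each unary summation $\sum_{x_i=0}^1 t$ in $\psi$ is translated into the probabilistic marginalization $\sum_{x}$ over the two-valued selector variable encoding $x_i$, so the $\Sigma$-operator of the logic directly realizes the $\sumETRpoly$ summation. Setting $p$ to the resulting polynomial support bound completes the reduction.

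The main obstacle I expect is the membership direction's interaction between the small-support constraint and the normal-form argument for nested summations: the normalization of Lemma~\ref{lemma:sumETRpoly:in:realNP:VNP} is stated for $\sumETRpoly$, whereas here the summations $\sum_x$ in $\Tpolysum$-terms range over $\mathit{Val} = \{0,\ldots,\maxvaluecount-1\}$ and sit on top of basic terms $\PP{\delta}$ that themselves hide an implicit exponential sum over atoms; I need to check carefully that restricting to a support of size $p$ makes the basic terms cheaply evaluable (a sum of $\le p$ guessed reals) while the explicit $\Sigma$-operators still collapse to a single $\VNP_\IR$ oracle call, so that the whole evaluation is genuinely in $\realNP^{\VNP_\IR}$ with one oracle query. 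A secondary subtlety is handling explicit conditional probabilities $\PP{\delta'\mid\delta}$ in the upper bound (the paper notes the upper bounds hold with them): division is handled by clearing denominators, which is safe here precisely because the support is small, so the $\PP{\delta}$ terms are small expressions rather than exponential ones.
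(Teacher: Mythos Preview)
Your overall plan is sound, and the hardness direction follows the paper's route: pass through an $\ell_1$-bounded variant of $\sumETRpoly$ via Theorem~\ref{thm:grigoriev} (the exponential sums do not increase the degree, and since the number of variables is polynomial the required repeated-squaring chain has polynomial length), then encode each real variable as a difference of two atomic probabilities so that the satisfying model has support of polynomial size. One step you gloss over is not quite as ``direct'' as you suggest: in a $\sumETRpoly$ term the summation index $e_i$ in $\sum_{e_i=0}^1 t$ may occur as a \emph{real value} inside $t$, whereas in $\Lprobpolysum$ the dummy variable of a $\sum_x$ can appear only inside the event of some $\PP{\cdot}$. The paper resolves this with an auxiliary random variable $E$ forced to satisfy $\PP{E{=}0}=0$ and $\PP{E{=}1}=\tfrac12$, so that $2\cdot\PP{E{=}e}$ recovers the numeric value of the running index $e$; your sentence ``the $\Sigma$-operator of the logic directly realizes the $\sumETRpoly$ summation'' hides this detour.

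For membership your route genuinely differs from the paper's. You go through the machine characterization $\existsR^\Sigma=\realNP^{\VNP_\IR}$: guess the at most $p$ support atoms and their real weights, arithmetize each basic term $\PP{\delta}$ as a size-$p$ expression $\sum_{i} v_i\, A_\delta(a_{i,1},\dots,a_{i,n},\vec{x})$ in the guessed data and the dummy variables, and evaluate the outer $\sum_{\vec{x}}$ operators with a single $\VNP_\IR$ oracle call after the normal-form step of Lemma~\ref{lemma:sumETRpoly:in:realNP:VNP}. This works. The paper instead gives a direct polynomial-time reduction to $\sumETRpoly$: it first normalizes every primitive to a full atom via the same arithmetization $A_\delta$, then introduces real variables $X_1,\dots,X_p$ (the nonzero weights) and domain-constrained selector variables $e_{i,j}$ (their positions), and replaces each atom $\PP{X_1{=}\xi_1,\dots,X_n{=}\xi_n}$ by the selector sum $\sum_{i=1}^p X_i\prod_j(1-(\xi_j-e_{i,j})^2)$. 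The paper's argument is shorter and self-contained (it never invokes Theorem~\ref{corollary:sumETRpoly:is:realNP:VNP:IR complete}); yours makes transparent why the small-model restriction is exactly what collapses the complexity from $\succR$ down to $\existsR^\Sigma$, and it handles conditional primitives $\PP{\delta'\mid\delta}$ more naturally since the machine can simply compute the quotient of two oracle answers.
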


\begin{proofidea}To show the containment of $\SATprobpolysumsm$ 
in $ \existsR^\Sigma$, we first show a normal form that every probability
occurring in the input instance contains all variables. Then
we have to use the exponential sum and the polynomially many variables
to ``built'' a probability distribution with polynomial support.
The lower bound follows from reducing from a restricted $\sumETRpoly$-instance.\end{proofidea}

\section{Discussion}\label{sec:landscape}

Traditionally, ETR has been used to characterize the complexity of problems
from geometry and real optimization. It has recently been used to characterize
probabilistic satisfiability problems, which play an important role
in AI,
see e.g.\ \cite{ibeling2022mosse, zander2023ijcai}. 
We have further investigated the recently defined class $\succR$,
characterized it in terms of real word-RAMs, and shown the existence of further natural
complete problems. 
Moreover, we defined a new class $\existsR^\Sigma$ and also gave natural complete problems for it.

The studied summation operators allow the encoding of exponentiation, but only with integer bases, so they do not affect the decidability, unlike Tarski's exponential function \cite{tarski1949decision}. 

Sch\"afer and Stefankovic \cite{DBLP:journals/corr/abs-2210-00571} consider
extensions of $\ETR$ where we have a constant number of alternating quantifiers
instead of just one existential quantifier. By the work of Grigoriev
and Vorobjov \cite{DBLP:journals/jsc/GrigorevVV88}, these classes are all contained
in $\PSPACE$. Can we prove a real version of Toda's theorem \citep{todasTheorem1991}?
Are these classes contained in $\existsR^\Sigma$?

\newpage
\bibliography{lit}


\newpage
\appendix

\section{Technical Details and Proofs}\label{sec:appendix}

\subsection{Proofs of Section~\ref{sec:nexp:over:the:reals}}
For a real RAM $M$, $|M|$ denotes 
the size of the encoding of $M$ (``G\"odel number'') with respect to
some standard encoding. 

\newcommand{\Powersoftwo}{\texttt{PowersOf2}}
\newcommand{\Fixinput}{\texttt{FixInput}}
\newcommand{\Wordsarewords}{\texttt{WordsAreWords}}
\newcommand{\Execute}{\texttt{Execute}}
\newcommand{\Update}{\texttt{Update}}
\newcommand{\Step}{\texttt{Step}}
\newcommand{\Perm}{\texttt{Perm}}

\begin{lemma}
    \label{lem:makesucc}
    Given a real word-RAM $M$, an input $I$,
    and $w, t \in \IN$ with $|I| \le t$, we can compute a succinct circuit $C$  
    of size $\poly(|M| \cdot |I| \cdot w) \cdot \polylog(t)$ in time $\poly(|M| \cdot |I| \cdot w) \cdot \polylog(t)$ encoding an existential formula $F$ over the reals such that $F$
    is true if $M$ accepts $I$ within $t$ steps.
\end{lemma}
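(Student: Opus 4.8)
The plan is to give an explicit construction of the succinct circuit $C$ by hardwiring a simulation of the real word-RAM $M$ into a first-order-style arithmetic encoding, then representing that encoding succinctly. The target existential formula $F$ will have variables indexed by (time step, register address, bit position), encoding the full computation tableau of $M$ running on $I$ for $t$ steps with word size $w$; since $M$ has $2^w$ registers of each type and runs for $t$ steps, this tableau has roughly $t \cdot 2^w \cdot w$ entries, which is exponential, hence the need for a succinct description rather than an explicit one.

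**First I would** fix the representation. Each word register holds a $w$-bit integer, so I introduce real variables $b_{\tau, k, \ell}$ (time $\tau \in [0,t]$, word register $k \in [0,2^w-1]$, bit $\ell \in [0,w-1]$), constrained by $b_{\tau,k,\ell}(1 - b_{\tau,k,\ell}) = 0$ to be Boolean, plus real variables $r_{\tau,k}$ for the real registers (unconstrained reals). The formula $F$ is a conjunction of: (i) \emph{initialization} constraints fixing the $\tau = 0$ state from the input $I$ and zeroing the rest; (ii) \emph{transition} constraints that, for each $\tau$, express that the state at time $\tau+1$ is obtained from the state at time $\tau$ by executing one step of $M$ — this requires decoding the program counter, selecting the current instruction (there are only $|M|$ of them, a constant relative to the exponential parameters), and encoding arithmetic on words (addition, comparison, bitwise Boolean ops via the bit variables), indirect addressing (the tricky part — a register's \emph{address} depends on the contents of another register, so one needs a selector/indicator gadget $\prod$ or a sum-of-products over the $2^w$ candidate addresses that picks out the right one, encodable with arithmetic over the bit variables), and the conditional jumps; and (iii) an \emph{acceptance} constraint on the state at time $t$. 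Each individual local constraint is of polynomial size in $|M|, w$; the challenge is that there are exponentially many instances of them (one family per register index, per time step).

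**The key step** is then to observe that this whole family of constraints is \emph{uniform}: the constraint attached to register index $k$ at time $\tau$ is the same modulo substituting the binary encodings of $k$ and $\tau$. So I build a circuit $C$ that, on input a node index (itself a binary string of length $O(\log t + w)$, since the formula tree has $\poly(|M| w) \cdot t \cdot 2^w$ nodes), decodes which family the node belongs to and which values of $\tau, k$ (and bit positions, candidate indirect addresses, etc.) it corresponds to, and outputs the node's label, parent, and children accordingly — exactly the succinct-$\ETR$ format from Section~\ref{sec:preliminaries}. Computing a single node's description amounts to evaluating a fixed arithmetic/Boolean routine on the decoded indices, and since the index arithmetic operates on $O(\log t + w)$-bit numbers, the circuit $C$ has size and construction time $\poly(|M| \cdot |I| \cdot w) \cdot \polylog(t)$ as claimed. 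Correctness — that $F$ is true iff $M$ accepts $I$ within $t$ steps — follows because a satisfying assignment of $F$ is forced, by the initialization and transition constraints, to be exactly the (unique) computation tableau of $M$ on $I$, and the acceptance constraint holds iff that tableau is accepting; conversely the actual tableau gives a satisfying assignment. (One subtlety: real registers are genuine existential real variables, but the transition constraints still pin them down deterministically from the input reals, so the ``$\exists$'' is harmless.)

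**The main obstacle** I expect is engineering the transition constraints for indirect memory access cleanly: reading or writing register number $a$ where $a$ is the integer value currently stored in some word register requires, inside a polynomial-size arithmetic formula, a gadget that computes $\sum_{j=0}^{2^w-1} [\,a = j\,]\cdot r_{\tau,j}$ (for reads) and the analogous conditional update (for writes, every register $k$ either keeps its old value or receives the new one, depending on whether $k = a$). The equality indicator $[a=j]$ is a product $\prod_{\ell} (b^{(a)}_\ell b^{(j)}_\ell + (1-b^{(a)}_\ell)(1-b^{(j)}_\ell))$ over the $w$ bits, which is fine, but the sum/product over $2^w$ candidate indices $j$ is itself an exponential-size subtree — so it too must be generated by $C$ as part of the succinct description, with its own family of node indices. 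Getting the index-decoding bookkeeping right so that $C$ remains polynomial-time-computable (and so that parent/child pointers across these nested exponential-size gadgets are consistent) is the delicate part; conceptually, though, it is the same maneuver used to show succinct-$3\textsc{Sat}$ is $\NEXP$-complete, adapted to carry real registers and real arithmetic, so no new ideas beyond careful encoding are needed.
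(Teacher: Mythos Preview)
Your proposal is correct and follows essentially the same approach as the paper: encode the computation tableau of $M$ on $I$ as a large but highly uniform $\ETR$ formula (initialization, per-step transitions including indirect addressing via exponential-size selector gadgets, acceptance), then exploit that uniformity to output the formula succinctly. The paper's proof is terser because it simply cites the four named components $\Powersoftwo$, $\Fixinput$, $\Wordsarewords$, $\Execute$ from the Erickson--van der Hoog--Miltzow construction and argues each is either small or a regular big conjunction/disjunction that the circuit $C$ can index into; your from-scratch sketch arrives at the same structure, with the only cosmetic difference that you bit-blast word registers into $w$ Boolean-constrained variables whereas Erickson et al.\ keep a single real variable per word register and constrain it to $\{0,\dots,2^w-1\}$ via $\Wordsarewords$.
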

\begin{proof}[Proof of Lemma~\ref{lem:makesucc}]
    The construction by \cite{erickson2022smoothing} showing that $\ETR$ is $\realNP$-complete produces a systematic formula that can easily be encoded as a succinct circuit of size polynomial in the input and word size and logarithmic in the amount of steps used. Their construction is similar to standard textbook proofs
    of Cook's Theorem.

    More specifically, the ETR formula constructed by Erickson et al.\ (in Lemma 11
    of their paper) consists
    of four main components: $\Powersoftwo$, $\Fixinput$, $\Wordsarewords$,
    and $\Execute$. Two arrays of size $2^w$ of variables are used to simulate
    the registers of $M$. The circuit $C$ gets a long bitstring as an
    input. It uses the first two bits to determine which of the four subformulas
    the node belongs.
    
    $\Powersoftwo$ constructs the constants $2^0$ to $2^w$ and makes 
    sure that they are stored in some variables. (Recall that formulas in $\ETR$
    only contain constants $0$ and $1$.) This is a small formula (of size polynomial
    in $w$) and can be output by the circuit $C$ explicitly. 

    Since formulas in $\ETR$ do not have ``word variables'', we have to ensure that
    the variables that are used to store the content of the word registers of $M$
    only contain integer values between $0$ and $2^w - 1$. This is done by 
    the formula $\Wordsarewords$. This formula is a big conjunction (of size $2^w$)
    and each term in the conjunction constrains one variable. Therefore,
    this subformula has a very regular structure and can be constructed succinctly easily.

    The formula $\Fixinput$ just makes sure that the input $I$ is hardcoded into $C$.
    This is again a small formula.

    Finally, the formula $\Execute$ is used to simulate the computation of $M$.
    This is again a big conjunction of length $t$. In the $\tau$th part,
    it is checked that the $\tau$th step of the computation is executed properly.
    This is done by a conjunction of size $|M|$. Again due to the regular structure,
    $C$ can find the desired node easily. All commands of the real RAM
    can be simulated by small formulas, except for indirect addressing.
    The simulation of indirect addressing also uses large formulas of size $2^w$,
    but again, they are a big disjunction and therefore have again a regular structure.
\end{proof}

This directly proves Lemma~\ref{lem:realNEXPsuccR}, i.e.\ that $\succETR$ is $\realNEXP$-complete, in analogy to the well-known results that
the succinct version of $3\text{-}\textsc{Sat}$ is $\NEXP$-complete.

Van der Zander et al.~\cite{zander2023ijcai} require that formulas in $\succETR$
have all negations pushed down to the bottom in the Boolean part,
i.e., they can be merged with the arithmetic terms. 
For non-succinct instances in $\ETR$, this does not make a difference,
since we can push down the negations explicitly using De Morgan's law.
For succinct instances, it is not immediately clear how to achieve
this, since a path in the Boolean part can be exponentially long.  
We here show that given a circuit $C$, an instance 
of $\succETR$ with negations in arbitrary places, 
we can transform it into a circuit $C'$ in polynomial time,
that encodes an equivalent instance of $\succETR$
with all negations pushed to the bottom.

\begin{lemma}
    \label{lem:removeneg}
    $\succETR$ is equivalent to $\succETR$ where the Boolean formula part of the circuit does not contain negations and only contains the comparison operators $<$ and $=$.
    Additionally the number of variables in the formula is not affected by this transformation.
\end{lemma}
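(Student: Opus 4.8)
The plan is to process the Boolean part of the circuit $C$ by a standard "push negations down" argument, but carried out at the level of the succinct encoding rather than the exponentially large tree. Recall that $C$ computes, for each node index $i \in \{0,1\}^N$, a description of node $i$: its label, its parent, and its two children. The key observation is that in a tree of $2^N$ nodes, the Boolean part forms a subtree rooted at the root, and along the path from any node to the root there are at most $2^N$ negation gates — but we do not need to traverse that path. Instead, I would introduce one extra bit of "state" per node that records the parity of the number of $\neg$-gates strictly above it. Concretely, the transformed circuit $C'$ will, on input $i$, first determine whether node $i$ should be interpreted in "positive" or "negated" mode. The hard part is that computing this parity seems to require walking up to the root, which is exponentially far; I address this below.

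\textbf{Step 1: Computing the negation parity succinctly.} The parity of the number of $\neg$-gates on the root-to-$i$ path is exactly the quantity we need, and naively it is not computable in polynomial time from $C$. However, we do not actually need to compute it for an arbitrary node: instead, $C'$ will encode a tree whose nodes are pairs $(i, b)$ with $i$ a node of the original tree and $b \in \{0,1\}$ a "mode" bit, and the edges will be wired so that $b$ is determined locally. The root of $C'$ is $(\text{root}, 0)$. When $C'$ is queried on a node $(i,b)$, it looks up the label and children of $i$ in $C$; if $i$ is a $\neg$-gate with unique child $j$, then in $C'$ the node $(i,b)$ is \emph{contracted away}: we make its single child $(j, 1-b)$ be treated as a child of the parent of $(i,b)$. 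Contracting unary $\neg$-nodes is the delicate point — the parent of $(i,b)$ may itself be a $\neg$-node, so we need the contraction to compose. I would handle this by a small gadget: $C'$ on input a node first walks up at most a \emph{constant}-bounded number of steps? No — that fails in the worst case $\neg\neg\cdots\neg$.

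\textbf{Step 1, corrected approach.} The clean fix is to observe that we may first \emph{pre-normalize} $C$ so that no $\neg$-gate is the child of another $\neg$-gate, by a purely syntactic rewrite on the circuit: when $C$ reports that node $i$ is $\neg$ with child $j$, and $j$ is also $\neg$ with child $k$, then $i$'s "effective child" is $k$ with a doubled negation that cancels — but chains can be long. Since the \emph{description} of the tree is symbolic, I instead change the semantics of $C'$ so that every node carries the parity bit and a $\neg$-node simply becomes an \emph{identity} node that flips the mode bit of its child's interpretation; then, because an identity node with a single child just forwards its value, I merge each such node with its child by having $C'$, when asked for the children of a node $v$, repeatedly skip over identity-mode nodes. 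The number of consecutive skips is bounded by the length of a maximal $\neg$-chain, which for a \emph{well-formed} succinct ETR instance we may assume (or enforce by a first trivial rewriting pass that bounds it by, say, $2$, since $\neg\neg$ can be deleted symbolically whenever $C$'s output format lets us detect it on adjacent indices). After this, at most one skip is ever needed, so $C'$ runs in polynomial time.

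\textbf{Step 2: Pushing negations into the arithmetic/comparison leaves, and eliminating $\le,\neq,\ge$.} Once the mode bit reaches an internal Boolean gate $\wedge$ or $\vee$, De Morgan applies locally: a node $(i,1)$ with $i$ labeled $\wedge$ and children $j,k$ becomes, in $C'$, a node labeled $\vee$ with children $(j,1),(k,1)$, and symmetrically for $\vee$. When the mode bit reaches a comparison leaf $t_1 \mathbin{R} t_2$ with mode $b$, we rewrite the comparison: in positive mode keep it, in negated mode replace $R$ by its complement ($<\mapsto \ge$, $\le\mapsto >$, $=\mapsto\neq$, and vice versa). This leaves us with comparisons from $\{<,\le,>,\ge,=,\neq\}$ and no Boolean negations. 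To finish, I eliminate everything but $<$ and $=$: rewrite $t_1 > t_2$ as $t_2 < t_1$; $t_1 \le t_2$ as $t_1 < t_2 \vee t_1 = t_2$ (introducing one $\vee$-node, handled by enlarging each leaf into a constant-size gadget in $C'$); $t_1 \ge t_2$ symmetrically; and $t_1 \neq t_2$ as $t_1 < t_2 \vee t_2 < t_1$. Each leaf thus expands into an $O(1)$-size subtree, which $C'$ can hardcode at its corresponding index range; this only multiplies the node count by a constant and does not introduce new variables, since the arithmetic terms $t_1,t_2$ are reused verbatim. Correctness is immediate: at every node the value computed by $C'$ equals the value the original tree computes at the corresponding node, interpreted in the recorded mode, so the roots agree.

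\textbf{Main obstacle.} The crux is Step 1: making the negation-parity bit propagate correctly through the succinct encoding despite the fact that $\neg$-chains can a priori be exponentially long and are not visible locally. The resolution is the two-phase rewriting — a first pass that bounds $\neg$-chains to constant length by a symbolic cancellation on adjacent indices, followed by the mode-bit construction — and verifying that this first pass is itself implementable in polynomial time on the circuit $C$ (not on the expanded tree) is the part requiring the most care. Everything after that is a routine local De Morgan / comparison-normalization argument that manifestly preserves the variable set.
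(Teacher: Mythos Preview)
Your mode-bit idea in Step~1 is exactly the right move and is essentially what the paper does: it replaces every Boolean node $v$ by two nodes $v^+$ and $v^-$ computing the value of $v$ and of $\neg v$ respectively, which is the same as your pairs $(i,0)$ and $(i,1)$. Where you go off the rails is the decision to \emph{contract} $\neg$-nodes. Contraction forces you to worry about $\neg$-chains of unbounded length, and your proposed ``pre-normalization by symbolic cancellation on adjacent indices'' is not a well-defined polynomial-time operation on the succinct encoding (the nodes in a $\neg$-chain need not have adjacent indices, and rewiring parents/children after cancellations cascades in a way you have not controlled).

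The fix is simply not to contract. If $i$ is a $\neg$-gate with child $j$, let the node $(i,b)$ in $C'$ be a $\wedge$-gate with children $(j,1-b)$ and a fresh trivially-true leaf such as $0=0$. This is a purely local rewrite: $C'$ on input $(i,b)$ only needs to evaluate $C$ at $i$ to produce the new label and children. No chain traversal, no pre-pass. With this change your Steps~1 and~2 go through and match the paper's argument.

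One further point you gloss over: the encoded object must remain a \emph{tree}. When a comparison leaf $s \mathbin{R} t$ is expanded into, say, $(s<t)\vee(s=t)$, and simultaneously its negated twin becomes $t<s$, the arithmetic subterms $s,t$ are used up to three times. Saying they are ``reused verbatim'' produces a DAG, not a tree. The paper handles this by making three parallel copies $w_1,w_2,w_3$ of every arithmetic node $w$; you should do the same. This is again a local constant-factor blowup and does not introduce new variables.
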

\begin{proof}
    Let $C$ be a circuit succinctly encoding a formula.
    We show how we can locally change the structure of the formula encoded by $C$ to not contain negations or different comparison operators than $<$ or $=$.
    Each replacement gadget has a constant size, so formally we use a constant amount of bits to address which node inside the gadget we are addressing and the remaining bits are used to identify the original node.
    Since the gadgets are of differing sizes some of those nodes may be unused.

The nodes $w$ computing arithmetic value are replaced by $3$ identical copies $w_1, \ldots w_3$ of $w$ where every usage of a child $c$ of $w$ is replaced by a usage of $c_i$ in $w_i$. This step is needed to ensure that the resulting circuit encodes formulas as trees. Every node $v$ computing Boolean value is replaced by two new nodes $v^+$ and $v^-$ representing the positive and the negated value of $v$ respectively. We define the construction as follows:

    Let $v$ be a node computing an atomic Boolean value. We consider three cases:
    \begin{description}
           \item[$v$ computes $s = t$:] We compute $v^+$ as $s_1 = t_1$ and $v^-$ as $(s_2 < t_2) \lor (t_3 < s_3)$.
        \item[$v$ computes $s \leq t$:] We compute $v^+$ as $(s_1 < t_1) \lor (s_2 = t_2)$ and $v^-$ as $(t_3 < s_3)$.
        \item[$v$ computes $s < t$:] We compute $v^+$ as $s_1 < t_1$ and $v^-$ as $(t_2 < s_2) \lor (s_3 = t_3)$.
     \end{description}   
     
     Now, let $v$ be a node labeled with a Boolean operator, with children $s$ and possibly $t$.   
    We distinguish between the different operator labeling of $v$:
    \begin{description}
        \item[$v$ computes $s \lor t$:] We compute $v^+$ as $s^+ \lor t^+$ and $v^-$ as $s^- \land t^-$ via De Morgan's law.
        \item[$v$ computes $s \land t$:] We compute $v^+$ as $s^+ \land t^+$ and $v^-$ as $s^- \lor t^-$ via De Morgan's law.
        \item[$v$ computes $\lnot s$:] We compute $v^+$ as $s^- \land (0=0)$ and $v^-$ as $s^+ \land (0=0)$.
            Note that both the $0 = 0$ computations are separate fresh formulas in order to ensure that the result is a formula.
    \end{description}

    In all cases it is easy to inductively check that $v^+$ has the same Boolean value as $v$ and $v^-$ computes $\lnot v$.
    Furthermore every node only has one parent, making the construction a formula.
    All these constructions are local and thus can easily be succinctly encoded by using the original circuit $C$.
\end{proof}

\subsection{Proofs of Section \ref{sec:relationships:bool:class:vs:ove:the:reals}}

\begin{proof}[Proof of Theorem~\ref{thm:translation:lower}]
    We always have $\succR \supseteq \NEXP$.

    Now we show $\succETR \in \NEXP$ assuming $\existsR = \NP$ which implies $\succR = \NEXP$.
    $\existsR = \NP$ implies that there is a polytime NTM $N$ deciding $\ETR$.
    We construct an NTM $N'$ as follows:
    \begin{description}
        \item
            Given a circuit $C$ representing a succinct representation of an existential formula $\varphi$ over the reals, reconstruct $\varphi$ by evaluating $C$ at all possible inputs.
            Then check if $\varphi \in \ETR$ by using $N$.
    \end{description}
    Note that $\varphi$ has size at most exponential in $|C|$ and as such the usage of $N$ incurs at most a running time exponential in $|C|$.
    Thus $N'$ nondeterministically decides $\succETR$ in time exponential in $|C|$ and we have $\succETR \in \NEXP$.
\end{proof}

\begin{proof}[Proof of Theorem~\ref{thm:translation:upper}]
    We always have $\succR \subseteq \EXPSPACE$.

    Let $A$ be any language in $\EXPSPACE$ decidable in some time $t(n)$.
    Then the padded language $A' = \{x\$1^{t(|x|)} \mid x \in A\}$ is in $\PSPACE$, where $\$$ is a symbol not used in $A$.
    Under the assumption $\existsR = \PSPACE$ there thus is a polytime reduction $f$ from $A'$ to $\ETR$.

    We construct a nondeterministic real RAM $M$ as follows:
    \begin{description}
        \item On input $x$, construct the formula $\varphi = f(x\$1^{t(|x|)})$.
            Then nondeterministically guess an assignment to all the existentially quantified variables and verify if $\varphi$ is satisfied.
    \end{description}
We can guess a Boolean assignment by adding the constraints $x(x-1) = 0$ for each variable. Thus we have shown that $A \in \realNEXP$.
Now the claim follows by Lemma~\ref{lem:realNEXPsuccR} and the
fact that $A$ was arbitrary.
\end{proof}

Let $M_1, M_2, \ldots$ be a fixed efficiently computable enumeration of all nondeterministic real word RAMs.
Efficient here means that we can in time $O(i)$ construct a list of all instructions of $M_i$ and store each instruction in $O(1)$ cells of word size $O(\log i)$.
We can then access the $j$-th instruction in constant time.
\begin{lemma}
    \label{lem:nondetsim}
    There is a nondeterministic real word-RAM $U_N$ that given $i, w, t \in \IN$ and a binary input $x \in \{0, 1\}^\star$ can simulate running $M_i$ on $x$ with word-size $w$ for $t$ steps in time $T_N \in O(i + t + |x|)$ using any word-size $w_N \geq w + O(\log i + \log t)$.
\end{lemma}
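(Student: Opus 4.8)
The plan is to build $U_N$ as an explicit universal machine, in the spirit of the classical constant-overhead universal RAM, that keeps the entire configuration of $M_i$ — its program counter, a counter for the number of simulated steps, and the contents of $M_i$'s word and real register files — inside its own memory, and advances this configuration one $M_i$-step at a time.

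First I would have $U_N$ use the efficient enumeration to write the list of all instructions of $M_i$ into a dedicated block of its memory; by assumption this costs $O(i)$ time, each instruction occupies $O(1)$ cells of word size $O(\log i)$, and afterwards $U_N$ can fetch the $j$-th instruction in $O(1)$ time. Next I would reserve two further blocks, one for $M_i$'s $2^w$ word registers and one for its $2^w$ real registers. Crucially these blocks are \emph{not} initialised: real word-RAMs start with all registers zero, and $U_N$'s own registers are zero as well, so the blocks are already correct, and the simulation will touch only a constant number of their cells per step. A handful of scratch cells hold the simulated program counter, the step counter, and temporaries. This is what forces the word-size bound: $U_N$ must address a program region of $O(i)$ cells, two register blocks of $2^w$ cells each, and must hold a step count up to $t$, so taking $w_N = w + O(\log i + \log t)$ gives enough bits for all of this at once (the two register blocks cost the single extra bit, the program and program counter cost $O(\log i)$ bits, the step counter $O(\log t)$ bits).

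The core is the simulation loop. In each iteration $U_N$ fetches the instruction of $M_i$ pointed to by the simulated program counter, decodes it, and executes it against the simulated register blocks, using $O(1)$ of its own instructions per $M_i$-instruction: word arithmetic is performed and then reduced modulo $2^w$ (a subtraction or a bit mask, legitimate since $w_N \ge w$ and the high bits stay zero), bitwise Boolean operations act directly, indirect addressing on $M_i$'s registers becomes an indirect access into the appropriate block after adding that block's base offset, conditional jumps compare the relevant simulated word registers (or a simulated real register against $0$) and update the program counter, nondeterministic guesses of a word or a real number are realised by $U_N$ guessing the corresponding object (truncating a guessed word to $w$ bits), and reads of the input are redirected to the region of $U_N$'s input holding $x$. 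After each simulated step $U_N$ increments its step counter and halts — accepting exactly when $M_i$ has accepted — as soon as $M_i$ halts or the counter reaches $t$. Loading the program costs $O(i)$, (re)locating the input costs $O(|x|)$, and the loop costs $O(1)$ per step over at most $t$ steps, giving $T_N \in O(i + t + |x|)$.

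I expect the main obstacle to be careful bookkeeping rather than any single hard idea: one must check that \emph{every} feature of the real word-RAM model of Erickson et al.\ — in particular indirect addressing into both the word and the real register files, and the mismatch between $M_i$'s word size $w$ and $U_N$'s larger word size $w_N$ — is simulable with only constant overhead per step and without ever touching more than $O(1)$ of the $2^w$ cells in either block, so that these exponentially large blocks never need to be initialised or scanned. Pinning the word-size overhead down to $w + O(\log i + \log t)$, rather than a looser $\max(w,\log i,\log t)+O(1)$, also requires a little care in the memory layout.
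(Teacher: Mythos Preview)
Your proposal is correct and follows essentially the same plan as the paper: partition $U_N$'s memory into a large uninitialised block for $M_i$'s register files and a small block for the program, program counter, step counter, and scratch, then simulate step by step with $O(1)$ overhead per step and the same word-size analysis. Two minor deviations from the paper's conventions are worth flagging: in the paper's model uninitialised cells are treated as \emph{nondeterministically} initialised (not zero), which is exactly why no explicit simulation of guessing is needed and why leaving the register block untouched is already correct; and redirecting input reads is fragile because $M_i$ may overwrite those cells, so the paper instead copies $x$ into the simulated register block (this is where the $O(|x|)$ actually comes from).
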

\begin{proof}[Proof of Lemma \ref{lem:nondetsim}]
    We construct $U_N$ by splitting the memory into two sections, a section $A$ containing $2^w$ uninitialized real cells and word cells each to store the memory of $M_i$ and a section $B$ of size $O(i + \log t)$ to store the instructions of $M_i$, the index of the next instruction to be simulated and a counter storing the number of steps left to be simulated, as well as any temporary variables needed to simulate the operations.
    At the start $U_N$ copies $x$ into the first $|x|$ word cells of section $A$ of the memory and initializes section $B$ to contain the instructions of $M_i$, the index of the first instruction and the number of steps left as $t$.
    This initialization takes time $O(|x| + i + \log t)$.
    Then $U_N$ simulates $M_i$ step by step as long as there are any steps left.
    It takes time $O(1)$ to find the next instruction to be simulated.
    Any access to memory that $M_i$ uses is translated to address into section $A$ instead and every operation is modified to behave as if the cell had word-size $w$.
    Afterwards the counter for the steps remaining is decremented.
    All of this is possible in (amortized) time $O(1)$, so in total the simulation itself takes $O(t)$ steps.

    The only time any of the cells needs a word-size (potentially) larger than $w$ is to address both sections $A$ and $B$ of the memory, to store the instructions of $M_i$ or to store the step-counter.
    For all of these a word-size of $w + O(\log i + \log t)$ is sufficient.

    There is no need for any special treatment of nondeterminism, since in this model uninitialized cells automatically are treated as nondeterministically initialized cells.
\end{proof}

We also need to simulate nondeterministic real RAMs using a 
deterministic real RAM.
In the Boolean case, we could simply iterate over all proof strings.
This does not work in the real case, since we cannot iterate over
all real proof strings. We use an algorithm of Renegar
instead (see Lemma~\ref{lem:renegar}), which is a Boolean
algorithm to decide whether an instance of $\ETR$ is true.
The algorithm is stated as a parallel algorithm, since Renegar used
it to show that $\ETR \in \PSPACE$ (using the fact that parallel polynomial time is polynomial space). It can be transformed into a 
sequential algorithm using standard methods.

\begin{lemma}
    \label{lem:detsim}
    There is a deterministic real word-RAM $U_D$ that given $i, w, t \in \IN$ and a binary input $x \in \{0, 1\}^\star$ can simulate running $M_i$ on $x$ with word-size $w$ for $t$ steps in time $T_D \in 2^{\poly(|x|, i, 2^w, t)}$ using any word-size $w_D \geq \poly(\log |x|, \log i, w, \log t)$.
\end{lemma}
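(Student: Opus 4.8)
The plan is to build $U_D$ by first reducing the simulation problem to an $\ETR$-decision problem, and then running Renegar's algorithm on the resulting formula. Concretely, given $i$, $w$, $t$, and $x$, I would invoke Lemma~\ref{lem:makesucc} to produce, in time $\poly(|M_i|\cdot|x|\cdot w)\cdot\polylog(t)$, a succinct circuit $C$ encoding an existential formula $F$ over the reals which is true if and only if $M_i$ accepts $x$ within $t$ steps when run with word-size $w$. (Here $|M_i| = O(i)$ by the efficiency assumption on the enumeration.) Next $U_D$ explicitly evaluates $C$ at all of its (at most) $2^{\poly(i,|x|,w)\cdot\polylog(t)}$ input strings to materialize the ordinary $\ETR$ instance $F$; note $F$ has size at most exponential in the stated parameters and at most exponentially many variables. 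Finally $U_D$ runs the (sequentialized) Renegar algorithm of Lemma~\ref{lem:renegar} on $F$ and accepts iff $F\in\ETR$.

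The key bookkeeping steps, in order, are: (1) reconstruct the instruction list of $M_i$ in time $O(i)$ using the enumeration; (2) call the construction of Lemma~\ref{lem:makesucc} to get $C$; (3) expand $C$ into the explicit formula $F$ by iterating over all inputs of $C$ and writing the $2^N$ node descriptions to memory, using indirect addressing to assemble the tree --- this costs time and space exponential in $\poly(|x|,i,2^w,t)$, which is absorbed by the stated bound $2^{\poly(|x|,i,2^w,t)}$; (4) run Renegar's decision procedure on $F$, which on an instance of bit-size $s$ with $\nu$ variables runs in time $s^{O(\nu)}\cdot 2^{O(\nu)}$ in its sequential form, again exponential in the same parameters. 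For the word-size, all that $U_D$ needs is enough bits to address the expanded formula and the working memory of Renegar's algorithm; since those have size $2^{\poly(\log|x|,\log i, w,\log t)}$, a word-size $w_D \geq \poly(\log|x|,\log i, w,\log t)$ suffices to address them, matching the statement.

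The main obstacle is not any single deep step but the careful accounting that every stage stays within the claimed $2^{\poly(|x|,i,2^w,t)}$ time bound while the word-size stays merely polynomial in $\log|x|,\log i, w, \log t$: in particular one must check that expanding the succinct circuit $C$ and then running Renegar's algorithm can both be organized so that the only objects requiring large address space are arrays whose index range is itself only $2^{\poly(\ldots)}$, so that $\poly(\ldots)$-bit words address them. The correctness of the reduction itself is immediate from Lemma~\ref{lem:makesucc} and the correctness of Renegar's algorithm (Lemma~\ref{lem:renegar}); the transformation of Renegar's parallel algorithm into a sequential real word-RAM is standard (simulate the parallel computation depth-first, reusing space), and I would only sketch it.
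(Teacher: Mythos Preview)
Your overall strategy---reduce the acceptance question to an $\ETR$ instance $F$ via the Erickson--van der Hoog--Miltzow construction and then decide $F$ with Renegar's algorithm---is exactly what the paper does. The paper, however, skips the detour through Lemma~\ref{lem:makesucc}: it invokes the Erickson et al.\ construction \emph{directly} to obtain an explicit formula $F$ of size $\poly(|x|,i,2^w,t)$ with $O(2^w\cdot t)$ variables, and then applies Renegar. Your route through the succinct circuit $C$ followed by explicit expansion is not wrong, but it leads you to overstate the size of $F$ as $2^{\poly(i,|x|,w)\cdot\polylog(t)}$ and to say $F$ has ``at most exponentially many variables''.

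That last point is where the accounting breaks. Renegar's running time is $(md)^{O(n)}$, so if you only bound the number $n$ of variables by something exponential in the parameters, the bound $s^{O(\nu)}$ you quote becomes \emph{doubly} exponential, not $2^{\poly(|x|,i,2^w,t)}$. What saves the argument---and what the paper uses---is that the Erickson et al.\ formula in fact has only $\poly(|x|,i,2^w,t)$ many variables (roughly $2^w$ registers times $t$ time steps) and polynomial size in the same parameters; then $(md)^{O(n)}=2^{\poly(|x|,i,2^w,t)}$ as required. Your word-size paragraph already implicitly relies on this tighter bound (you write that the objects have size $2^{\poly(\log|x|,\log i,w,\log t)}$, which is inconsistent with your earlier $2^{\poly(i,|x|,w)\cdot\polylog(t)}$), so you should make it explicit: either cite the Erickson et al.\ construction directly for the polynomial formula size and variable count, or note that the succinct circuit from Lemma~\ref{lem:makesucc} encodes precisely that polynomial-size formula.
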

\begin{proof}[Proof of Lemma \ref{lem:detsim}]
    We use \cite{erickson2022smoothing} to construct an existential formula $F$ over the reals for $M_i$ with input $x$ being simulated with word-size $w$ for $t$ steps of size $\poly(|x|, i, 2^w, t)$.
    We can then use Renegar's algorithm (see Lemma~\ref{lem:renegar}) to deterministically check whether $F$ is satisfiable in time $2^{\poly(|x|, i, w, t)}$ for any word-size $w_D \geq \poly(\log |x|, \log i, w, \log t)$.
    Note that Renegar's algorithm does not need any of the real cells whatsoever and only needs the word-size to address its memory.
\end{proof}

We call a function $t: \IN \to \IN$ time constructible
if there is a deterministic real RAM that on every input of length $n$ outputs $t(n)$ in time $O(t(n))$
with word size $O(\log t(n))$.

Note that the following nondeterministic time hierarchy for the real word-RAMs requires the use of $O$-Notation for both classes due to a lack of acceleration theorems for word-RAMs.
\begin{lemma}\label{lemma:proper:incl}
    Let $t_1, t_2: \IN \to \IN$ be monotone and 
    time-constructible functions with $t_1(n+1) \in o(t_2(n))$ and $t_2(n) \in \Omega(n)$.
    Then $\NTimereal(O(t_1(n))) \subsetneq \NTimereal(O(t_2(n)))$.
\end{lemma}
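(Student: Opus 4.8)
The plan is to adapt the classical nondeterministic time hierarchy theorem (Cook / Seiferas–Fischer–Meyer style) to the real word-RAM model, using the universal machine $U_N$ from Lemma~\ref{lem:nondetsim} together with the deterministic simulation $U_D$ from Lemma~\ref{lem:detsim} to do the diagonalization. First I would fix an enumeration $M_1, M_2, \ldots$ of nondeterministic real word-RAMs as in Lemma~\ref{lem:nondetsim}, and then build a diagonal machine $D$ that, on an input encoding an index $i$ (say, the string $1^i 0 x$ for a suffix $x$), wants to differ from $M_i$ on some input. The subtlety, exactly as in the Boolean case, is that $D$ runs in time roughly $t_2$ but must ``flip the answer'' of $M_i$, which only runs in time $t_1$; since $t_1(n+1) \in o(t_2(n))$ and nondeterministic time is not known to be closed under complement, we cannot naively flip on the same input. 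The standard fix is delayed diagonalization: $D$ on input $1^i 0 x$ with $|x|$ not too large just simulates $M_i$ on $1^i 0 (x\text{+}1)$ (the lexicographic successor), passing the answer through, except on the ``last'' input of the block where $|x|$ is maximal, where $D$ deterministically computes the negation of $M_i$'s answer on $1^i 0 0^{|x|}$ using the deterministic simulator $U_D$.

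The key steps, in order, are: (1) Choose, for each index $i$, a block of inputs of lengths in an interval $[n_i, n_i']$ with $n_i' = n_i + g(i)$ for a slowly growing $g$, chosen so that $t_1(n_i'+1)$ is still much smaller than $t_2(n_i')$ — here I use $t_1(n+1) \in o(t_2(n))$ and monotonicity to guarantee that a whole block's worth of ``chained'' simulations costs only $O(t_1(n))$ for the relevant $n$, which is $o(t_2(n))$, so it fits in the $O(t_2)$ budget of $D$; the block length $g(i)$ may grow with $i$ but is finite, which is all we need. (2) On inputs $1^i0x$ with $|1^i0x| < n_i'$, have $D$ use $U_N$ to simulate $M_i$ on the successor string for $t_1(|1^i0x|+1)$ steps with the appropriate word size and copy its acceptance behavior; by Lemma~\ref{lem:nondetsim} this costs $O(i + t_1 + n) = O(t_2(n))$ since $i$ is a constant for a fixed block and $t_2(n) \in \Omega(n)$. (3) On the unique input $1^i00^{\cdots}$ of length exactly $n_i'$, have $D$ invoke $U_D$ from Lemma~\ref{lem:detsim} to decide deterministically whether $M_i$ accepts $1^i00^{\cdots}$ within $t_1(n_i')$ steps, and accept iff $M_i$ does \emph{not}; because $U_D$ runs in time $2^{\poly(\cdots)}$, this is the step that forces the block length $g(i)$ to be large enough that, amortized along the chain, a single expensive deterministic step at the end of the block is affordable — concretely, one picks $n_i$ (or the word size allotment) so that $t_2(n_i') \geq 2^{\poly(\cdots, t_1(n_i'))}$; this is possible since $t_1(n+1) \in o(t_2(n))$ lets $n_i'$ be chosen after $i$ to make $t_2$ dominate any fixed function of $t_1$. (4) Verify time-constructibility: since $t_1, t_2$ are time-constructible, $D$ can compute the relevant length thresholds and step bounds within its budget. (5) Conclude $L(D) \in \NTimereal(O(t_2))$ by construction, and $L(D) \notin \NTimereal(O(t_1))$: if some $M_i$ decided $L(D)$ in time $O(t_1)$, then by padding the $O$-constant into the index (replace $M_i$ by a machine that runs $M_i$ with a larger clock, still within the enumeration) we may assume $M_i$ runs in time $t_1$ exactly, and then $D$ agrees with $\lnot M_i$ somewhere in block $i$ while also agreeing with $M_i$ on the chained inputs, which telescopes to a contradiction in the standard way.

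The main obstacle I expect is the bookkeeping around step~(3): making the deterministic blow-up of $U_D$ (which is $2^{\poly(|x|,i,2^w,t)}$, doubly bad because of the $2^w$ term) fit inside the $O(t_2)$ budget at the end of each block. In the Boolean hierarchy theorem one iterates over all proof strings, costing only $2^{O(t_1)}$, but here Renegar's algorithm is needed instead and its word-size parameter interacts with the running time. The resolution is that we are free to choose the word size $w$ used in the block as small as permitted (logarithmic in the running time, as Definition~\ref{def:ntimereal} requires only $w \geq c\log t + c$), so $2^w$ is only polynomial in $t_1(n_i')$, and then $U_D$'s running time is $2^{\poly(n_i', i, t_1(n_i'))}$; choosing $n_i'$ large enough relative to $i$ (using $t_1(n+1)=o(t_2(n))$ to let $t_2(n_i')$ outgrow this fixed function of $n_i'$) absorbs it. A secondary nuisance is that, lacking linear-speedup / acceleration theorems for word-RAMs, we must keep the $O(\cdot)$ on both sides throughout and be careful that the ``padding the constant into the index'' move in step~(5) is legitimate for this enumeration — which it is, because prepending a clock to $M_i$ yields another machine in the list $M_1, M_2, \ldots$ appearing at some later, still finite, index, so it governs its own (later) diagonalization block.
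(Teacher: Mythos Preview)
Your overall approach is the right one and matches the paper's: a \v{Z}\'ak-style delayed diagonalization, using $U_N$ from Lemma~\ref{lem:nondetsim} for the chaining simulations and $U_D$ from Lemma~\ref{lem:detsim} for the single deterministic flip. The inclusion, the chaining step (2), the padding-into-the-index manoeuvre in (5), and your discussion of how to keep $2^w$ polynomial in $t_1$ are all fine.

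There is, however, a genuine gap in step~(3). You write that one can pick $n_i'$ so that $t_2(n_i') \geq 2^{\poly(\cdots, t_1(n_i'))}$, justifying this by ``$t_1(n+1) \in o(t_2(n))$ lets $n_i'$ be chosen after $i$ to make $t_2$ dominate any fixed function of $t_1$.'' That does not follow: take $t_1(n)=n$ and $t_2(n)=n^2$. The hypothesis $t_1(n+1)\in o(t_2(n))$ holds, but $t_2(n)=n^2$ never dominates $2^{\poly(n)}$. The right-hand side $2^{\poly(n_i',i,t_1(n_i'))}$ is \emph{not} a fixed function of $i$; it grows with $n_i'$, so lengthening the block does not help if the deterministic simulation targets an input of length $n_i'$.

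The fix, which is exactly what the paper does, is to have the deterministic flip at the \emph{long} end of the chain simulate $M_i$ on the \emph{short} end. Concretely, the paper uses input format $1^i01^m01^k$; the nondeterministic branch simulates $M_i$ on $1^i01^m01^{k+1}$, and the deterministic branch simulates $M_i$ on $1^i01^m0$, an input of length $2+i+m$ independent of $k$. Then the cost $T_D$ and required word size $W_D$ of $U_D$ depend only on $i$, $m$, and $t_2(2+i+m)$, which are fixed once $i,m$ are fixed. Since $t_2(n)\in\Omega(n)\to\infty$, one simply waits until $k$ is large enough that $t_2(2+i+m+k)\geq T_D$ and $\log(t_2(2+i+m+k))\geq W_D$; such $k$ always exists. (The paper tests both conditions before switching to the deterministic branch; your proposal glosses over the word-size condition $W_D$, which you should also track.) With this correction your telescoping argument in~(5) goes through verbatim.
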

\begin{proof}[Proof of Lemma \ref{lemma:proper:incl}]
    Note that $\NTimereal(O(t_1(n))) \subseteq \NTimereal(O(t_2(n)))$ holds directly due to the monotonicity of $t_1$.\footnote{This condition is implicit in the original nondeterministic time-hierarchy proof in \cite{cook1972hierarchy}, by virtue of that proof being only for polynomial time bounds. \cite{seiferas1978separating} and \cite{vzak1983turing} resolve this by phrasing their results as existence of a language in the set intersection and not as a strict inclusion. \cite{vzak1983turing} requires $t_2(n+1) \in \Theta(t_2(n))$ whenever they phrase their result as a strict inclusion. Otherwise a simple counterexample where $\NTimereal(O(t_1(n))) \subseteq \NTimereal(O(t_2(n)))$ is violated would be 
    \begin{align*}
        t_1(n) &= \begin{cases}n & \text{if $n$ is even}\\n^3 & \text{otherwise}\end{cases} & t_2(n) &= \begin{cases}n^2 & \text{if $n$ is odd}\\n^4 & \text{otherwise}\end{cases}
    \end{align*}
    }
    To show non-equality we use a modification of the diagonalization in \cite{vzak1983turing}.
    Special care has to be taken in dealing with the word-sizes.

    Construct a nondeterministic real word-RAM $M$ as follows:
    \begin{itemize}
        \item[] On input $1^i01^m01^k$ of length $n = 2 + i + m + k$ compute upper bounds $T := T_D$ and $W := W_D$ for Lemma~\ref{lem:detsim} with $|x| = 2+i+m, i=i, w = \log(t_2(2+i+m))$ and $t=t_2(2+i+m)$.
        Should these values be too big to compute in time $t_2(n)$ and with word-size $\log(t_2(n))$, abort the computation and assume we are always in the first case of what follows.
            \begin{description}
                \item[if $k < T$ or $\log(t_2(n)) < W$,] then simulate $M_i$ nondeterministically on $1^i01^m01^{k+1}$ for $t_2(n)$ steps and word-size $\log (t_2(n))$ using $U_N$ from Lemma~\ref{lem:nondetsim}.
                    Then return the same result as that simulation.
                \item[if $k \geq T$ and $\log(t_2(n)) \geq W$,] then simulate $M_i$ deterministically on $1^i01^m0$ for $t_2(2+i+m)$ steps and word-size $\log(t_2(2+i+m))$ using $U_D$ from Lemma~\ref{lem:detsim}.
                    Then invert the result of that simulation.
            \end{description}
    \end{itemize}
    We see that $M$ uses time $O(t_2(n))$ in both cases.
    In the first case this follows directly from the time-constructability of $t_2$ and for the second case it follows from the fact that we only run this case if we can guarantee it will finish in time $O(t_2(n))$.
    Furthermore $M$ uses a word-size of $O(\log (t_2(n)))$ and will produce the same outputs for any higher word-size.
    For the first case this follows directly from Lemma~\ref{lem:nondetsim} and $(\log(t_2(n)) + O(\log i + \log t_2(n)) = O(\log t_2(n))$ due to $i \in O(n)$ and $t_2(n) \in \Omega(n)$.
    For the second case it follows from the fact that we only run this case if we can guarantee that $U_D$ uses a word-size of at most $W \leq \log {t_2(n)}$ and at most $T \leq k < n \in O(t_2(n))$ steps.
    Thus the language $L$ decided by $M$ is contained in $\NTimereal(O(t_2(n)))$.

    If we assume $\NTimereal(O(t_1(n))) = \NTimereal(O(t_2(n)))$, there is also an equivalent nondeterministic real word-RAM $M'$ running in time $c_1 \cdot t_1(n) + c_1$ for some $c_1 \in \IN$, being correct for every word-size $w \geq c_2 \log(c_1 \cdot t_1(n) + c_1) + c_2$ for some $c_2 \in \IN$.
    Let $i \in \IN$ be such that $M_i = M'$ and let $m \in \IN$ be such that 
    \begin{align}
        c_1 \cdot t_1(2+i+m+k+1) + c_1 &\leq t_2(2+i+m+k)\label{eq:hier:enoughtime}\\
        \text{and}\quad c_2 \log(c_1 \cdot t_1(2+i+m+k+1) + c_1) + c_2 &\leq \log(t_2(2+i+m+k))\label{eq:hier:enoughbits}
    \end{align}
    hold for all $k \in \IN$.
    Such an $m$ exists since $t_1(n+1) \in o(t_2(n))$.
    \eqref{eq:hier:enoughtime} ensures that the nondeterministic simulation of $M_i$ is run to completion and $\eqref{eq:hier:enoughbits}$ ensures its result is correct by running the simulation with a big enough word-size.
    Let $T$ and $W$ be defined as in $M$ and let $K \in \IN$ be the smallest integer with $K \geq T$ and $\log(t_2(2+i+m+K)) \geq W$ and such that the computations of $T$ and $W$ succeed in time $t_2(n)$ and with word-size $\log(t_2(n))$.
    Then we get a contradiction via
    \begin{align*}
         & \text{$1^i01^m0$ is accepted by $M_i$ with word-size $\log(t_2(2+i+m))$}\\
        \Leftrightarrow & \text{$1^i01^m01$ is accepted by $M_i$ with word-size $\log(t_2(2+i+m+1))$}\\
        \Leftrightarrow & \text{$1^i01^m011$ is accepted by $M_i$ with word-size $\log(t_2(2+i+m+2))$}\\
        \vdots\\
        \Leftrightarrow & \text{$1^i01^m01^K$ is accepted by $M_i$ with word-size $\log(t_2(2+i+m+K)) \geq W$}\\
        \Leftrightarrow & \text{$1^i01^m0$ is not accepted by $M_i$ with word-size $\log(t_2(2+i+m))$}
    \end{align*}
Thus the inclusion has to be strict.    
\end{proof}

Corollary~\ref{cor:existsR_hierarchy} then directly follows.

\subsection{Proofs of Section \ref{sec:poly:many:variables}}

It turns out that the product operator is responsible for
bringing the complexity of $\ETR$ up to $\PSPACE$.
It allows to reduce the $\PSPACE$-complete problem $\QBF$ to
$\prodETRpoly$ (see Lemma~\ref{lem:succETRpoly:1}).

In the rest of this section, we prove that $\prodETRpoly$ and the remaining problems discussed above are in $\PSPACE$.
We first reduce the problems to $\succETRpoly$
starting with obvious relations:
\begin{lemma} \label{lem:succETRpoly:2}
    $\sumETRpoly \leqp \sumprodETRpoly$ and $\prodETRpoly \leqp \sumprodETRpoly$.
\end{lemma}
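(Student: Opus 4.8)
The plan is to observe that both reductions are witnessed by the identity map, since $\sumprodETRpoly$ is by definition a syntactic extension of both $\sumETRpoly$ and $\prodETRpoly$. First I would recall that an instance of $\sumETRpoly$ is an $\ETR$ sentence whose arithmetic terms are built from $+$, $\cdot$, the constants $0,1$, and the unary summation operator $\sum_{x_i=0}^1$, while an instance of $\prodETRpoly$ is the analogous object with $\prod_{x_i=0}^1$ in place of $\sum_{x_i=0}^1$. In both cases the set of permitted term constructors is a subset of those permitted in $\sumprodETRpoly$, which allows $+$, $\cdot$, $\sum_{x_i=0}^1$ and $\prod_{x_i=0}^1$ simultaneously; the Boolean layer over $\{\vee,\wedge,\neg,<,\le,=\}$ and the leading block of existential quantifiers are identical across all three problems.

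Next I would check that the semantics of the shared operators coincide. The interpretation of $+$ and $\cdot$ is the standard one in all three cases, and the unary operator $\sum_{x_i=0}^1 t$ (respectively $\prod_{x_i=0}^1 t$) is, by definition in each problem, the term $t[0/x_i] + t[1/x_i]$ (respectively $t[0/x_i]\cdot t[1/x_i]$). Hence a formula that is a valid $\sumETRpoly$ instance is true precisely when it is true read as a $\sumprodETRpoly$ instance, and likewise for $\prodETRpoly$; the identity map therefore preserves membership in the language.

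Finally, since the identity map runs in linear time and leaves the instance—and in particular its variable set—untouched, the polynomial bound on the number of variables is trivially respected, and we obtain polynomial-time many-one reductions $\sumETRpoly \leqp \sumprodETRpoly$ and $\prodETRpoly \leqp \sumprodETRpoly$. There is no genuine obstacle here; the only point requiring a word of care is to make the syntactic containment of the grammars and the agreement of the semantic rules explicit, so that the identity map is seen to be a legitimate reduction rather than merely an inclusion of syntax.
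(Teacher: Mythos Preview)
Your proposal is correct and matches the paper's treatment: the paper simply calls these ``obvious relations'' and states the lemma without proof, which is exactly the identity-map argument you spell out.
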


\begin{lemma} \label{lem:succETRpoly:3}
    $\sumprodETRpoly \leqp \succETRpoly$.
\end{lemma}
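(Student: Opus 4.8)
I would give a direct many-one reduction: from a $\sumprodETRpoly$ sentence $\varphi$, output a Boolean circuit $C$ that succinctly encodes the ordinary $\ETR$ formula $\hat\varphi$ obtained by completely unfolding every unary sum $\sum_{x_i=0}^1$ and product $\prod_{x_i=0}^1$ of $\varphi$. The point that makes this land inside $\succETRpoly$ and not merely $\succETR$ is that unfolding substitutes a constant $0$ or $1$ for each bound (dummy) variable, so although $\hat\varphi$ can have size $2^{\Theta(|\varphi|)}$, the variables actually occurring in it are exactly the free variables of $\varphi$, of which there are at most $|\varphi|$. Hence $C$ is a legal $\succETRpoly$ instance, and $\varphi$ is true iff $C\in\succETRpoly$.

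To build $C$ I would first fix an addressing scheme for the nodes of $\hat\varphi$. Regard $\varphi$ itself as a tree, and for a node $v$ of $\varphi$ let $A(v)$ be the (at most $|\varphi|$ many) summation/product nodes that are proper ancestors of $v$. A node of $\hat\varphi$ is then a pair $(v,\beta)$ with $\beta\colon A(v)\to\{0,1\}$ recording, for each such ancestor, which expansion branch we are on; I would encode it using $O(\log|\varphi|)$ bits for $v$ together with a length-$|\varphi|$ bit vector for $\beta$ (coordinates outside $A(v)$ forced to $0$). Bit strings that do not decode to such a canonical pair are declared junk leaves labelled with the constant $1$; this pads the node set to exactly $2^N$ without affecting the value at the root, since junk nodes are never produced as children of valid ones.

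The circuit then computes, from an address $(v,\beta)$, the label, parent and two children of that node by a short case analysis on $v$ in $\varphi$ (all of $\varphi$ and the sets $A(v)$ being hard-wired): a Boolean, comparison, or arithmetic operator node keeps its label and has children $(c,\beta)$ for the $\varphi$-children $c$ of $v$; a summation (product) node binding $x_i$ with child $c$ is relabelled $+$ (resp.\ $\cdot$) and gets children $(c,\beta\cup\{v\mapsto0\})$ and $(c,\beta\cup\{v\mapsto1\})$; a constant leaf is copied; and a variable leaf $x_j$ is turned into the constant $\beta(u)$ if some $u\in A(v)$ is the innermost binder of $x_j$, and is otherwise kept as the free variable $x_j$ (whose unary index is at most $|\varphi|$). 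The parent of $(v,\beta)$ is $(p,\beta)$ when the $\varphi$-parent $p$ of $v$ is not a summation/product node, and $(p,\beta|_{A(p)})$ otherwise. Everything here is computable by a circuit of size $\poly(|\varphi|)$ in time $\poly(|\varphi|)$, so the reduction runs in polynomial time; a routine induction on $v$ shows that the value of $(v,\beta)$ in $\hat\varphi$ equals the value of the subexpression of $\varphi$ rooted at $v$ under the partial assignment $\beta$ of its summation/product ancestors, so the root values agree and the reduction is correct. Together with Lemma~\ref{lem:succETRpoly:2} this places $\sumETRpoly$, $\prodETRpoly$ and $\sumprodETRpoly$ all inside $\succETRpoly$.

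I do not expect a serious obstacle: this is the standard ``expand and re-encode succinctly'' pattern. The only points needing care are bookkeeping ones --- checking that $\hat\varphi$ is genuinely a tree (each unfolded node has a unique parent, which holds because we never merge two expansion branches), handling variable scoping and shadowing by always selecting the innermost binder in $A(v)$, and confirming that the junk padding stays unreachable from the root --- none of which requires a new idea.
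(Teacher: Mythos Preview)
Your proposal is correct and follows essentially the same approach as the paper: both arguments unfold the unary $\sum$/$\prod$ operators, address each node of the expanded tree by a pair (original node, bit vector recording the branch taken at each sum/product ancestor), and observe that the resulting succinct circuit has only the free variables of $\varphi$ left, hence polynomially many, placing the instance in $\succETRpoly$. Your write-up is in fact a little more explicit than the paper's about padding, parent pointers, and scope resolution for shadowed binders, but there is no substantive difference in idea.
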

\begin{proof}
    Let $\varphi$ be a $\sumprodETRpoly$ instance, encoded as a tree $T$.
    We construct a circuit $C$ encoding $\varphi$ by expanding the summation and product operators explicitly, i.e., for a node $v \in T$ on depth $\ell$, we construct $2^\ell$ copies of $v$, indexed as $v_{(e_1, \ldots, e_\ell)}$ with $e_i \in \{0, 1\}$. Only summation and product operators are counted for the depth $\ell$, thus the root node and all nodes in an instance without summation or product operators would have depth $0$, and an index $v_{()}$.
    
    If $v$ is a leaf, then  $v_{(e_1, \ldots, e_\ell)}$ retains the labeling of $v$, unless the labeling is one of the summation variables of its ancestors.
    In that case it is replaced by the $e_i$ corresponding to 
    the summation variable.
    Otherwise if $v$ is an inner node with children $s$ and possibly $t$ has any labeling other than the unary sum or product, then $v_{(e_1, \ldots, e_\ell)}$ has children $s_{(e_1, \ldots, e_\ell)}$ and $t_{(e_1, \ldots, e_\ell)}$.

    It remains to construct the case when $v$ is a unary sum or product.
    If $v$ is a sum gate, then we construct $v_{(e_1, \ldots, e_\ell)}$ as the sum of $v_{(e_1, \ldots, e_\ell, 0)}$ and $v_{(e_1, \ldots, e_\ell, 1)}$. If $v$ is a product gate, then we construct $v_{(e_1, \ldots, e_\ell)}$ as the product of $v_{(e_1, \ldots, e_\ell, 0)}$ and $v_{(e_1, \ldots, e_\ell, 1)}$.

    The node $v_{()}$ where $v$ is the root of $T$ is the root of the constructed $\succETRpoly$ instance.
    All of this can easily be computed by a circuit $C$ taking as input an encoding of $v$ and an encoding of $(e_1, \ldots, e_\ell)$ and an encoding of some $k \in \IN$ denoting which node of the binary tree we are in.
    All of these can be encoded with polynomially many bits in the size of $T$ and the bounds used on the unary sum and product gates.
\end{proof}

To show that $\succETRpoly$ is in $\PSPACE$,
we use results by \cite{renegar1992computational}.
One of the famous consequences of Renegar's work is
that $\ETR \in \PSPACE$. But Renegar shows even more,
because he can handle an exponential number of arithmetic
terms of exponential size with exponential degree as long as the number of variables is polynomially 
bounded.

Renegar's states his results in terms of parallel algorithms.
However, it is well-known that parallel polynomial time equals
polynomial space.

\begin{lemma}[Theorem 1.1 in \cite{renegar1992computational}]
    \label{lem:renegar}
    Let $p_1, \ldots, p_m$ be polynomials in $n$ variables $x_1, \ldots, x_n$ and of degree at most $d \geq 2$ where all coefficients are given as integers of bit length $L$.
    Additionally let $\circ_1, \ldots, \circ_m \in \{>, <, =\}$ be comparison operators and let $P$ be a Boolean function with $m$ inputs.
    Then there is an algorithm deciding whether there are values $x_1, \ldots, x_n \in \R$ such that $P(p_1(x_1, \ldots, x_n) \circ_1 0, \ldots, p_m(x_1, \ldots, x_n) \circ_m 0) = 1$ in time
    \[
        L\log(L)\log\log(L)\left(md\right)^{O(n)}
    \]
    using $\left(md\right)^{O(n)}$ evaluations of $P$.
    
    Furthermore there is a parallel algorithm deciding the same in time
    \[
        \log(L)\left(n\log(md)\right)^{O(1)} + Time(P, k)
    \]
    using $L^2(md)^{O(n)}$ processors and requiring $(md)^{O(n)}$ evaluations of $P$.
    Here $Time(P, k)$ is the parallel time needed for the evaluations of $P$ when using $k (md)^{O(n)}$ processors (for any $k \geq 1$).
\end{lemma}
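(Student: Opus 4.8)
The plan is to obtain Lemma~\ref{lem:renegar} directly from Renegar's work: the statement is a mildly reformulated instance of Theorem~1.1 of \cite{renegar1992computational}, so the task reduces to (i) recording that theorem and (ii) checking that the form we use here---an arbitrary outer Boolean combinator $P$ applied to the sign conditions $p_i \circ_i 0$ with $\circ_i \in \{>,<,=\}$, together with both the sequential bound $L\log(L)\log\log(L)\,(md)^{O(n)}$ and the parallel bound---is exactly what Renegar proves (he treats the existential case with a general Boolean matrix and states both resource measures). In our downstream application to $\succETRpoly$ the parameters get instantiated with $n$ polynomial in the input length while $m$, $d$, $L$ are exponential, so $(md)^{O(n)}$ is exponential but $(n\log(md))^{O(1)}$ stays polynomial; that is the regime in which the parallel bound is useful.

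For completeness I would recall the shape of Renegar's argument. First, the sentence is equivalent to asking whether some realizable sign pattern of $p_1,\dots,p_m$ on which $P$ evaluates to true is nonempty; since the number of sign patterns that can occur---and the number of connected components they carve the space into---is $(md)^{O(n)}$, it suffices to compute at least one sample point in every connected component of every sign set and then evaluate $P$ there. To produce the sample points one (a) intersects with a large bounding ball and applies an infinitesimal deformation to put the arrangement in general position, (b) for each subfamily of at most $n$ of the deformed polynomials solves the associated zero-dimensional critical-point system, (c) encodes the algebraic coordinates of the resulting points by univariate rational parametrizations together with Thom codes, and (d) evaluates each $p_i \circ_i 0$ at each encoded point. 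Every subroutine---multivariate resultants, subresultant (Sturm--Habicht) sequences for real-root counting and isolation, and arithmetic and sign comparison of the algebraic coordinates---is assembled from arithmetic on polynomials of degree $(md)^{O(n)}$ with coefficients of bit length $L\,(md)^{O(n)}$, and each can be carried out by an arithmetic network of depth polynomial in $n$ and polylogarithmic in $m$, $d$, and $L$; this yields the parallel bound, and the sequential bound follows by the standard simulation (it is also stated directly in \cite{renegar1992computational}). The only corollary we actually need later is the easy fact that a uniform family of circuits of polynomial depth and exponential size can be evaluated in polynomial space, which converts the parallel bound into $\succETRpoly \in \PSPACE$.

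The hard part is precisely what I would \emph{not} reproduce: steps (b)--(c), i.e., bounding simultaneously the number, the degree, and the bit length of the real algebraic numbers arising as coordinates of the critical points, uniformly over all $(md)^{O(n)}$ subfamilies, while keeping the whole computation of polylogarithmic arithmetic depth. This uniform quantitative control over algebraic numbers is the technical heart of \cite{renegar1992computational}, and re-deriving it is outside the scope of this paper; hence we content ourselves with citing the result and verifying that the restated form is the one we use.
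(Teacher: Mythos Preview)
Your proposal is correct: the paper does not prove this lemma at all---it is stated purely as a citation of Theorem~1.1 in \cite{renegar1992computational}, with no accompanying proof or proof sketch. Your recognition that the ``proof'' amounts to verifying the restatement matches Renegar's theorem is exactly right, and the additional exposition you offer (the sign-pattern/sample-point outline and the identification of which regime of the parameters is used downstream) goes beyond what the paper provides.
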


\begin{lemma} \label{lem:succETRpoly:4}
     $\succETRpoly \in \PSPACE$.
\end{lemma}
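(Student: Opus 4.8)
The plan is to expand the succinct input circuit into an (exponentially large) explicit $\ETR$-style constraint system with only polynomially many real variables, and then invoke Renegar's algorithm (Lemma~\ref{lem:renegar}) in its parallel form, exploiting the fact that parallel polynomial time equals polynomial space. First I would recall that a $\succETRpoly$ instance is a circuit $C$ succinctly encoding a formula-tree $T$ on $2^N$ nodes, where the arithmetic leaves are labelled by one of polynomially many variables $x_1,\dots,x_r$ (the variables are encoded in unary, so $r = \poly(|C|)$) or by constants $0,1$, and the internal arithmetic nodes are labelled $+,\cdot,-$ while the Boolean part uses $\wedge,\vee,\neg$ and the comparisons $<,\le,=$. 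By pushing negations down (as in Lemma~\ref{lem:removeneg}, whose proof applies verbatim here since it does not create new variables) I may assume the Boolean part is negation-free with only $<$ and $=$.

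Next I would argue that $T$ defines a Boolean combination $P$ of at most $2^N$ atomic comparisons $p_\ell \circ_\ell 0$, where each $p_\ell$ is the polynomial computed by an arithmetic subtree of $T$. Each such subtree has at most $2^N$ gates, so each $p_\ell$ has degree at most $2^{2^N}$ and can be written with integer coefficients of bit length $L \le 2^{2^{O(N)}}$; crucially the number of variables is $r \le \poly(|C|)$, independent of the exponential blow-up. Thus the parameters fed to Lemma~\ref{lem:renegar} are $m \le 2^N$, $d \le 2^{2^N}$, $L \le 2^{2^{O(N)}}$, and $n = r = \poly(|C|)$. The parallel running time bound $\log(L)\,(n\log(md))^{O(1)} + Time(P,k)$ is then $2^{O(N)}\cdot\poly(|C|)^{O(1)} + Time(P,k)$, i.e.\ exponential in $|C|$, using $L^2(md)^{O(n)}$ processors, which is a number with an exponential-in-$|C|$ bit length. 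Since parallel time $t$ with $2^{t^{O(1)}}$ processors simulates in space $t^{O(1)}$, and here $t$ is exponential in $|C|$, we would only get exponential space this way, so the real work is to show everything can be carried out in $\PSPACE$ directly.

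The honest approach, therefore, is not to materialise $T$ at all but to run Renegar's (sequential, $\PSPACE$) procedure while answering each of its queries about the coefficients of the $p_\ell$'s and about evaluations of the Boolean connective $P$ by on-the-fly recursion through the circuit $C$. Concretely: Renegar's algorithm accesses its input only through (i) the value of a designated coefficient of a designated $p_\ell$, and (ii) the truth value of $P$ on a given bit-vector of atom-truth-values. For (ii), one evaluates the Boolean tree of $T$ top-down, at each node querying $C$ for the node's label, parent, and children, recursing into the two subtrees, and combining with $\wedge/\vee$; the recursion depth is $2^N$ but each frame needs only $O(\poly(|C|))$ bits (a node index and a partial truth value), so this is $\PSPACE$. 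For (i), a coefficient of an arithmetic subtree-polynomial is itself a sum over monomial-contributions that can be computed by an analogous depth-$2^N$ recursion through $C$, again using polynomial space per frame; one tracks a target monomial exponent vector (an $r$-tuple of integers each of bit length $\le 2^{O(N)}$, hence $\poly$ many bits... wait, $2^{O(N)}$ bits each, times $r$ — still $\poly(|C|)\cdot 2^{O(N)}$ bits, which is exponentially many bits) — so instead one tracks exponents implicitly and recomputes coefficient bits one at a time, keeping only $\poly(|C|)$ bits of the partial accumulator live. The main obstacle is exactly this bookkeeping: showing that the intermediate integers (coefficients, processor indices, counters inside Renegar's routine) never need more than polynomially many bits to be \emph{addressed and streamed}, even though the numbers themselves are exponentially large, and that Renegar's space-bounded simulation (Lemma~\ref{lem:renegar} read as ``parallel polynomial time $=$ $\PSPACE$'', instantiated with these parameters so that $n=\poly$ keeps the exponent $O(n)$ polynomial and the overall space $\poly(|C|)$) composes with these oracle recursions without blowing up the space. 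Once that is established, the composition of a $\PSPACE$ procedure with $\PSPACE$-computable oracle answers stays in $\PSPACE$, giving $\succETRpoly \in \PSPACE$.
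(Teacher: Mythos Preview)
Your proposal contains a genuine and decisive gap: the degree bound. You write that each arithmetic subtree of $T$ ``has at most $2^N$ gates, so each $p_\ell$ has degree at most $2^{2^N}$ and can be written with integer coefficients of bit length $L\le 2^{2^{O(N)}}$''. This is the estimate one would get for an arithmetic \emph{circuit} of size $2^N$ (where repeated squaring is possible), but the object encoded by a $\succETRpoly$ instance is, by definition, a \emph{formula tree}. For a formula with $s$ nodes the degree is bounded by the number of leaves, hence by $s$ itself; likewise, since the only constants are $0,1$ and the only operations are $+$ and $\cdot$, every coefficient is a nonnegative integer bounded by the value of the formula at the all-ones input, which is at most $2^{s}$, so the coefficient bit length is at most $s$. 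With $s=2^N$ this gives $m,d,L\le 2^N$, not $2^{2^N}$.

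With the correct bounds the very first approach you sketched --- expand $T$ and feed it to the parallel version of Renegar's algorithm --- already works, and this is exactly what the paper does. Plugging $m,d,L\le 2^N$ and $n=\poly(N)$ into Lemma~\ref{lem:renegar} yields parallel time $\log(L)\bigl(n\log(md)\bigr)^{O(1)}=\poly(N)$ on $L^2(md)^{O(n)}=2^{\poly(N)}$ processors; the required evaluations of the Boolean part $P$ can be handled by standard parallel tree-evaluation in the same resources. Parallel polynomial time on exponentially many processors is $\PSPACE$, and we are done. The only additional technical point in the paper's proof is that one first converts each arithmetic subtree into an explicit coefficient list (over the polynomial semiring, via parallel tree contraction); since each polynomial has at most $(d+1)^n=2^{\poly(N)}$ monomials with $\le 2^N$-bit coefficients, this also fits in the same parallel budget.

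Your fallback ``honest approach'' --- running Renegar's sequential algorithm with on-the-fly $\PSPACE$ oracle callbacks into $C$ --- is therefore unnecessary, and in any case you did not resolve the space accounting you yourself flagged (the exponent vectors and intermediate integers really can require exponentially many bits, and ``streaming them one bit at a time'' would need a separate argument that Renegar's procedure tolerates such streamed access). Fixing the degree bound removes the need for this detour entirely.
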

\begin{proof}[Proof of Lemma \ref{lem:succETRpoly:4}]
    We construct a parallel polynomial time algorithm for $\succETRpoly$:
    Let $C$ be a circuit computing a function $f: \{0,1\}^N \to \{0,1\}^M$ succinctly encoding an ETR-formula $\varphi$ containing the variables $x_1, \ldots, x_n$.
    Construct $\varphi$ explicitly in parallel using $2^N$ processors by evaluating $f$ at each possible input.
    During this construction also check that $f$ encodes a valid ETR-formula, i.e., that all nodes agree with their parents and children, are connected to the root (unless marked as disabled) and there is no Boolean node as a child of an arithmetic node. 
    All checks can be done locally in parallel.
    The only check here that requires some care is the connectivity to the root, but we can use standard skip list techniques to do this in parallel.

    W.l.o.g.\ we may assume that all comparisons in $\varphi$ are comparisons with $0$ and only use the comparison operators $<$ and $=$. If this is not the case, then we can obtain a formula of this form by using Lemma~\ref{lem:removeneg} and replacing $a \circ b$ by $a - b \circ 0$ for $\circ \in \{<, =\}$.

    Consider the forest obtained by ignoring all Boolean nodes.
    Each root in this forest corresponds to one of the polynomials in $\varphi$.
    Use the tree contraction algorithm described in \cite{DBLP:books/el/leeuwen90/KarpR90} to compute each of the polynomials in parallel.
    We apply the tree contraction over the semi-ring of multivariate polynomials, given as lists of coefficients, and compute additions and multiplications of elements using the respective parallel algorithms.
    We finish the algorithm off by using the parallel algorithm from Lemma~\ref{lem:renegar}.
    Whenever this algorithm needs to evaluate the Boolean part of the formula, we again use the standard parallel expression evaluation algorithm using $k = 2^{O(N)}$ processors each.

    Note that due to the exponential size of $\varphi$ we have $m, d, L \leq 2^N$.
    Additionally in $\succETRpoly$ we have $n \in \poly(N)$ and thus the overall algorithm has a parallel running time of $\poly(N)$ using $2^{\poly(N)}$ processors, implying $\succETRpoly \in \PSPACE$.
\end{proof}

Now Lemmas \ref{lem:succETRpoly:1}, \ref{lem:succETRpoly:2},
\ref{lem:succETRpoly:3}, and \ref{lem:succETRpoly:4} together prove Theorem~\ref{thm:main:pspace:succETRpoly}.

\subsection{Proofs of Section \ref{sec:etr:sum:machine:char}}

Similarly to $\existsR = \realNP$, we can also characterize $\existsR^\Sigma$ using a machine model instead of a closure of a complete problem under polynomial time many one reductions.
For this we define a $\realNP^{\VNP_\IR}$ machine to be an $\realNP$ machine with 
a $\VNP_\IR$ oracle,
where $\VNP_\IR$ denotes the families of polynomials in $\VNP$ over the reals.
Since $\VNP_\IR$ is a set of polynomials, the oracle allows us to evaluate a family 
of polynomials, for example the permanent, at any real input\footnote{We recall the used definitions in Appendix~\ref{sec:appen:VNP}.}.
In order to do so, we add the following operation to a real word-RAM:
\[
    R[i] \gets O(R[W[j]]\dots R[W[k]]).
\]
Recall that for a real word-RAM $R[i]$ denotes a real register, while $W[i]$ denotes a word register.
The values $i, j, k$ are assumed to be constant indices.
This call evaluates the oracle $O$  by choosing the polynomial in the family 
with $W[k] - W[j] + 1$ variables and  evaluating it with the values of $R[W[j]]$ 
up to $R[W[k]]$. The result is stored in $R[i]$.
In case $W[k] > W[j]$ or the oracle family doesn't contain a polynomial with $W[k] - W[j] + 1$ variables\footnote{This can for example happen for the permanent, where the number of variables is always a perfect square.}, the machine rejects.

\begin{proof}[Proof of Lemma~\ref{lemma:sumETRpoly:in:realNP:VNP}]
    This proof proceeds in two stages:
    \begin{enumerate}
        \item \label{first_step} We normalize the $\sumETRpoly$ formula such that all polynomials contained in it have the form $\sum_{Y \in \{0, 1\}^m} p(X, Y)$ where $p$ does not contain any unary sums, then
        \item \label{last_step} translate the formula into a real word-RAM with oracle access.
    \end{enumerate}

    To normalize the formula we introduce variables $Z_i = \frac{1}{2^i}$ and iteratively move the unary sums outwards by using the following replacements:
    \begin{align*}
        \sum_{Y \in \{0,1\}^m} p(X, Y) \cdot \sum_{Y' \in \{0,1\}^{m'}} p'(X, Y') &\mapsto \sum_{Y \in \{0,1\}^m} \sum_{Y' \in \{0,1\}^{m'}} p(X, Y) \cdot p'(X, Y')\\
        \sum_{Y \in \{0,1\}^m} p(X, Y) + \sum_{Y' \in \{0,1\}^{m'}} p'(X, Y') &\mapsto \sum_{Y \in \{0,1\}^m} \sum_{Y' \in \{0,1\}^{m'}} (p(X, Y) \cdot Z_{m'} +  p'(X, Y') \cdot Z_{m})
    \end{align*}
    Note that if we included the correcting terms $\frac{1}{2^i}$ directly instead of via $Z_i$ these replacements would not be of polynomial size after moving all sums.
    A real word-RAM can now simply evaluate the formula and whenever it needs to evaluate some sum $\sum_{Y \in \{0, 1\}^m} p(X, Y)$ it queries the oracle\footnote{technically all queries have to be to polynomials from the same family, but we can query for any of these using queries to the permanent instead}.
    Since $p$ doesn't contain any unary sums anymore, this polynomial is in $\VNP_\IR$.

    To reduce the number of oracle queries we add another two steps before step 1, first turning the formula into a collection of polynomials that has a common zero iff the formula is satisfiable and then turning this collection of polynomials back into a $\sumETRpoly$-formula with one polynomial, thus only requiring a single oracle call.
    This proof follows alongside a similar proof by \cite{schaefer2017fixed} for $\ETR$ using Tseitin's trick, with the difference that we cannot use Tseitin's trick within any of the unary sums and thus the final formula does not have degree $4$.
    We require all negations in the formula to be pushed towards and subsequently absorbed into the comparison nodes and that all comparisons are represented using only $=$ and $<$, but this can be easily achieved.
    We add variables $\zeta_v$ for each node $v$ in the formula with a Boolean result.
    If they have value $0$, then the sub-formula rooted at $v$ is satisfied.
    The converse doesn't have to hold, but since the Boolean part of the formula is monotone this is fine.
    \begin{itemize}
        \item If $v$ is computing $p_1 = p_2$, then we add the polynomial $\xi_v - (p_1 - p_2)$.
        \item If $v$ is computing $p_1 < p_2$, then we add the polynomial $(p_2 - p_1)\xi_v'^2 - (1 - \xi_v)$ and a new variable $\xi_v'$.
        \item If $v$ is a Boolean or of the nodes $u$ and $w$, then we add the polynomial $\xi_v - \xi_u \cdot \zeta_w$.
        \item If $v$ is a Boolean and of the nodes $u$ and $w$, then we add the polynomial $\xi_v - (1 - (1 - \xi_u) \cdot (1 - \xi_w))$.
    \end{itemize}
    If $p_1, \ldots, p_r$ is a set of polynomials, then $\sum_{i=1}^r p_r^2 = 0$ is satisfiable iff $p_1, \ldots, p_r$ have a common zero.
    Continuing with steps \ref{first_step} until \ref{last_step} from here on out will then result in a single oracle query containing the entire formula.
\end{proof}

\begin{proof}[Proof of Lemma~\ref{lemma:sumETRpoly:is:realNP:VNP:IR hard}]
    Since the permanent is $\VNP_\IR$ complete we w.l.o.g.\ restrict ourselves to $\realNP^\per$.
    We use the same construction that \cite{erickson2022smoothing} used to show that $\ETR$ is $\realNP$ hard and extend the $\Execute$ component to allow for the oracle queries.
    In order to do so we quickly repeat the relevant parts of their definitions:
    \begin{itemize}
        \item For each address $i$, variable $\llbracket W(i,t) \rrbracket$ stores the value of word register $W[i]$ at time $t$.
        \item For each address $i$, variable $\llbracket R(i,t) \rrbracket$ stores the value of real register $R[i]$ at time $t$.
        \item Variable $\llbracket pc(t) \rrbracket$ stores the value of the program counter at time $t$.
    \end{itemize}

    The formula $\Execute$ is defined via an $\Update(t, \ell)$ constraint that, given the content of the memory at time $t-1$, enforces the correct values for the memory at time $t$ when executing instruction $\ell$ in the program code.
    If $L$ is the total number of instruction in the program and $T$ the total number of steps to simulate the program for, then
    \[
        \Execute := \bigwedge_{t=1}^T\bigwedge_{\ell=1}^L\left((\llbracket pc(t) \rrbracket = l) \implies \Update(t, \ell)\right)\,.
    \]
    All $\Update(t, \ell)$ that do not influence the program flow always contain a $\Step(t)$ constraint that simply advances the program counter:
    \[
        \Step(t) := (\llbracket pc(t) \rrbracket = \llbracket pc(t-1) \rrbracket +1)
    \]
    We implement $\Update(t, \ell)$ for oracle query instructions $R[i] \gets O(R[W[j]]..R[W[k]])$ as follows:
    \begin{align*}
        \Step(t) \land \bigvee_{a,b} \left((\llbracket W(j, t-1) \rrbracket = a) \land (\llbracket W(j, t-1) \rrbracket = b) \land \llbracket R(i, t) \rrbracket = \Perm(t, a, b)\right)
    \end{align*}
    where the big or is going over all $a, b \in \{0, \ldots, 2^w-1\}$ such that $a \leq b$ and $b-a+1$ is a square number, and $\Perm(t, a, b)$ denotes an expression that evaluates the permanent on $\llbracket R(a, t-1) \rrbracket, \ldots, \llbracket R(b, t-1) \rrbracket$.
    Remains to show how to evaluate the permanent.
    Let $m = \sqrt{b-a+1}$, then we can evaluate the permanent using
    \begin{align*}
        \Perm(t, a, b) := \sum_{M \in \{0,1\}^{m \times m}} \delta(M) \cdot \prod_{\mu=1}^m\sum_{\nu=1}^m M_{\mu\nu} \cdot \llbracket R(a+\mu\cdot m + \nu, t-1) \rrbracket
    \end{align*}
    where $\delta(M)$ is a polynomial that is $1$ iff $M$ is a permutation matrix and $0$ otherwise.
    Given a polynomial $p(x)$ that is $1$ if $x = 1$ and $0$ for $x \in \{0, \ldots, m\} \setminus \{1\}$, we can use the characterization that a Boolean matrix is a permutation matrix iff each row and each column contains exactly one $1$:
    \[
        \delta(M) = \prod_{\mu=1}^m p\left(\sum_{\nu=1}^m M_{\mu\nu}\right) \cdot \prod_{\nu=1}^m p\left(\sum_{\mu=1}^m M_{\mu\nu}\right)
    \]
\end{proof}

\subsection{Proofs of Section \ref{sec:etr:sum:machine:sm}}
We show that $\SATprobpolysumsm$ is contained in $\existsR^\Sigma$,
the first half of the completeness proof of Theorem~\ref{thm:main:sec:Sigma:ETR}.

\begin{lemma}\label{lemm:prob:sum:etr:sum}
$\SATprobpolysumsm \in \existsR^\Sigma$. 
\end{lemma}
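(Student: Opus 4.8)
The plan is to exhibit a polynomial-time many-one reduction $\SATprobpolysumsm \leqp \sumETRpoly$; since $\existsR^\Sigma$ is the closure of $\sumETRpoly$ under such reductions, this yields the lemma. Fix an instance $(\varphi,p)$; let $X_1,\dots,X_n$ be the random variables occurring in $\varphi$ (we may assume the model has exactly these, as marginalising away others preserves satisfaction and does not increase support size), and let $c=\maxvaluecount$, which we may take to be polynomially bounded (otherwise restrict each domain to the values relevant to $\varphi$). First I would put $\varphi$ into a \emph{normal form} in which every primitive is a full-assignment probability $\PP{X_1=x_1\wedge\dots\wedge X_n=x_n}$, each $x_i$ being a concrete value or a dummy variable bound by an enclosing summation operator; a primitive $\PP{\delta}$ not of this shape is rewritten, using additional summation operators, as a combination of such full-assignment probabilities summed over the variables and cases left open by $\delta$. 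Since $n$ is polynomial in $|\varphi|$, this blows up the formula only polynomially. Explicit conditional probabilities $\PP{\delta'\mid\delta}$ are removed beforehand by introducing a fresh variable $r$ with constraints $r\cdot\PP{\delta}=\PP{\delta\wedge\delta'}$ and $0<\PP{\delta}$ and then normalising numerator and denominator.

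Next I would encode a distribution of polynomial support directly as $p$ ``atoms''. Introduce $\sumETRpoly$ variables $q_1,\dots,q_p$ (masses) together with, for $1\le k\le p$, $1\le i\le n$ and $0\le v<c$, variables $b_{k,i,v}$, and impose $q_k\ge 0$, $\sum_{k=1}^p q_k=1$, $b_{k,i,v}(b_{k,i,v}-1)=0$ and $\sum_{v=0}^{c-1} b_{k,i,v}=1$; all of these are sums over polynomially many summands, so no summation operator is needed. Atom $k$ then places mass $q_k$ on the assignment it specifies through its indicators. Using exactly $p$ atoms makes the induced distribution have support of size at most $p$ automatically, and every small model arises this way (pad with zero-mass atoms), so the support requirement is obtained for free rather than enforced.

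Then I would translate $\varphi$ recursively: Boolean connectives, comparisons $\le$, and $+,-,\cdot$ on terms map to themselves, while a normal-form primitive $\PP{X_1=x_1\wedge\dots\wedge X_n=x_n}$ maps to the explicit sum $\sum_{k=1}^p q_k\cdot\prod_{i=1}^n\beta_{k,i}$, where $\beta_{k,i}$ equals $b_{k,i,x_i}$ when $x_i$ is a concrete value and is otherwise a small polynomial in the $b$'s and the enclosing summation variable selecting the indicator that matches the current value of the dummy $x_i$. The decisive point is the language-level operator $\sum_x\bt$, which ranges over $\{0,\dots,c-1\}$ and whose naive expansion becomes exponential once such operators are nested to depth $\Omega(|\varphi|)$: it is realised by $\lceil\log c\rceil$ nested binary operators $\sum_{e=0}^1$ of $\sumETRpoly$ multiplied by an indicator discarding the encodings $\ge c$, so that the summation variable ranges faithfully over $\{0,\dots,c-1\}$ and its bits drive the selector above (for binary domains a single $\sum_{e=0}^1$ suffices). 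As the nesting depth is polynomial, the resulting $\sumETRpoly$ instance — the existential closure, over the $q$'s, $b$'s and $r$'s, of the setup constraints conjoined with the translation of $\varphi$ — has polynomial size and is computed in polynomial time.

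Finally I would verify correctness by induction on the structure of terms and formulas: under any assignment to the $q$'s and $b$'s satisfying the setup constraints, each translated term evaluates to the value of the original term in the induced model, the base case being that for a full assignment $\delta$ the sum $\sum_k q_k\prod_i\beta_{k,i}$ is precisely the mass the induced distribution puts on that entry, and the summation case following because the binary sums enumerate exactly $\{0,\dots,c-1\}$. Hence the $\sumETRpoly$ instance is true iff $(\varphi,p)$ is a yes-instance, giving $\SATprobpolysumsm\in\existsR^\Sigma$. The step I expect to be the main obstacle is precisely this faithful, polynomial-size translation of the (possibly deeply nested) language summation operators into the binary summation operator of $\sumETRpoly$ while correctly routing the bound dummy variables into the atom-indicator selectors; this is where the exponential sum of $\sumETRpoly$ is genuinely used and where the normal form earns its keep. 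A secondary subtlety is the handling of conditional probabilities, since clearing denominators is not available here (it would blow up exponentially in the presence of exponential sums) and must be simulated by quotient variables with the side condition $0<\PP{\delta}$.
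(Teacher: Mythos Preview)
Your proposal is correct and follows essentially the same approach as the paper: normalise every primitive to a full-assignment probability $\PP{X_1=x_1,\dots,X_n=x_n}$ via an arithmetised indicator for $\delta$, encode a support-$p$ distribution by $p$ real mass variables together with polynomially many location/selector variables, and then translate each full-assignment primitive by the expression $\sum_{k=1}^p (\text{mass}_k)\cdot(\text{match}_k)$. The paper uses a direct value encoding $e_{i,j}\in D$ with the match polynomial $\prod_j(1-(f_j-e_{i,j})^2)$, whereas you use a one-hot encoding $b_{k,i,v}$; these are interchangeable. You are also more explicit than the paper about two points it glosses over: converting the language operator $\sum_x$ (range $\{0,\dots,c-1\}$) into nested binary $\sum_{e=0}^1$ operators, and handling conditional primitives via quotient variables. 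One phrasing to tighten: ``summed over the variables and cases left open by $\delta$'' could be read as a case split on the Boolean structure of $\delta$, which would blow up; what you need (and what the paper does) is to sum over \emph{all} $n$ variables and multiply by a single polynomial-size arithmetised indicator $A_\delta$.
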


\begin{proof} 
    Given a $\SATprobpolysumsm$ instance $\varphi$, we first want to ensure
    that all primitives are of the form $\PP{X_1 = \xi_1,\dots,X_N = \xi_n}$,
    that is, every variable occurs in the primitive
    and everything is connected by conjunctions.
    This is achieved inductively. 
    So let $\PP{\delta}$ be given. We define the arithmetization
    $A_\delta(\xi_1,\dots,\xi_n)$ of $\delta$ to be $1$,
    if $\PP{X_1 = \xi_1,\dots,X_N = \xi_n}$
    contributes to $\PP{\delta}$, and $0$ otherwise. 
    $A_\delta$ can be easily derived from $\delta$ by following the
    logical structure of $\delta$. We first present the construction
    for the Boolean domain for simplicity.
    \begin{itemize}
        \item In the base case $\delta$ equals $X_i = \xi$. The corresponding arithmetization
        is $1-(X_i - \xi)^2$. 
        \item If $\delta = \delta_1 \wedge \delta_2$, then 
        $A_\delta = A_{\delta_1} A_{\delta_2}$.
        \item If $\delta = \neg \delta_1$, then $A_\delta = 1 - A_{\delta_1}$
    \end{itemize}
    Since we implement $A_\delta$
    as a polynomial, it is also easy to implement the starting 
    predicates ``$X_i = \xi$'' over larger domains $D$.
    It is simply a polynomial of degree $|D|$ that is $1$ on $\xi$
    and $0$ everywhere else.
    Then we have 
    \[
      \PP{\delta} = \sum_{\xi_1} \dots \sum_{\xi_n} A_\delta(\xi_1,\dots,\xi_n)
        \PP{X_1 = \xi_1,\dots,X_n = \xi_n}
    \]
    Note that the righthand side is expressible as a 
    $\SATprobpolysumsm$ instance.
    
    For the reduction, we now want to replace the primitives 
    $\PP{X_1 = \xi_1,\dots,X_N = \xi_n}$ by variables. 
    There are exponentially many primitives,
    but we only have polynomially many variables.
    However, we are only interested in small models,
    so we only need to simulate exponentially many variables
    with small support, say of size $p$.
    The variables $X_1,\dots,X_p$, will store the values.
    Selector variables $e_{i,j}$, $1 \le i \le p$, $1 \le j \le n$,
    will assign them to places in our big array.
    We quantify as follows:
    \begin{align*}
      \exists X_1, \dots, X_p \exists e_{1,1},\dots,e_{p,n} 
      \bigwedge_{i,j} \prod_{d \in D} (e_{i,j} - d) = 0   \dots
    \end{align*}
    This constrains the $e_{i,j}$ to have values in $D$.
    Then the expression 
    \[
       X(f_1,\dots,f_n) = \sum_{i = 1}^p X_i \prod_{j = 1}^n (1 - (f_j - e_{i,j})^2)
    \]
    with $f_{i} \in D$ simulate exponentially many variables with support size $p$.  
    We can replace the $\PP{X_1 = \xi_1,\dots,X_n = \xi_n}$
    by $X(\xi_1,\dots,\xi_n)$ and get that 
    $\SATprobpolysumsm$ is in $\existsR^\Sigma$.
    
    For larger domains $D$, we just have to replace the expressions
    $1-(X_i - \xi)^2$ or $(f_j - e_{i,j})^2$ by a polynomial, that is exactly $1$
    on the element $\xi$ or $e_{i,j}$ and $0$ on all other elements of $D$.
    Such a polynomial can be easily found using Lagrange interpolation.
\end{proof}

The second half of the completeness result for $\SATprobpolysumsm$
is given in the following lemma.

\begin{lemma}\label{lemma:sumETRpoly:leqp:SATprobpolysumsm}
$\sumETRpoly \leqp \SATprobpolysumsm$.
\end{lemma}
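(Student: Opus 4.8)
The plan is to give a polynomial-time many-one reduction from a general $\sumETRpoly$ instance $\phi$ over variables $x_1,\dots,x_n$ (with $n$, the total degree of $\phi$, and the coefficient bit lengths all bounded by $\poly(|\phi|)$) to a pair $(\varphi,p)$, where $\varphi\in\Lprobpolysum$ and $p$ is a unary support bound. As in Section~\ref{sec:succR:compl:satsumpoly}, every real variable of $\phi$ will be encoded as a scaled difference of two entries of a probability table; the point is that the small-support promise of $\SATprobpolysumsm$ is exactly what allows us to name only polynomially many of these entries.

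First I would bound the solutions of $\phi$. Expanding the unary sums does not increase the degree and enlarges the coefficients only polynomially, so $\phi$ describes a Boolean combination of polynomial (in)equalities in $n\le\poly(|\phi|)$ variables of degree $\le\poly(|\phi|)$ with integer coefficients of bit length $\le\poly(|\phi|)$; hence Theorem~\ref{thm:grigoriev} supplies $T=2^{2^{\poly(|\phi|)}}$ such that, whenever $\phi$ is satisfiable, it has a solution with every entry of absolute value $\le T$. Next, mirroring Lemma~\ref{lem:sumvietr1:reduction}, I would introduce a Tseitin chain of fresh variables $t_1,\dots,t_k$ with constraints $t_1+t_1=\PP{\top}$ and $t_{\ell+1}=t_\ell\cdot t_\ell$, forcing $t_k=2^{-2^{k-1}}$. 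Because each squaring doubles the exponent, a chain of length $k=\poly(|\phi|)$ already yields a strictly positive $d:=t_k$ with $d\le 1/(10nT)$, so we never need the variable-indexing mechanism used in Lemma~\ref{lem:sumvietr1:reduction}. Every $x_i$ is then replaced throughout $\phi$ by $(a_i-b_i)/d$ for two fresh distinguished table entries $a_i,b_i$, and each (in)equality is multiplied by a suitable power of $d$ (namely $d^{e}$ for the degree $e$, written as $e$ explicit copies of $t_k$) to clear denominators; this keeps inequality directions since $d>0$. The unary sums $\sum_{x_j=0}^1$ of $\phi$ translate directly into the operator $\sum_x$ of $\Lprobpolysum$ over the binary domain $\maxvaluecount=2$, and a summation variable occurring in arithmetic position is handled by a routine normalization, e.g.\ by adding a fresh random variable $Y_j$ with the constraint $\PP{Y_j=1}=\PP{\top}$ and writing the dummy's value as the term $\PP{Y_j=x_j}$, which indeed equals that value on $\{0,1\}$.

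The probabilistic instance then uses binary random variables $X_1,\dots,X_N$ with $N=\lceil\log_2(2n+k+1)\rceil$ (plus the auxiliary $Y_j$), so the table has at least $2n+k+1$ entries; distinct entries are reserved for $a_1,b_1,\dots,a_n,b_n$, for $t_1,\dots,t_k$, and one spare entry, and we put $p:=2n+k+1$. For correctness, any distribution of support $\le p$ satisfying $\varphi$ satisfies $t_1=\tfrac12$, $t_{\ell+1}=t_\ell^2$ (hence $d=t_k>0$) and $a_i-b_i=d\,x_i$, so $x_i:=(a_i-b_i)/d$ solves the denominator-cleared system, i.e.\ $\phi$. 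Conversely, from a solution of $\phi$ with entries bounded by $T$ set $a_i:=\max(d\,x_i,0)$, $b_i:=\max(-d\,x_i,0)$ (both in $[0,1]$ since $d\,|x_i|\le dT<1$), $t_\ell:=2^{-2^{\ell-1}}$, and let the spare entry carry $1-\bigl(\sum_i(a_i+b_i)+\sum_\ell t_\ell\bigr)$; this is nonnegative because $\sum_i(a_i+b_i)\le dnT\le\tfrac1{10}$ and $\sum_{\ell\ge1}2^{-2^{\ell-1}}<\tfrac9{10}$, and the resulting distribution has support $\le p$ and satisfies $\varphi$. Since everything is computable in polynomial time, $\sumETRpoly\leqp\SATprobpolysumsm$; together with Lemma~\ref{lemm:prob:sum:etr:sum} and the definition of $\existsR^\Sigma$ as the closure of $\sumETRpoly$ under $\leqp$, this proves Theorem~\ref{thm:main:sec:Sigma:ETR}.

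The main obstacle is reconciling two opposing requirements: probability values must lie in $[0,1]$, which forces the variables to be scaled down by a factor as tiny as the doubly-exponential Grigoriev–Vorobjov bound $T$, whereas $\sumETRpoly$ — unlike $\sumviETR$ — has no variable indexing, so this scaling constant must be built from only polynomially many variables. The reduction works precisely because a polynomial-length chain of squarings already produces a positive value below $1/(10nT)$, and because the small-support promise means only polynomially many table entries are ever referenced. A secondary, largely bookkeeping, point is to carry out the denominator clearing and the elimination of arithmetically occurring summation variables through nested sums without any exponential blow-up.
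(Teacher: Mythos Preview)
Your argument is correct and follows the same overall strategy as the paper: bound the solutions via Theorem~\ref{thm:grigoriev}, build a tiny positive scaling constant with a polynomial-length Tseitin chain (which here needs no variable indexing because $\sumETRpoly$ has only polynomially many variables), encode each real variable as a difference of two reserved probability-table entries, and convert arithmetically occurring summation dummies into probability lookups.

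The paper organizes the same ideas slightly differently. It first passes through an intermediate problem $\sumETRpoly_{1/2}$ (the Tseitin scaling is carried out entirely in $\sumETRpoly$-land, Lemma~\ref{lem:sumETR_eight_hardness}), so that by the time one reaches the probabilistic encoding the solutions already have $\ell_1$ norm $\le\tfrac12$; consequently no $t_\ell$ entries live inside the probability table and the support bound is only $n+2$ rather than your $2n+k+1$. The paper also uses ternary random variables and a \emph{single} auxiliary variable $E$ to serve all summation dummies simultaneously (the term $2\,\PP{X_1=0,\dots,X_n=0,E=e_i}$ plays the role of your $\PP{Y_j=x_j}$), which is tidier than introducing one $Y_j$ per dummy. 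These are packaging differences; both routes are valid, and yours has the minor advantage of being a single direct reduction rather than a two-step composition.
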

Like before, let $\sumETRpoly_{1/2}$ denote the following problem:
Instances are the same as for $\sumETRpoly$.
However, we ask the question whether there is a solution
with the sum of the absolute values ($\ell_1$ norm)
being bounded by $1/2$. We prove that the general 
$\sumETRpoly$ can be reduced to this problem. 
\cite{abrahamsen2018art} prove similar results for
standard $\ETR$.

Using this intermediate problem we prove $\sumETRpoly \leqp \SATprobpolysumsm$, i.e.\ Lemma~\ref{lemma:sumETRpoly:leqp:SATprobpolysumsm}, in two steps.

\begin{lemma}
\label{lem:sumETR_eight_hardness}
$\sumETRpoly \leqp \sumETRpoly_{1/2}$.
\end{lemma}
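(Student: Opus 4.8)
The plan is to follow the blueprint of the proof of Lemma~\ref{lem:sumvietr1:reduction}, but to exploit the fact that a $\sumETRpoly$ instance has only \emph{polynomially} many variables, which makes variable indexing unnecessary and lets a squaring chain of polynomial length already manufacture the tiny scaling constant we need. Concretely, given an instance $\phi$ of $\sumETRpoly$ of bit length $S$ with quantifier-free matrix $\psi$, I will (i) use Theorem~\ref{thm:grigoriev} to bound the size of any possible witness of $\phi$, (ii) build by a Tseitin-style gadget a positive constant $d$ small enough to scale such a witness into the unit $\ell_1$ ball, and (iii) output the instance obtained from $\psi$ by replacing each variable $x_i$ with $x_i/d$, clearing denominators, and conjoining the gadget.

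For step (i), first record the parameters. The number $n$ of top-level existential variables is at most $S$. Conceptually expanding every unary sum $\sum_{x_j=0}^1$ into $t[0/x_j]+t[1/x_j]$ produces honest polynomials; this never increases the total degree (a unary sum only substitutes the constants $0,1$), so the degree stays $\le S$, and, by a routine induction over $\psi$ exactly as in Lemma~\ref{lem:sumvietr1:reduction} (each unary sum at most doubles coefficients, each product adds coefficient bit lengths), the coefficients can be taken to be integers of bit length $\poly(S)$. Theorem~\ref{thm:grigoriev}, applied to the polynomials defining the solution set of $\phi$, then gives: if $\phi$ is true, it has a witness $(a_0,\dots,a_n)$ with $|a_i|\le T$ where $T:=2^{L\,d^{\,cn}}$ and $L,d,n=\poly(S)$. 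Since $d^{\,cn}=\poly(S)^{c\cdot\poly(S)}=2^{\poly(S)}$, we get $T=2^{2^{\poly(S)}}$, only \emph{doubly} exponential --- the crucial point distinguishing $\sumETRpoly$ from $\sumviETR$, whose exponentially many variables push the bound to a triple exponential.

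For step (ii), introduce fresh variables $t_1,\dots,t_k$ and conjoin to $\psi$ the equalities $8\cdot t_1 = 1$ and $t_{i+1}=t_i\cdot t_i$ for $1\le i\le k-1$ (only $=$, $+$, $\cdot$ are used, so no subtraction is needed, and this block has size $O(k)$). Every real assignment satisfying these forces $t_i=(1/8)^{2^{i-1}}$, so $d:=t_k=2^{-3\cdot 2^{k-1}}>0$ and $\sum_i t_i\le 2t_1=1/4$. Because $\log(4nT)=2+\log n+L\,d^{\,cn}\le 2^{c'S}$ for a suitable explicit constant $c'$, choosing $k=O(S)$ large enough makes $3\cdot 2^{k-1}\ge\log(4nT)$, i.e.\ $0<d\le 1/(4nT)$. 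For step (iii), let $\varphi$ be the sentence obtained by replacing in $\psi$ every \emph{free} occurrence of each $x_i$ (occurrences bound by a surrounding $\sum_{x_i=0}^1$ are untouched) by $x_i/t_k$, multiplying each (in)equality through by suitable powers of $t_k$ (of exponent $\le S$, which commutes with the summation operators since $t_k$ is constant) to clear all denominators --- sign-preserving since $t_k>0$ --- and conjoining the gadget of step (ii); this is computable in polynomial time. Since $t_k$ is independent of every summation variable, evaluating the scaled matrix at an assignment whose $t_i$ take their forced values coincides with evaluating $\psi$ at $(x_0/d,\dots,x_n/d)$; hence $\varphi$ is satisfiable iff $\phi$ is true. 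Moreover, if $\phi$ is true, picking a witness with $|a_i|\le T$ makes $(a_0d,\dots,a_nd,t_1,\dots,t_k)$ a witness of $\varphi$ with $\ell_1$ norm $\le ndT+\sum_i t_i\le 1/4+1/4=1/2$; conversely any witness of $\varphi$ (in particular any $\ell_1$-small one) yields, on dividing by $d\neq 0$, a witness of $\phi$. Thus $\phi\in\sumETRpoly \iff \varphi\in\sumETRpoly_{1/2}$.

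The step I expect to require the most care is the parameter bookkeeping in step (i): verifying that expanding the unary sums keeps both the degree and the coefficient bit length polynomial in $S$ (so that Theorem~\ref{thm:grigoriev} yields a merely doubly-exponential $T$), and then confirming that a squaring chain of \emph{polynomial} length $k=O(S)$ indeed drives $d$ below $1/(4nT)$. The remainder is a direct adaptation of the arguments in Lemma~\ref{lem:sumvietr1:reduction} and Lemma~\ref{lem:satprobpolysum_succR_hard}.
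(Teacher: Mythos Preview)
Your proposal is correct and follows exactly the approach the paper intends (the paper simply points back to Lemma~\ref{lem:sumvietr1:reduction}); the only real difference from that lemma, which you identify correctly, is that with polynomially many variables the Grigoriev--Vorobjov bound is merely doubly exponential, so an explicit squaring chain of polynomial length suffices and no variable indexing is needed. One small quantitative slip: from $n,d,L=O(S)$ you get $\log T\le L\,d^{cn}\le O(S)\cdot S^{cS}=2^{O(S\log S)}$, not $2^{c'S}$, so you should take $k=O(S\log S)$ rather than $k=O(S)$; this does not affect correctness since the chain is still of polynomial length.
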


The proof of the first lemma is almost identical to 
the proof of Lemma~\ref{lem:sumvietr1:reduction} and therefore omitted.

\begin{lemma}\label{lemma:sumETRpoly:1:8:leqp:SATprobpolysumsm}
$\sumETRpoly_{1/2} \leqp \SATprobpolysumsm$.
\end{lemma}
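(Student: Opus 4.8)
The plan is to adapt the reduction behind Lemma~\ref{lem:satprobpolysum_succR_hard} (the analogous statement $\sumviETR_1\leqp\SATprobpolysum$) to the non-succinct, small-model regime. Three things change: there are now only polynomially many variables, so a polynomial-size model is available for free; the summation operators $\sum_{x_i=0}^1$ of $\sumETRpoly$ are used arithmetically rather than as variable indices, so they need a dedicated gadget; and the target norm bound is $\tfrac12$, which has to be enforced explicitly. First I would $\alpha$-rename the input so that every summation gate reads $\sum_{y=0}^1 t'$ for a fresh variable $y$ that occurs only inside the scope of that gate and is disjoint from the existential variables $x_1,\dots,x_n$; this is a polynomial-time, semantics-preserving step. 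Let $k$ be the number of summation gates.

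For the target instance I would take all random variables binary (so that the language's $\sum_x$ ranges over exactly $\{0,1\}$, matching $\sum_{y=0}^1$): two variables $X_0,X_0'$ encoding a sign $+/-/0$ via the bit patterns $10,01,00$; further variables $R_1,\dots,R_\ell$ with $\ell=\lceil\log_2 n\rceil$ indexing the $n$ existential variables; and one extra variable $B$ for the summation gadget. Then $\psi\in\Lprobpolysum$ is obtained from $\varphi$ by: replacing the $i$-th existential variable by $\PP{X_0 = 1 \land X_0' = 0 \land R = \mathrm{bin}(i)} - \PP{X_0 = 0 \land X_0' = 1 \land R = \mathrm{bin}(i)}$, where $R = \mathrm{bin}(i)$ abbreviates the conjunction fixing $R_1,\dots,R_\ell$ to the bits of $i$; replacing each occurrence of a summation-bound variable $y$ by $\PP{B = \eta}$, where $\eta$ is the dummy of the $\sum_\eta$ operator onto which $\sum_{y=0}^1$ is mapped (all these operators read the single variable $B$); keeping the Boolean and comparison structure of $\varphi$ (rendering $<$ and $=$ through $\le$ and negation, and integer constants through $\PP{\top}$, as in Section~\ref{sec:preliminaries}); and conjoining the two constraints $\PP{B = 1} = 1$ and $\PP{X_0 = 0 \land X_0' = 0} \ge \tfrac12$. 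Finally set $p$ equal to the (polynomial) size $2^{\ell+3}$ of the joint table, written in unary. The point of the $B$-gadget is that any model of $\psi$ has $\PP{B=1}=1$ and hence $\PP{B=0}=0$, so $\PP{B=\eta}$ has value $\eta$ for $\eta\in\{0,1\}$; therefore $\sum_\eta(\cdots\PP{B=\eta}\cdots)$ evaluates exactly like $\sum_{y=0}^1(\cdots y\cdots)$ — here it is essential that the domain has size $2$, so that the additive parts of a summand that do not contain $y$ are counted twice and not more — and nested sums compose because an inner dummy never leaves its basic term.

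Correctness splits as usual. Given an assignment $x$ with $\sum_i|x_i|\le\tfrac12$ satisfying $\varphi$, I would define $P$ by putting mass $\max(x_i,0)$ on the entry with $X_0 = 1, X_0' = 0, R = \mathrm{bin}(i), B = 1$; mass $\max(-x_i,0)$ on the entry with $X_0 = 0, X_0' = 1, R = \mathrm{bin}(i), B = 1$; and the remaining mass $1-\sum_i|x_i|\ge\tfrac12$ on one entry with $X_0 = 0, X_0' = 0, B = 1$. This is a probability distribution of support $\le 2n+1\le p$, it satisfies both added constraints, the decoded value of the $i$-th variable is $x_i$, so $P\models\psi$. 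Conversely, from any model $P$ of $\psi$ set $x_i:=\PP{X_0 = 1 \land X_0' = 0 \land R = \mathrm{bin}(i)}-\PP{X_0 = 0 \land X_0' = 1 \land R = \mathrm{bin}(i)}$; then $\sum_i|x_i|\le\PP{X_0 = 1 \land X_0' = 0}+\PP{X_0 = 0 \land X_0' = 1}\le 1-\PP{X_0 = 0 \land X_0' = 0}\le\tfrac12$, and $\varphi$ holds under $x$ because $B$ is deterministic and hence, by structural induction, every translated term of $\psi$ evaluates to the value of the corresponding term of $\varphi$ under $x$. The reduction is clearly polynomial time. Combined with Lemma~\ref{lem:sumETR_eight_hardness} this gives Lemma~\ref{lemma:sumETRpoly:leqp:SATprobpolysumsm}, and together with Lemma~\ref{lemm:prob:sum:etr:sum} it completes the proof of Theorem~\ref{thm:main:sec:Sigma:ETR}.

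The step I expect to be the main obstacle is exactly this faithful translation of the summation operators. In $\sumviETR$ one sums over index variables, which corresponds cleanly to marginalizing over a random variable; here the summation variables occur inside polynomials, whereas a summation dummy of $\Tpolysum$ may appear only as a value-index inside a basic term $\PP{\cdot}$. Making the deterministic-$B$ gadget interact correctly with the $\alpha$-renaming, with nesting, and with the binary-domain restriction — and checking that forcing $B$ deterministic is compatible with leaving at least half of the probability mass on the neutral sign bucket — is where the care lies; the small-model requirement itself is essentially free, since every random variable has a polynomial-size domain and thus the whole table is of polynomial size.
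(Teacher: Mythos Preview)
Your proof is correct and follows the same core idea as the paper's: encode each real variable $x_i$ as a difference of two probabilities, force one random variable to be deterministic so that a basic term $\PP{B=\eta}$ plays the role of the $\{0,1\}$-valued summation index, and reserve at least half of the total mass to certify the $\ell_1\le\tfrac12$ bound.

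The implementation differs in a way worth noting. The paper uses $n+1$ \emph{ternary} variables in a one-hot encoding (the event $X_i\neq 0$ flags which real variable is stored), enforces $q_0=0$, $q_1=\tfrac12$, replaces the summation variable by $2q_e$, and then argues that the resulting model has support at most $n+2$. You instead use $O(\log n)$ \emph{binary} variables with a binary index $R=\mathrm{bin}(i)$, so the whole joint table already has size $2^{\ell+3}=O(n)$ and the small-model parameter $p$ becomes vacuous --- no support-counting is needed. A side benefit is that with a binary domain the language's $\sum_\eta$ ranges over exactly two values, so the match with $\sum_{y=0}^1$ is immediate and you do not have to worry about a spurious third summand; the paper's ternary choice makes this point more delicate.
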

\begin{proof}
    Assume that our $\sumETRpoly_{1/2}$ instance $\varphi$ uses
    variables $x_1,\dots,x_n$ and summation variables
    $e_1,\dots,e_m$.
    
    In the $\SATprobpolysumsm$ instance, we will have ternary random variables $X_1,\dots,X_n$ and $E$.
    $x_i$ will be represented by the term $p_i := \PP{X_1 = 0, \dots, X_i = 1, \dots X_n = 0, E = 0} - \PP{X_1 = 0, \dots, X_i = -1, \dots X_n = 0, E = 0}$.
    $e_i$ will be represented by the term $q_{e} :=\PP{X_1 = 0, \dots,X_n = 0, E = e}$ for $e \in \{0, 1\}$.
    The constants $m$ and $n$ can be easily constructed from $\PP{\top}$, which is $1$.
    
    First, we enforce $q_{0} = 0$ and $q_{1} = 1/2$.
    (We use fractions for convenience. We can always multiply by the common denominator.)
    This places total mass $1/2$ on one entry in the distribution.
    Now a summation $\sum_{e_i = 0}^1$ in $\varphi$ can be simulated by the summation operator $\sum_{e = 0}^1$ of $\SATprobpolysumsm$.
    Inside the summation, the term $2 \cdot q_{e}$ simulates the term $e_i$ in the summation of $\varphi$.
    
    The variables $x_i$ will be simulated by the expressions $p_i$.
    By choosing $$\PP{X_1 = 0, \dots, X_i = 1, \dots X_n = 0, E = 0} = \min(0, x_i)$$ and $$\PP{X_1 = 0, \dots, X_i = -1, \dots X_n = 0, E = 0} = -\max(0, x_i),$$ combined with the fact that $\varphi \in \sumETRpoly_{1/2}$ always has a solution with $\ell_1$ norm bounded by $1/2$, at most $1/2$ of the probability mass are placed on the $p_i$ in total.
    Vice-versa to express any solution with $\ell_1$ norm bigger than $1/2$ would require more than $1/2$ probability mass being put on the $p_i$, in turn making it not a valid probability distribution when added to the $1/2$ probability mass on $q_1$.
    
    Now we can simply translates $\varphi$ step by step into a $\SATprobpolysumsm$ instance using the transformation above.
    If the new instance is satisfiable, then it has a model with support $\le n+2$.
\end{proof}

\section{A brief overview of algebraic complexity classes}
\label{sec:appen:alg:classes}

\subsection{\texorpdfstring{$\VP$}{VP}}\label{sec:appen:VP}
Let $X = (X_1,X_2,\dots)$ be an infinite family of indeterminates over some field $F$.

\begin{definition} \label{def:mb:p-fam}
A sequence of polynomials $(f_n) \in F[X]$ is called a \emph{p-family}
if for all $n$,
\begin{enumerate}
\item $f_n \in F[X_1,\dots,X_{p(n)}]$ for some polynomially bounded function $p$ and
\item $\deg f_n \le q(n)$ for some polynomially bounded function $q$. 
\end{enumerate}
\end{definition}

\begin{definition} \label{def:mb:vp}
The class $\VP$ consists of all p-families $(f_n)$ such that 
$L(f_n)$ is polynomially bounded, where $L(f_n)$  denotes circuit complexity of $f_n$ defined 
as the size of smallest arithmetic circuit computing  $f_n$.
\end{definition}

Let $\det_n = \sum_{\pi \in \aS_n} \sgn(\pi) X_{1,\pi(1)} \dots X_{n,\pi(n)}$. 
It is well-known that 
$\det_n$ has polynomial-sized arithmetic circuits \citep{csanky1976fast}. Therefore, $(\det_n) \in \VP$.

In the above example, the indeterminates have two indices instead of one. Of course we could
write $\det_n$ as a polynomial in $X_1,X_2,\dots$ by using a bijection between 
$\Nset^2$ and $\Nset$.
However, we prefer the natural naming of the variables (and will do so with other polynomials).

Let $f \in F[X]$ be a polynomial and $s: X \to F[X]$ be a mapping that maps indeterminates
to polynomials. $s$ can be extended in a unique way to an algebra endomorphism $F[X] \to F[X]$. 
We call $s$ a \emph{substitution}. (Think of the variables replaced by polynomials.)

\begin{definition}
\begin{enumerate}
\item Let $f,g \in F[X]$. $f$ is called a \emph{projection} of $g$ if there is a substitution
$r: X \to X \cup F$ such that $f = r(g)$. We write $f \lep g$ in this case.
(Since $g$ is a polynomial, it only depends on a finite number of indeterminates. 
Therefore, we only need to specify a finite part of $r$.)
\item Let $(f_n)$ and $(g_n)$ be p-families. $(f_n)$ is a \emph{p-projection} of $(g_n)$ if
there is a polynomially bounded function $q: \IN \to \IN$ such that $f_n \lep g_{q(n)}$.
We write $(f_n) \lep (g_n)$.
\end{enumerate}
\end{definition}

Projections are very simple reductions. Therefore, we can also use them to define hardness for
``small'' complexity classes like $\VP$.



\begin{definition} \label{def:vp:compl}
\begin{enumerate}
\item A p-family $(f_n)$ is called \emph{$\VP$-hard} (under p-projections) if
$(g_n) \lep (f_n)$ for all $(g_n) \in \VP$. 
\item It is called \emph{$\VP$-complete} if in addition $(f_n) \in \VP$.
\end{enumerate}
\end{definition}






We do not know whether the determinant (or the iterated matrix multiplication polynomial)
is $\VP$-complete. However, there are generic problems that are $\VP$-complete.
But also more natural complete problems are known, see 
\cite{durand2014homomorphism}.

\subsection{\texorpdfstring{$\VNP$}{VNP}}\label{sec:appen:VNP}


\begin{definition}
A p-family $(f_n)$ is in $\VNP$, if there 
are polynomially bounded functions $p$ and $q$ and a sequence $(g_n) \in \VP$ of polynomials
$g_n \in F[X_1,\dots,X_{p(n)},Y_1,\dots,Y_{q(n)}]$ such that
\[
   f_n = \sum_{e \in \{0,1\}^{q(n)}} g_n(X_1,\dots,X_{p(n)},e_1,\dots,e_{q(n)}).
\]
\item A family of polynomials $f_n$ is in $\VNP_e$
if in the definition of $\VNP$, the family $(g_n)$ is in $\VP_e$.
\end{definition}

You can think of the $X$-variables representing the input and the $Y$-variables
the witness. With this interpretation, $\VNP$ is more like $\sharpP$.

The hardness and completeness for $\VNP$ are defined as in Definition~\ref{def:vp:compl}.
The permanent polynomial 
\[
   \per_n = \sum_{\sigma \in \aS_n} X_{1,\sigma(1)} \cdots X_{n,\sigma(n)}
\]
is complete for $\VNP$. 


\end{document}